\documentclass[11pt]{article}

\usepackage[a4paper,margin=1in]{geometry}
\usepackage{amsmath,amssymb,amsthm,mathtools,bm,mathrsfs}
\usepackage{graphicx}
\usepackage{microtype}
\usepackage{hyperref}
\hypersetup{hidelinks}
\usepackage{enumitem}
\usepackage{booktabs}
\numberwithin{equation}{section}

\newcommand{\R}{\mathbb R}
\newcommand{\C}{\mathbb C}
\newcommand{\Sph}{\mathbb S}
\newcommand{\dd}{\,\mathrm d}
\newcommand{\Dom}{\operatorname{Dom}}
\newcommand{\spec}{\operatorname{spec}}
\newcommand{\specess}{\operatorname{spec}_{\mathrm{ess}}}

\newcommand{\e}{\mathrm e}

\newtheorem{theorem}{Theorem}[section]
\newtheorem{proposition}[theorem]{Proposition}
\newtheorem{lemma}[theorem]{Lemma}

\theoremstyle{definition}

\title{\textbf{Bargmann--Fock Representation and Global Estimates\\
for the Linearized Hard-Sphere Boltzmann Operator}}

\author{Ilya Karlin\thanks{ikarlin@ethz.ch}\\
Department of Mechanical and Process Engineering\\
ETH Zurich, CH-8092 Zurich, Switzerland}

\date{\today}

\begin{document}
\maketitle

\begin{abstract}
We consider the linearized Boltzmann collision operator
for three-dimensional hard spheres and ask a representation-theoretic
question: can one retain the complete collision operator while reorganizing it
so that its angular and radial structures become explicit before any Sonine
truncation is imposed?

Fock representation is used here as the canonical bosonic-oscillator
encoding of the Gaussian-weighted Hermite expansion.  It introduces no quantum
dynamics and no additional kinetic variables.  Its usefulness is algebraic:
velocity translations, rotations, and polynomial moments acquire simple
operator realizations.  Starting from the Carleman representation, the
hard-sphere collision integral becomes a superposition of two orthogonal
translations.  Passing to the Hermite/Fock and Bargmann realizations turns
those translations into exponentials and produces a single analytic
coherent-state kernel generating all Hermite collision brackets.  Rotational
covariance then separates the problem into fixed angular-momentum sectors,
and each fixed $(\ell,m)$ sector reduces exactly to a one-complex-variable
radial Bargmann space, organized by a $\mathfrak{su}(1,1)$ dynamical algebra indexed by radial excitation levels $j$.
The complete radial operator on that space is
\[
 \mathfrak A_\ell=\kappa\pi G(\mathcal J_\ell)-\mathfrak C_\ell,
\]
where $\mathcal J_\ell$ is an explicit Laguerre--Jacobi differential operator,
$G$ is the exact hard-sphere collision-frequency function, and
$\mathfrak C_\ell$ is compact.  An explicit unitary
intertwiner maps this analytic radial realization to the classical
Burnett/Sonine radial coordinate.  The Burnett matrix is therefore recovered
as a coordinate matrix of the exact radial operator rather than introduced as
its definition.  The low stress and heat-flux blocks and the classical Sonine
transport corrections are reproduced as normalization and consistency checks.

The main new results concern the untruncated radial operator.  The loss--gain
split becomes a spectral division of labor: the compact gain supplies compact
corrections, whereas the loss block $\kappa\pi G(J_\ell)$ is noncompact and
sets the common essential band $[\nu_{\min},\infty)$.  The Fock
representation makes access to that continuum constructive.  Radially
squeezed $\mathfrak{su}(1,1)$ coherent packets give a threshold Weyl sequence in every
fixed angular sector.  
If instead \(\ell\to\infty\) while the squeezing is tuned on the \(\ell^{-1}\) scale,
the same normalized packets concentrate at
any prescribed radial energy $x_0>0$ and furnish full-space Weyl sequences at
$\kappa\pi G(x_0)$ throughout the interior of the essential band.  Their expected
radial location obeys $\langle j\rangle\sim \ell^2$, exposing
quadratic continuum corridors in the $(\ell,j)$ plane and showing why no
fixed Sonine depth can resolve the threshold region.  At fixed radial depth
and $\ell\to\infty$ the loss is relatively diagonal, whereas at fixed $\ell$
and $j\to\infty$ the local matrix tail approaches a universal Toeplitz
operator whose zero at quasi-momentum $\theta=\pi$ is the local signature of
the same squeezed threshold escape.  Thus the payoff of Fock reformulation
for hard spheres is not exact degree grading, as for Maxwell molecules, but
exact radialization together with constructive, nonperturbative control of the
global spectral geometry.
\end{abstract}

\tableofcontents

\newpage

\section{Introduction}

The linearized hard-sphere Boltzmann collision operator is a classical object
with two complementary descriptions.  In the weighted velocity Hilbert space
one writes
\begin{equation}
 \mathcal A:=-L=\nu-\mathcal C,
 \label{eq:intro_loss_gain}
\end{equation}
where $\nu$ is the collision-frequency multiplication operator and
$\mathcal C$ is a compact integral operator.  This representation is the
natural starting point for spectral theory: the essential spectrum is
controlled by $\nu$, while isolated eigenvalues encode the nonlocal collision
dynamics.  A second classical description expands the operator in
Grad--Hermite or Burnett/Sonine polynomials.  Rotational symmetry then reduces
the problem to an infinite radial matrix for each angular momentum $\ell$.

The aim of the present paper is not to replace either description.  We ask
instead whether the complete operator can be placed in a representation in
which the connection between the classical loss--gain spectral description
and the rotationally reduced Hermite/Burnett radial description is explicit
\emph{before} any radial truncation is made.
The construction proceeds through
\begin{equation}
 \begin{aligned}
 \text{Carleman geometry}
 &\longrightarrow
 \text{Hermite/Fock representation}
 \longrightarrow
 \text{Bargmann generating kernel}
 \\
 &\longrightarrow
 SO(3)\text{ reduction}
 \longrightarrow
 \text{exact radial operator}.
 \end{aligned}
 \label{eq:intro_route}
\end{equation}
The final Burnett/Sonine matrices are then recovered as coordinates of that
operator rather than used to define it.

\subsection{What ``Fock'' means here, and relation to the earlier papers}

The terminology ``Fock space'' is used in a deliberately modest sense.  The
Gaussian-weighted Hermite expansion of a velocity-dependent function is
canonically isomorphic to the bosonic oscillator Fock space.  Hermite degree
is oscillator number, multiplication and differentiation in velocity become
creation and annihilation operations, and the Burnett basis is the
$SO(3)$-irreducible reorganization of the same Hermite states.  No quantum
dynamics or additional physical variables are introduced.  The advantage is
that elementary kinetic operations can be intertwined with simple algebraic
operators and then evaluated in whichever coordinate realization is most
economical.

This point of view was developed in two earlier works, but the structural
payoff was different in each case.  For the Lebowitz--Frisch--Helfand kinetic
model \cite{KarlinLFH2026}, the velocity-space relaxation operator becomes a
number-operator problem and the main issue is the exact intertwining of the
complete kinetic propagation when the realization depends on local
hydrodynamic fields.  That work established realization covariance and
showed how viscous stress and heat flux are selected by the graded Fock
structure in the hydrodynamic limit.

For Maxwell molecules \cite{KarlinMaxwell2026}, the object is instead the
nonlinear Boltzmann collision map.  Bobylev's substitution--product structure
leads to a representation-independent lift--fusion construction on symmetric
Fock space and to the exact degree rule
\begin{equation}
 \widehat{\mathcal Q}(\mathscr H_p,\mathscr H_q)
 \subseteq \mathscr H_{p+q}.
 \label{eq:intro_maxwell_grading}
\end{equation}
The exact grading explains the triangular moment hierarchy and makes the
stress and heat-flux relaxation rates genuine eigenvalues.

Hard spheres provide a different test of the same operator-first approach.
The relative-speed factor destroys the Maxwell degree grading, so one should
not expect finite Fock levels to remain invariant.  What survives is
intertwining, exact rotational reduction, and a canonical radial organization.
The main question of this paper is therefore not whether hard spheres can be
made to look like Maxwell molecules, but what exact structure remains after
number grading is lost.

\subsection{Notation conventions}

Cartesian vectors are written in bold, for example $\bm v,\bm x,\bm z$.
Geometric unit vectors used in the Carleman frame are denoted by
$\bm\omega,\bm\tau$, while Cartesian Hermite multi-indices are denoted by
bold Greek multi-indices such as $\boldsymbol{\nu}$.  This avoids using the
same bold symbol both for a direction in velocity space and for an oscillator
number state.  Repeated Cartesian component indices
$\alpha,\beta,\ldots$ are summed.  The standard coordinate unit vectors are
$\bm e_1,\bm e_2,\bm e_3$.

A hat, as in $\widehat{\mathcal A}$, $\widehat K_+$, or
$\hat a_\alpha$, denotes an operator on the abstract Fock space.  Unhatted $K_\pm,K_0$ denote their three-dimensional Bargmann coordinate
realizations; after restriction to a fixed angular sector the induced
one-variable operators are written $K_\pm^{(\ell)},K_0^{(\ell)}$.  The symbol $\mathcal A^{(\ell)}$ denotes
the matrix of a fixed-$\ell$ block in the normalized radial ket basis, and
$J_\ell$ denotes the corresponding Jacobi matrix of radial energy.  The
fraktur symbol $\mathfrak A_\ell$ denotes the one-variable radial Bargmann
operator; its Burnett realization is written $\mathfrak A_\ell^{\rm Burn}$.
The analogous conventions are used for the gain and loss blocks.  Orthogonal
projections are written with sans-serif letters $\mathsf P,\mathsf Q$;
ordinary $P_\ell$ is reserved for the Legendre polynomial of degree $\ell$.
We use $\operatorname{spec}$ and $\operatorname{spec}_{\rm ess}$ for the
spectrum and essential spectrum, respectively.  New special-function notation
is defined when it first appears.

\subsection{Classical context and pertinent previous work}

Only the part of the extensive literature directly relevant to the present
construction is reviewed here.  Carleman's analysis of the Boltzmann integral
equation established the geometric representation that underlies our starting
point \cite{Carleman1933}.  For cut-off hard interactions, Grad showed that the
linearized collision operator can be written as a collision-frequency
multiplication operator plus a compact integral operator; this identifies the
essential spectrum by compact-perturbation theory
\cite{GradRarefied1949,Grad1963,Drange1975}.  For rigid spheres, Pekeris also
established square-integrability properties of the linearized integral kernel
\cite{Pekeris1963}.

The classical Chapman--Enskog solution and its Sonine-polynomial evaluation of
transport coefficients are reviewed systematically by Chapman and Cowling
\cite{ChapmanCowling1970}.  The eigenvalue problem for a gas of rigid spheres
was studied computationally and analytically by Alterman, Frankowski, and
Pekeris \cite{AltermanFrankowskiPekeris1962}.  Their work, together with the
related sound-propagation calculations \cite{PekerisEtAl1962}, demonstrated
the practical effectiveness of radial polynomial expansions.  Kumar clarified
the relation among Grad, Burnett/Sonine, and irreducible polynomial systems
and emphasized the economy of the Burnett organization \cite{Kumar1966}.
Ford derived general Burnett matrix elements for power-law interactions, with
elastic spheres as a special case \cite{Ford1968}.  A very recent spectral
implementation by Hiemstra, Ke\ss ler, and Abdelmalik uses Wigner--Eckart
factorization to separate universal $SO(3)$ angular couplings from reduced
radial collision data and thereby to construct finite spectral collision
tensors efficiently \cite{HiemstraEtAl2026}.  This is complementary to the
question pursued here: after exploiting the same rotational structure, we
retain the radial problem as an exact infinite-dimensional operator rather
than selecting a finite Burnett/Sonine block.

A spectral feature particularly relevant below is the angular decomposition.
Klaus studied the hard-sphere discrete spectrum in the invariant angular
subspaces and proved disappearance of discrete eigenvalues beyond a critical
angular momentum; his computer-assisted analysis indicated that $\ell=3$ is
the first angular sector without discrete eigenvalues \cite{Klaus1976}.
Constructive spectral-gap and coercivity estimates for hard potentials,
including hard spheres, were developed by Baranger and Mouhot and by Mouhot
\cite{BarangerMouhot2005,Mouhot2006}.  Dudy\'nski treats the velocity-space
collision operator separately from the Fourier-transformed and full
transport--collision generators; for the collision operator his Theorem~3.3
excludes threshold accumulation and eigenvalues of infinite multiplicity
under an integrability condition satisfied by hard spheres
\cite{Dudynski2013}.  This distinction is relevant here because the present
paper concerns the spatially homogeneous, self-adjoint collision operator
only.

Bobylev and Mossberg returned specifically to the three-dimensional
hard-sphere operator and reduced the radial integral equations to systems of
ordinary differential equations.  They used this reduction both for
eigenvalue computations and for large-energy asymptotics
\cite{BobylevMossberg2008}.  Their work is particularly close in spirit to the
present one because both approaches seek an exact radial formulation rather
than a finite Sonine truncation.  Here, however, the radial equation is
obtained by first constructing and then intertwining the complete
Carleman--Bargmann operator.

\subsection{Road map and main results}

The paper is intended to be read as a sequence of reductions rather than as a
collection of independent formulas.  Section~\ref{sec:carleman_bargmann} starts from the Carleman
representation and explains why the Hermite/Fock and Bargmann realizations are
adapted to its translation structure.  The output is a single analytic kernel
$\mathcal K(\bm w,\bm z)$ whose Taylor coefficients generate all Cartesian
Hermite collision brackets.

Section~\ref{sec:angular_radial} uses rotational invariance.  The labels have a direct kinetic
meaning: $\ell$ is angular tensor rank, $m$ labels the $2\ell+1$ equivalent
orientations, and $j=0,1,\ldots$ counts radial excitations within a fixed
angular sector.  The radial index is constructed, rather than imposed: a
rotationally scalar pair creator built from the canonical Fock ladder
operators generates the complete radial \emph{tower} above each harmonic
lowest-weight state.  The three-dimensional coherent kernel is then projected
onto these towers, producing an exact one-complex-variable radial operator
kernel $\mathscr A_\ell$ for every $\ell$.

Section~\ref{sec:jacobi_intertwining} isolates the part of the radial operator that is already known
exactly from the collision frequency.  All dependence on the normalization
of the hard-sphere cross section is kept in one constant $\kappa>0$.  The
result is
\begin{equation}
 \boxed{
 \mathfrak A_\ell
 =
 \kappa\pi\,G(\mathcal J_\ell)-\mathfrak C_\ell,
 }
 \label{eq:intro_main_operator}
\end{equation}
where $\mathfrak C_\ell$ is compact and
\begin{equation}
 G(x)
 =
 \sqrt{\frac2\pi}\e^{-x}
 +
 \left(\sqrt{2x}+\frac1{\sqrt{2x}}\right)\operatorname{erf}(\sqrt x),
 \label{eq:intro_G}
\end{equation}
where $\operatorname{erf}$ is the standard error function.  The radial
parameter $\beta_\ell$ and the operator $\mathcal J_\ell$ are
\begin{equation}
 \mathcal J_\ell
 =
 \xi\partial_\xi^2
 +(2\xi+\beta_\ell)\partial_\xi
 +\xi+\beta_\ell,
 \qquad
 \beta_\ell=\ell+\frac32.
 \label{eq:intro_J}
\end{equation}
Here $\xi$ is the one-complex-variable radial Bargmann coordinate: after a
solid harmonic of degree $\ell$ is factored from the three-dimensional Bargmann
polynomial, $\xi=\bm z^2/2$ records the remaining scalar pair excitations.
An exact unitary transform then identifies this analytic radial realization
with the conventional Laguerre/Burnett radial coordinate.  Only after this
intertwining theorem do we return to Burnett/Sonine language: the conventional
collision matrices are recovered as coordinates of the already constructed
radial operator, and their lowest hard-sphere blocks are derived explicitly as
normalization and consistency checks.

Section~\ref{sec:burnett_validation} then returns to the familiar low-order
sectors.  The stress block in the $\ell=2$ tower, the heat-flux block in the
$\ell=1$ tower, and finite Sonine compressions are extracted from the same
exact radial kernel.  Their agreement with the classical hard-sphere
transport sequence is used only as validation: no finite radial truncation
enters the construction of the operator itself.

Having fixed the normalization in this classical regime, the remainder of the
paper leaves finite Sonine windows and turns to global questions in Sections \ref{sec:global_spectral}, \ref{sec:large_l}, \ref{sec:large_j} and \ref{sec:semiclassical_synthesis}.  The guiding
observation is that the Boltzmann loss--gain split becomes a spectral division
of labor: the loss determines the noncompact radial geometry, while the
compact gain produces compact corrections and possible discrete levels on that
background.  The classical hard-sphere collision frequency has the minimum
\begin{equation}
 \nu_{\min}=2\kappa\sqrt{2\pi},
 \label{eq:intro_numin}
\end{equation}
at zero velocity.  The common essential band follows by the standard invariance
of essential spectrum under compact perturbations \cite{ReedSimon1980},
but the radial representation goes further by displaying how its points are
reached in Fock/Burnett coordinates.  The squeezed radial family can be tuned
to concentrate at any prescribed physical radial energy $x_0>0$, producing a
Fock-space Weyl sequence at $\kappa\pi G(x_0)$; the threshold is the endpoint
$x_0\downarrow0$.  At fixed radial
depth and $\ell\to\infty$ the loss collapses, relative to its leading scale,
toward a diagonal operator and every fixed Sonine section rises as
$\sqrt\ell$.  An explicit normalized radial family nevertheless reaches the
threshold by escaping to large $j$; a fixed low-energy layer requires
$j\sim O(\ell^2)$, while convergence all the way to $x=0$ requires still faster
radial escape.  At fixed $\ell$ and $j\to\infty$ the local matrix tail instead
approaches a universal Toeplitz operator with symbol
$2\sqrt2\,|\cos(\theta/2)|$.  Here $\theta$ is not a physical velocity-space
angle but the Fourier phase (quasi-momentum) conjugate to translation in the
locally homogeneous radial-index lattice.  A threshold Weyl sequence has the
alternating phase $(-1)^j$, identifying the zero at $\theta=\pi$ as the
cancellation mechanism for the leading $\sqrt j$ loss scale.  The subsequent
semiclassical synthesis organizes the interior of the $(\ell,j)$ plane by the
total energy scale $E_{j\ell}=2j+\beta_\ell$ and the angular/radial balance
$\sigma_{j\ell}=\beta_\ell/E_{j\ell}$. Conclusions and discussion are provided in Sections \ref{sec:practical_regimes} and \ref{sec:discussion}.

\section{From Carleman translations to the Bargmann--Fock kernel}
\label{sec:carleman_bargmann}

\subsection{Carleman geometry and linearization}

We use the normalized equilibrium Maxwellian
\begin{equation}
 M(\bm v)=(2\pi)^{-3/2}\exp(-|\bm v|^2/2),
 \label{eq:Maxwellian_main}
\end{equation}
so velocities are dimensionless thermal velocities.  Let $Q(f,f)$ denote the
usual bilinear hard-sphere Boltzmann collision operator.  For three-dimensional
hard spheres its Carleman representation can be written, up to the positive
cross-section normalization constant $\kappa$, as
\begin{equation}
 Q(f,f)(\bm v)
 =
 \kappa
 \iint_{\R^3\times\R^3}
 \delta(\bm x\cdot\bm y)
 \left[
 f(\bm v+\bm x)f(\bm v+\bm y)
 -
 f(\bm v+\bm x+\bm y)f(\bm v)
 \right]
 \dd\bm x\,\dd\bm y .
 \label{eq:carleman}
\end{equation}
Here $\delta$ is the Dirac delta distribution; its argument enforces the
Carleman orthogonality condition $\bm x\cdot\bm y=0$.  The special
simplification for hard spheres in $d=3$ is that the transformed Carleman
kernel is constant \cite{MouhotPareschi2006}.

Linearize about $M$ by writing
\begin{equation}
 f=M(1+\varepsilon\phi),
\end{equation}
where $\varepsilon$ is a formal perturbation parameter and $\phi$ is the
dimensionless perturbation.
On the support $\bm x\cdot\bm y=0$,
\begin{equation}
 M(\bm v+\bm x)M(\bm v+\bm y)
 =
 M(\bm v)M(\bm v+\bm x+\bm y),
\end{equation}
and the linearized operator is
\begin{equation}
 (L\phi)(\bm v)
 =
 \kappa
 \iint
 \delta(\bm x\cdot\bm y)\,
 M(\bm v+\bm x+\bm y)\,
 \Delta_{\bm x,\bm y}\phi(\bm v)
 \dd\bm x\,\dd\bm y ,
 \label{eq:linearized}
\end{equation}
with
\begin{equation}
 \Delta_{\bm x,\bm y}\phi
 =
 \phi(\bm v+\bm x)+\phi(\bm v+\bm y)
 -\phi(\bm v+\bm x+\bm y)-\phi(\bm v).
 \label{eq:collision_difference}
\end{equation}
Equivalently,
\begin{equation}
 \Delta_{\bm x,\bm y}
 =-(T_{\bm x}-I)(T_{\bm y}-I),
 \label{eq:delta_translation}
\end{equation}
where $T_{\bm x}$ is the velocity-translation operator,
$(T_{\bm x}\phi)(\bm v)=\phi(\bm v+\bm x)$, and $I$ is the identity.

\subsection{Minimal Hermite/Fock dictionary}
The previous subsection has rewritten a collision as a superposition of
velocity translations.  Before exploiting that fact, we state explicitly
what the Fock notation means in this paper.

Standard references for the oscillator/Fock and Bargmann realizations are
Bargmann's original construction and the modern account of Folland
\cite{Bargmann1961,Folland1989}; we use the same abstract-versus-realization
convention as in the Maxwell-molecule formulation \cite{KarlinMaxwell2026}.

Let $\mathbb N_0=\{0,1,2,\ldots\}$ and let
$\boldsymbol{\nu}=(\nu_1,\nu_2,\nu_3)\in\mathbb N_0^3$ be a Cartesian
multi-index, with $|\boldsymbol{\nu}|=\nu_1+\nu_2+\nu_3$ and
$\boldsymbol{\nu}!=\nu_1!\nu_2!\nu_3!$.  The vacuum is
$|0\rangle:=|0,0,0\rangle$.  Creation and annihilation operators are denoted
by $\hat a_\alpha^\dagger$ and $\hat a_\alpha$ and satisfy
\begin{equation}
 [\hat a_\alpha,\hat a_\beta^\dagger]=\delta_{\alpha\beta},
 \qquad
 \hat a_\alpha|0\rangle=0,
\end{equation}
where $[A,B]=AB-BA$ and $\delta_{\alpha\beta}$ is the Kronecker delta.  The
normalized number states are
\begin{equation}
 |\boldsymbol{\nu}\rangle
 =
 \prod_{\alpha=1}^3
 \frac{(\hat a_\alpha^\dagger)^{\nu_\alpha}}{\sqrt{\nu_\alpha!}}
 |0\rangle .
 \label{eq:number_state_dictionary}
\end{equation}
The symmetric Fock space is the Hilbert-space organization
\begin{equation}
 \mathscr H
 =
 \bigoplus_{N=0}^{\infty}\mathscr H_N,
 \qquad
 \mathscr H_N
 =
 \operatorname{span}\{
 |\boldsymbol{\nu}\rangle:\ |\boldsymbol{\nu}|=N
 \}.
 \label{eq:fock_grading_dictionary}
\end{equation}
The integer $N$ is simply total Hermite degree.  The number operator
\begin{equation}
 \widehat N=\hat a_\alpha^\dagger\hat a_\alpha
\end{equation}
has eigenvalue $|\boldsymbol{\nu}|$ on $|\boldsymbol{\nu}\rangle$.

We shall use two exact coordinate realizations of the same abstract state.
The Hermite-polynomial realization $\mathcal R_P$ is defined on number states
by
\begin{equation}
 \mathcal R_P|\boldsymbol{\nu}\rangle
 =
 \frac{\operatorname{He}_{\boldsymbol{\nu}}(\bm v)}
 {\sqrt{\boldsymbol{\nu}!}},
 \label{eq:RP_dictionary}
\end{equation}
where $\operatorname{He}_{\boldsymbol{\nu}}$ is the product of probabilists'
Hermite polynomials associated with the Maxwellian in
Eq.~\eqref{eq:Maxwellian_main}.  The Bargmann realization $\mathcal R_B$ is
\begin{equation}
 \mathcal R_B|\boldsymbol{\nu}\rangle
 =
 \frac{\bm z^{\boldsymbol{\nu}}}{\sqrt{\boldsymbol{\nu}!}},
 \qquad
 \bm z=(z_1,z_2,z_3)\in\C^3.
 \label{eq:RB_dictionary}
\end{equation}
The maps $\mathcal R_P$ and $\mathcal R_B$ are invertible realizations;
changing between them changes coordinates, not the underlying state or
operator.  For any operator $O$ in a realization $\mathcal R$, its abstract
Fock representative is defined by the intertwining relation
\begin{equation}
 \widehat O=\mathcal R^{-1}O\mathcal R,
 \qquad\hbox{equivalently}\qquad
 O\mathcal R=\mathcal R\widehat O.
 \label{eq:generic_intertwining_dictionary}
\end{equation}
The standard ladder dictionaries are
\begin{align}
 \mathcal R_P\hat a_\alpha\mathcal R_P^{-1}
 &=\partial_{v_\alpha},
 &
 \mathcal R_P\hat a_\alpha^\dagger\mathcal R_P^{-1}
 &=v_\alpha-\partial_{v_\alpha},
 \label{eq:Hermite-ladders-nl}
 \\
 \mathcal R_B\hat a_\alpha\mathcal R_B^{-1}
 &=\partial_{z_\alpha},
 &
 \mathcal R_B\hat a_\alpha^\dagger\mathcal R_B^{-1}
 &=z_\alpha.
 \label{eq:Bargmann-ladders-nl}
\end{align}
It is useful to name also the two direct changes of realization,
\[
 \mathcal T_{P\to B}:=\mathcal R_B\mathcal R_P^{-1},
 \qquad
 \mathcal T_{B\to P}:=\mathcal R_P\mathcal R_B^{-1}
 =\mathcal T_{P\to B}^{-1}.
\]
Thus the Hermite and Bargmann descriptions are not two different kinetic
problems: they are two coordinate realizations of the same abstract Fock
state and are connected directly by $\mathcal T_{P\to B}$ and
$\mathcal T_{B\to P}$.  Schematically,
\begin{equation}
 \begin{array}{ccc}
 &\mathscr H&\\[-1mm]
 {\scriptstyle\mathcal R_P}\swarrow
 &&\searrow{\scriptstyle\mathcal R_B}\\[-1mm]
 \text{Hermite velocity functions}
 &
 \underset{\scriptstyle\mathcal T_{B\to P}}
 {\overset{\scriptstyle\mathcal T_{P\to B}}{\rightleftarrows}}
 &
 \text{Bargmann analytic functions}.
 \end{array}
 \label{eq:realization_diagram}
\end{equation}
Later, after fixing angular momentum, a third exact map
$\mathcal U_\ell$ will connect the radial Bargmann function to the physical
Laguerre/Burnett radial coordinate.
A hat will always indicate an operator acting on the abstract Fock space.  For
Maxwell molecules the eigenspaces of $\widehat N$ are preserved by the
linearized collision operator; one purpose of the hard-sphere calculation is
to see precisely how that property fails.
Throughout the geometric Carleman calculation, bold Greek letters such as
$\bm\omega$ and $\bm\tau$ denote ordinary unit vectors in velocity space.
They are deliberately kept distinct from the multi-index
$\boldsymbol{\nu}$ and from number states.

\subsection{From Carleman translations to the coherent Bargmann kernel}
\label{sec:coherent_bargmann_kernel}

We now apply the realization diagram \eqref{eq:realization_diagram} to the
linearized collision operator.  It is convenient to work with the positive
operator
\begin{equation}
 \mathcal A:=-L
\end{equation}
in the Hermite velocity realization and to denote its exact abstract Fock
representative and Bargmann realization by
\begin{equation}
 \widehat{\mathcal A}_{\rm HS}
 :=
 \mathcal R_P^{-1}\mathcal A\mathcal R_P,
 \qquad
 \mathcal A_B
 :=
 \mathcal R_B\widehat{\mathcal A}_{\rm HS}\mathcal R_B^{-1}.
 \label{eq:A_three_realizations}
\end{equation}
Equivalently,
\begin{equation}
 \mathcal A_B
 =
 \mathcal T_{P\to B}\,\mathcal A\,\mathcal T_{B\to P}.
 \label{eq:A_PB_intertwining}
\end{equation}
Thus every analytic object constructed below belongs to the Bargmann
realization of the same linearized hard-sphere operator; no auxiliary kinetic 
operator is being introduced.

The usefulness of the Fock representation follows directly from the
translation structure in Eq.~\eqref{eq:delta_translation}.  In the
Hermite-polynomial realization,
\begin{equation}
 \mathcal R_P^{-1}T_{\bm x}\mathcal R_P
 =
 e^{\bm x\cdot\hat{\bm a}},
 \label{eq:translation_Fock}
\end{equation}
and therefore
\begin{equation}
 \widehat\Delta_{\bm x,\bm y}
 =
 -
 \left(\e^{\bm x\cdot\hat{\bm a}}-\hat I\right)
 \left(\e^{\bm y\cdot\hat{\bm a}}-\hat I\right).
 \label{eq:Delta_Fock}
\end{equation}
If the passive Bargmann coordinate is denoted temporarily by $\bm\zeta$, then
$\mathcal R_B\hat a_\alpha\mathcal R_B^{-1}=\partial_{\zeta_\alpha}$, so
$\exp(\bm x\cdot\hat{\bm a})$ is represented by translation of the Bargmann
argument.  The scalar exponentials that will enter the collision kernel arise
more specifically because coherent vectors diagonalize all annihilation
operators.

Indeed, introduce the unnormalized coherent vector
\cite{Bargmann1961,Folland1989}
\begin{equation}
 |\bm z\rangle_{\rm coh}
 =
 \e^{\bm z\cdot\hat{\bm a}^\dagger}|0\rangle,
 \qquad
 \hat{\bm a}|\bm z\rangle_{\rm coh}
 =\bm z|\bm z\rangle_{\rm coh}.
 \label{eq:coherent_vector}
\end{equation}
The same abstract coherent vector then has the two coordinate images
\begin{align}
 (\mathcal R_B|\bm z\rangle_{\rm coh})(\bm\zeta)
 &=
 \e^{\bm z\cdot\bm\zeta},
 \label{eq:coherent_Bargmann_image}
 \\
 (\mathcal R_P|\bm z\rangle_{\rm coh})(\bm v)
 &=
 E_{\bm z}(\bm v)
 :=
 \exp\!\left(\bm z\cdot\bm v-\frac{\bm z^2}{2}\right).
 \label{eq:Ez}
\end{align}
The second identity is precisely the multivariate Hermite generating
function, obtained by expanding the coherent vector in number states and
using Eq.~\eqref{eq:RP_dictionary}.  The first identity is its Bargmann
counterpart.  In particular,
\begin{equation}
 \e^{\bm x\cdot\hat{\bm a}}|\bm z\rangle_{\rm coh}
 =
 \e^{\bm x\cdot\bm z}|\bm z\rangle_{\rm coh},
 \label{eq:coherent_translation_eigen}
\end{equation}
which in the Hermite realization becomes
\begin{equation}
 E_{\bm z}(\bm v+\bm x)
 =
 E_{\bm z}(\bm v)\e^{\bm z\cdot\bm x}.
 \label{eq:Ez_translation}
\end{equation}
Consequently the Carleman collision difference factorizes on a coherent
vector:
\begin{equation}
 \Delta_{\bm x,\bm y}E_{\bm z}
 =
 -
 E_{\bm z}(\bm v)
 \left(\e^{\bm z\cdot\bm x}-1\right)
 \left(\e^{\bm z\cdot\bm y}-1\right).
 \label{eq:coherent_factorization}
\end{equation}

This suggests evaluating the exact Fock operator on coherent vectors.  Define
its polarized coherent-state, or Bargmann, kernel by
\begin{equation}
 \mathcal K(\bm w,\bm z)
 :=
 {}_{\rm coh}\langle\bm w|
 \widehat{\mathcal A}_{\rm HS}
 |\bm z\rangle_{\rm coh}.
 \label{eq:Kdef}
\end{equation}
Here $\bm w$ is an independent analytic dual variable.  Expanding the bra and
ket in the normalized number basis gives immediately
\begin{equation}
 \mathcal K(\bm w,\bm z)
 =
 \sum_{\boldsymbol{\mu},\boldsymbol{\nu}}
 \mathcal A_{\boldsymbol{\mu}\boldsymbol{\nu}}
 \frac{\bm w^{\boldsymbol{\mu}}\bm z^{\boldsymbol{\nu}}}
 {\sqrt{\boldsymbol{\mu}!\boldsymbol{\nu}!}},
 \qquad
 \mathcal A_{\boldsymbol{\mu}\boldsymbol{\nu}}
 =
 \langle\boldsymbol{\mu}|
 \widehat{\mathcal A}_{\rm HS}
 |\boldsymbol{\nu}\rangle.
 \label{eq:Kgenerates}
\end{equation}
Thus $\mathcal K$ is not merely a mnemonic generating function: it is the
polarized analytic matrix kernel of the exact Bargmann operator
$\mathcal A_B$.  Its Taylor coefficients are all Cartesian Hermite/Fock
matrix elements, and hence it determines $\mathcal A_B$ on the algebraic
Fock core.  In the standard Bargmann integral notation the second analytic
variable is replaced by the conjugate integration variable; this is the
usual reproducing-kernel realization of the same operator
\cite{Bargmann1961,Folland1989}.

\paragraph{Evaluation of the kernel in velocity coordinates.}
The Bargmann description explains what $\mathcal K$ is; the Hermite velocity
realization is the economical coordinate system in which to evaluate it,
because the Carleman formula is already written in $\bm v$.
For an actual Hilbert-space matrix element the first coherent label is
complex-conjugated.  We compute that matrix element and then use analytic
polarization, writing the conjugate label as the independent variable
$\bm w$.  With
\begin{equation}
 \langle\psi,\phi\rangle_M
 :=
 \int_{\R^3}
 M(\bm v)\overline{\psi(\bm v)}\phi(\bm v)\dd\bm v,
\end{equation}
the symmetric Dirichlet form of the positive linearized operator is
\begin{equation}
 \langle\psi,\mathcal A\phi\rangle_M
 =\frac{\kappa}{4}
 \iiint
 \delta(\bm x\cdot\bm y)M(\bm v)M(\bm v+\bm x+\bm y)
 \overline{\Delta_{\bm x,\bm y}\psi(\bm v)}
 \Delta_{\bm x,\bm y}\phi(\bm v)
 \dd\bm v\,\dd\bm x\,\dd\bm y.
 \label{eq:carleman_dirichlet_form}
\end{equation}
This is the usual nonnegative quadratic form of the linearized Boltzmann
operator, written in the present Carleman variables; it follows by applying
the collision changes of variables to the four terms in
Eq.~\eqref{eq:collision_difference} (see, e.g.,
\cite{Grad1963,Drange1975}).

For compactness of notation set
\begin{equation}
 B_{\bm q}(\bm w,\bm z)
 :=
 \left(\e^{\bm w\cdot\bm q}-1\right)
 \left(\e^{\bm z\cdot\bm q}-1\right).
 \label{eq:Bq_def}
\end{equation}
Substituting the two coherent vectors and using
Eq.~\eqref{eq:coherent_factorization} on both legs gives the complete
Carleman integral for the generating kernel,
\begin{align}
 \mathcal K(\bm w,\bm z)
 =\frac{\kappa}{4}
 \iiint &\delta(\bm x\cdot\bm y)
 M(\bm v)M(\bm v+\bm x+\bm y)
 E_{\bm w}(\bm v)E_{\bm z}(\bm v)
 \notag\\[-1mm]
 &\times
 B_{\bm x}(\bm w,\bm z)
 B_{\bm y}(\bm w,\bm z)
 \dd\bm v\,\dd\bm x\,\dd\bm y.
 \label{eq:K_master_integral}
\end{align}
This is the point at which the abstract coherent kernel has been converted
back into an ordinary velocity-space integral.  The remaining reductions are
elementary, but we keep them explicit because they fix the normalization and
the analytic structure used throughout the rest of the paper.

Put $\bm s=\bm x+\bm y$.  The part depending on $\bm v$ is Gaussian:
\begin{align}
 &M(\bm v)M(\bm v+\bm s)
 E_{\bm w}(\bm v)E_{\bm z}(\bm v)
 \notag\\
 &\quad=
 \frac{1}{(2\pi)^3}
 \exp\!\left[
 -\left(
 \bm v-\frac{\bm w+\bm z-\bm s}{2}
 \right)^2
 -\frac{|\bm s|^2}{4}
 -\frac{(\bm z+\bm w)\cdot\bm s}{2}
 -\frac{(\bm z-\bm w)^2}{4}
 \right].
 \label{eq:v_gaussian_completed_square}
\end{align}
Therefore
\begin{align}
&\int_{\R^3}
 M(\bm v)M(\bm v+\bm s)
 E_{\bm w}(\bm v)E_{\bm z}(\bm v)\dd\bm v
=
\frac{1}{8\pi^{3/2}}
\exp\!\left[
-\frac{|\bm s|^2}{4}
-\frac{(\bm z+\bm w)\cdot\bm s}{2}
-\frac{(\bm z-\bm w)^2}{4}
\right].
\label{eq:v_gaussian_main}
\end{align}
Inserting this into Eq.~\eqref{eq:K_master_integral} gives the two-vector
Carleman integral
\begin{align}
 \mathcal K(\bm w,\bm z)
 ={}&
 \frac{\kappa}{32\pi^{3/2}}
 \exp\!\left[-\frac{(\bm z-\bm w)^2}{4}\right]
 \iint
 \delta(\bm x\cdot\bm y)
 \notag\\
 &\times
 \exp\!\left[
 -\frac{|\bm x+\bm y|^2}{4}
 -\frac{(\bm z+\bm w)\cdot(\bm x+\bm y)}{2}
 \right]
 B_{\bm x}(\bm w,\bm z)
 B_{\bm y}(\bm w,\bm z)
 \dd\bm x\,\dd\bm y.
 \label{eq:K_after_v}
\end{align}

We next resolve the orthogonality constraint by writing
\begin{equation}
 \bm x=r\bm\omega,\qquad
 \bm y=\rho\bm\tau,\qquad
 \bm\omega\cdot\bm\tau=0,
 \qquad r,\rho\ge0.
 \label{eq:Carleman_frame_param}
\end{equation}
If $\varphi$ is the angular coordinate on the great circle
$\bm\tau\perp\bm\omega$, then
\begin{equation}
 \delta(\bm x\cdot\bm y)\dd\bm x\,\dd\bm y
 =r\rho\,\dd r\,\dd\rho\,\dd\bm\omega\,\dd\varphi.
 \label{eq:carleman_radial_measure}
\end{equation}
Indeed, the ordinary measures contribute $r^2\rho^2$, while
$\delta(r\rho\,\bm\omega\cdot\bm\tau)$ removes one factor $r\rho$
and restricts $\bm\tau$ to the perpendicular great circle.  Since
$|\bm x+\bm y|^2=r^2+\rho^2$, both the Gaussian and the coherent factors in
Eq.~\eqref{eq:K_after_v} separate into an $r,\bm\omega$ part and a
$\rho,\bm\tau$ part.

For a frame direction $\bm\omega$ define
\begin{align}
 R_{\bm\omega}(\bm w,\bm z)
 :=\int_0^\infty &r\,
 \exp\!\left[-\frac{r^2}{4}
 -\frac{(\bm z+\bm w)\cdot\bm\omega}{2}r\right]
 \left(\e^{(\bm z\cdot\bm\omega)r}-1\right)
 \left(\e^{(\bm w\cdot\bm\omega)r}-1\right)\dd r.
 \label{eq:R_integral_main}
\end{align}
Then Eq.~\eqref{eq:K_after_v} becomes
\begin{align}
 \mathcal K(\bm w,\bm z)
 ={}&
 \frac{\kappa}{32\pi^{3/2}}
 \exp\!\left[-\frac{(\bm z-\bm w)^2}{4}\right]
 \int_{\Sph^2}\dd\bm\omega
 \int_{\bm\tau\perp\bm\omega}\dd\varphi\,
 R_{\bm\omega}(\bm w,\bm z)
 R_{\bm\tau}(\bm w,\bm z).
 \label{eq:K_frame_before_average}
\end{align}

The radial integral can be evaluated explicitly.  Set
\begin{equation}
 A_{\bm\omega}:=(\bm z+\bm w)\cdot\bm\omega,
 \qquad
 D_{\bm\omega}:=(\bm z-\bm w)\cdot\bm\omega.
\end{equation}
Expanding the two brackets in Eq.~\eqref{eq:R_integral_main} gives
\begin{align}
 R_{\bm\omega}
 =\int_0^\infty r\e^{-r^2/4}
 \big[&\e^{A_{\bm\omega}r/2}+\e^{-A_{\bm\omega}r/2}
 -\e^{D_{\bm\omega}r/2}-\e^{-D_{\bm\omega}r/2}\big]\dd r.
 \label{eq:R_four_exponentials}
\end{align}
Completing the square in each term yields
\begin{equation}
 R_{\bm\omega}(\bm w,\bm z)
 =2\sqrt\pi\,[F(A_{\bm\omega})-F(D_{\bm\omega})],
 \label{eq:Rclosed}
\end{equation}
where
\begin{equation}
 F(a)=a\,\e^{a^2/4}\operatorname{erf}(a/2).
 \label{eq:Fdef}
\end{equation}

The remaining angular measure is the invariant measure on ordered orthonormal
two-frames.  We normalize it by
\begin{equation}
 \int_{\Sph^2}\dd\bm\omega
 \int_{\bm\tau\perp\bm\omega}\dd\varphi\,(\cdot)
 =8\pi^2\langle\cdot\rangle_{\rm fr}.
 \label{eq:frame_average_definition}
\end{equation}
Substitution into Eq.~\eqref{eq:K_frame_before_average} finally gives
\begin{equation}
 \boxed{
 \mathcal K(\bm w,\bm z)
 =
 \frac{\kappa\sqrt\pi}{4}
 \exp\!\left[-\frac{(\bm z-\bm w)^2}{4}\right]
 \left\langle
 R_{\bm\omega}(\bm w,\bm z)
 R_{\bm\tau}(\bm w,\bm z)
 \right\rangle_{\rm fr}.
 }
 \label{eq:Kframe}
\end{equation}
Thus the prefactor $\kappa\sqrt\pi/4$ follows transparently from the
Dirichlet-form factor $\kappa/4$, the velocity Gaussian
$1/(8\pi^{3/2})$, and the frame volume $8\pi^2$.

For comparison with conventional spherical coordinates, write
\begin{equation}
 \bm\omega(\theta,\phi)
 =(
 \sin\theta\cos\phi,
 \sin\theta\sin\phi,
 \cos\theta),
\end{equation}
with $0\le\theta\le\pi$ and $0\le\phi<2\pi$.  Let
\begin{equation}
 \bm e_\theta
 =(
 \cos\theta\cos\phi,
 \cos\theta\sin\phi,
 -\sin\theta),
 \qquad
 \bm e_\phi=(-\sin\phi,\cos\phi,0),
\end{equation}
and parameterize the unit circle perpendicular to $\bm\omega$ by
$\bm\tau=\cos\psi\,\bm e_\theta+\sin\psi\,\bm e_\phi$,
$0\le\psi<2\pi$.  Then the frame average is
\begin{equation}
 \langle \Phi(\bm\omega,\bm\tau)\rangle_{\rm fr}
 =\frac{1}{8\pi^2}
 \int_0^{2\pi}\!\dd\phi
 \int_0^\pi\!\sin\theta\,\dd\theta
 \int_0^{2\pi}\!\dd\psi\,
 \Phi(\bm\omega(\theta,\phi),\bm\tau(\theta,\phi,\psi)).
 \label{eq:frame_average_spherical}
\end{equation}
This makes explicit that $\langle\cdot\rangle_{\rm fr}$ is simply the
normalized invariant average over ordered orthonormal two-frames.

For later coefficient extraction it is also useful to record the convergent
series
\begin{equation}
 R_{\bm\omega}
 =
 4\sum_{k=1}^{\infty}
 \frac{k!}{(2k)!}
 \left[
 A_{\bm\omega}^{2k}-D_{\bm\omega}^{2k}
 \right].
 \label{eq:Rseries}
\end{equation}
The non-polynomial entire function $F$ is the analytic signature of the loss
of Maxwell-molecule number grading.  Rotational covariance is now immediate:
\begin{equation}
 \mathcal K(R\bm w,R\bm z)=\mathcal K(\bm w,\bm z),
 \qquad R\in SO(3),
 \label{eq:rotcov}
\end{equation}
because the frame measure is invariant and every occurrence of the coherent
variables is through Euclidean scalar products.

\paragraph{What has been gained.}
The chain
\[
 \mathcal A
 \xleftrightarrow{\ \mathcal R_P\ }
 \widehat{\mathcal A}_{\rm HS}
 \xleftrightarrow{\ \mathcal R_B\ }
 \mathcal A_B
 \longleftrightarrow
 \mathcal K
\]
makes the status of Eq.~\eqref{eq:Kframe} explicit.  The function
$\mathcal K$ is the polarized Bargmann kernel of the exact linearized
hard-sphere operator, not an approximation or a separately postulated
generating object.  No Hermite or Sonine order has been selected.  Its Taylor
coefficients recover all Cartesian matrix elements, while
Eq.~\eqref{eq:rotcov} shows that the kernel itself already carries the
rotational symmetry.  The next step is therefore to use that symmetry to
separate angular and radial dynamics.

\section{Exact angular and radial Fock reduction}
\label{sec:angular_radial}

The coherent kernel \eqref{eq:Kframe} is exact but still written in Cartesian
Bargmann variables.  The purpose of this section is to extract from that
single three-dimensional kernel the exact radial operator carried by each
irreducible angular sector.  The construction has two logically separate
parts.  First, the ordinary Cartesian Fock ladder is reorganized into angular
lowest-weight states and a scalar pair ladder that moves only in the radial
direction.  Second, the rotationally invariant coherent kernel is projected
onto those radial towers.  Apart from the particular function
$\mathcal K$, the second step applies to any rotational scalar operator on the
three-dimensional oscillator Fock space.

\subsection{Composite pair creators and the radial tower}\label{sec:su11}

For hard spheres the oscillator number is not conserved, but angular momentum
is.  It is therefore useful to reorganize the Cartesian Hermite states by
three labels,
\[
 \ell=\text{angular tensor rank},\qquad
 m=\text{orientation within that rank},\qquad
 j=\text{radial excitation}.
\]
The familiar kinetic examples are already contained in this notation:
$(j,\ell)=(0,0)$ is density, $(0,1)$ is momentum, $(1,0)$ contains the
scalar energy invariant, $(0,2)$ is the traceless stress sector, and $(1,1)$
contains the heat-flux polynomial.  The label $m$ is passive for a rotational
scalar because all $2\ell+1$ orientations are equivalent.

Rotational covariance and Schur's lemma \cite{Hall2015} give
\begin{equation}
 \langle j,\ell,m|
 \widehat{\mathcal A}_{\rm HS}
 |j',\ell',m'\rangle
 =
 \delta_{\ell\ell'}\delta_{mm'}
 \mathcal A^{(\ell)}_{jj'}.
 \label{eq:block}
\end{equation}
Hence
\begin{equation}
 \widehat{\mathcal A}_{\rm HS}
 =
 \bigoplus_{\ell=0}^\infty
 \mathcal A^{(\ell)}\otimes I_{2\ell+1}.
 \label{eq:blocksum}
\end{equation}
The task is therefore to understand the infinite radial matrix
$\mathcal A^{(\ell)}$ without truncating it.
The radial ladder is already contained in the canonical Fock algebra, and we shall now derive it explicitly.  Define
the rotationally scalar quadratic combinations
\begin{equation}
 \widehat K_+
 =\frac12\hat a_\alpha^\dagger\hat a_\alpha^\dagger,
 \qquad
 \widehat K_-
 =\frac12\hat a_\alpha\hat a_\alpha,
 \qquad
 \widehat K_0
 =\frac12\left(\widehat N+\frac32\right).
 \label{eq:Kpm_abstract}
\end{equation}
They satisfy the standard defining commutation relations of the
$\mathfrak{su}(1,1)$ Lie algebra in this normalization \cite{Perelomov1986},
\begin{equation}
 [\widehat K_0,\widehat K_\pm]=\pm\widehat K_\pm,
 \qquad
 [\widehat K_-,\widehat K_+]=2\widehat K_0.
 \label{eq:su11_commutators}
\end{equation}
Thus the radial algebra is not an additional representation appended to Fock
space: it is generated internally by scalar pairs of the original creation
and annihilation operators.  Since $\widehat K_+$ creates two oscillator
quanta but carries zero angular momentum, it raises radial excitation while
leaving $(\ell,m)$ unchanged.
In the Bargmann realization operators \eqref{eq:Kpm_abstract} become
\begin{equation}
 K_+=\frac12\bm z^2,
 \qquad
 K_-=\frac12\Delta_{\bm z},
 \qquad
 K_0=\frac12\left(
 \bm z\cdot\nabla_{\bm z}+\frac32
 \right),
 \label{eq:Kpm}
\end{equation}
where
$\nabla_{\bm z}=(\partial_{z_1},\partial_{z_2},\partial_{z_3})$ and
$\Delta_{\bm z}=\partial_{z_\alpha}\partial_{z_\alpha}$.
Let $H_{\ell m}(\bm z)$ be a real Fock-orthonormal solid harmonic: it is
homogeneous of degree $\ell$ and harmonic.  Consequently
\begin{equation}
 K_-H_{\ell m}=0,
 \qquad
 K_0H_{\ell m}
 =\left(\frac\ell2+\frac34\right)H_{\ell m}.
 \label{eq:harmonic_lowest_weight}
\end{equation}
Thus $H_{\ell m}$ is the lowest-weight vector, or radial ``vacuum'', for the
fixed-$(\ell,m)$ tower.  It is a vacuum only for the radial pair annihilator;
for $\ell>0$ it is of course not the global Fock vacuum.

Put
\begin{equation}
 k_\ell=\frac\ell2+\frac34,
 \qquad
 \beta_\ell:=2k_\ell=\ell+\frac32.
 \label{eq:k_beta}
\end{equation}
The normalized positive-discrete-series ladder is
\begin{align}
 \widehat K_+|j,\ell,m\rangle
 &=\sqrt{(j+1)(j+\beta_\ell)}
 |j+1,\ell,m\rangle,
 \label{eq:Kplus_ladder}
 \\
 \widehat K_-|j,\ell,m\rangle
 &=\sqrt{j(j+\beta_\ell-1)}
 |j-1,\ell,m\rangle,
 \label{eq:Kminus_ladder}
 \\
 \widehat K_0|j,\ell,m\rangle
 &=\left(j+\frac{\beta_\ell}{2}\right)|j,\ell,m\rangle.
 \label{eq:K0_ladder}
\end{align}
Equivalently, starting from the normalized lowest-weight state,
\begin{equation}
 |j,\ell,m\rangle
 =
 \frac{\widehat K_+^j}{\sqrt{j!(\beta_\ell)_j}}
 |0,\ell,m\rangle .
 \label{eq:radial_ladder_state}
\end{equation}
Here
\begin{equation}
 (a)_j
 =
 \frac{\Gamma(a+j)}{\Gamma(a)}
 =a(a+1)\cdots(a+j-1),
 \qquad
 (a)_0=1,
 \label{eq:pochhammer_def}
\end{equation}
is the rising Pochhammer symbol.  Equation~\eqref{eq:radial_ladder_state} is
the structural origin of the radial index $j$: $j$ counts applications of a
scalar pair creator.  In particular the total oscillator number on this state
is
\begin{equation}
 N=2j+\ell.
 \label{eq:N_2jell}
\end{equation}

Applying the Bargmann realization to
Eq.~\eqref{eq:radial_ladder_state}, using
$\mathcal R_B|0,\ell,m\rangle=H_{\ell m}(\bm z)$ and
$K_+=\bm z^2/2$, gives rather than assumes the normalized radial polynomial:
\begin{equation}
 \mathcal R_B|j,\ell,m\rangle
 =
 \frac{(\bm z^2)^jH_{\ell m}(\bm z)}
 {\sqrt{4^j j!(\ell+3/2)_j}}.
 \label{eq:normalized_state}
\end{equation}
The factor $4^j$ comes from the $1/2$ in the composite creator, while
$j!(\ell+3/2)_j$ is the $\mathfrak{su}(1,1)$ ladder normalization.  Thus the radial polynomial sequence that becomes the conventional
Burnett/Sonine basis after the intertwining constructed below
\cite{ChapmanCowling1970,Kumar1966} is not imposed from outside; it is generated
inside the canonical oscillator Fock space.

There is also a useful kinetic interpretation.  The pair creator
$\widehat K_+$ is not itself the generator of a finite dilation, but it is one
half of the standard radial squeezing generator
$\widehat K_+-\widehat K_-$ \cite{Perelomov1986}.  Exponentiating the latter changes the width of
the Gaussian vacuum, and the resulting squeezed Gaussian expands over the
states $\widehat K_+^j|0\rangle$.  In the scalar sector the first radial
excitation is proportional, in the Hermite realization, to
$|\bm v|^2-3$, the tangent direction to a change of Maxwellian temperature.
An individual $j$ state is not a Maxwellian at a higher temperature.  Its
mean radial energy does increase with $j$, but the more important role of the
tower is to provide the basis in which finite changes of radial scale can be
assembled coherently.  In particular, alternating superpositions of many
pair-created states can squeeze the physical radial energy inward even while
their support moves to large $j$.  This distinction between the location of an
individual basis ket and the scale represented by a coherent radial packet
will become central in the study of global asymptotics below.

At fixed $\ell$ the operators \eqref{eq:Kpm_abstract} therefore realize the
positive discrete series $\mathcal D^+_{k_\ell}$ of $\mathfrak{su}(1,1)$
\cite{Perelomov1986}, and the full Fock space has
the representation decomposition
\begin{equation}
 \mathscr H
 \cong
 \bigoplus_{\ell=0}^\infty
 \mathcal D^+_{k_\ell}\otimes\mathcal V_\ell.
 \label{eq:su11decomp}
\end{equation}
Here $\mathcal V_\ell$ is the $(2\ell+1)$-dimensional irreducible $SO(3)$
space.  The symbol $\cong$ denotes a unitary decomposition of the
representation space; it does not assert an $SO(3)\times \mathfrak{su}(1,1)$ symmetry of
the collision operator.  $SO(3)$ is the hard-sphere symmetry, whereas
$\mathfrak{su}(1,1)$ is the internal radial dynamical algebra that organizes each
angular sector.

\subsection{One-variable radial Bargmann space and extraction of the radial operator}

We now carry out the main task of this section: starting from the already
known coherent kernel $\mathcal K(\bm w,\bm z)$ \eqref{eq:Kframe}, extract the exact radial
operator for a fixed angular momentum $\ell$.  One may choose any magnetic
index $m$ to identify a single radial copy, because Eq.~\eqref{eq:block}
shows that the same operator acts for every $m$; the sum over $m$ is restored
only when the complete three-dimensional kernel is reconstructed.

Set
\begin{equation}
 \xi=\frac{\bm z^2}{2},
 \qquad
 e_j^{(\ell)}(\xi)
 =\frac{\xi^j}{\sqrt{j!(\beta_\ell)_j}}.
 \label{eq:radial_basis_def}
\end{equation}
Then Eq.~\eqref{eq:normalized_state} factorizes as
\begin{equation}
 \mathcal R_B|j,\ell,m\rangle
 =H_{\ell m}(\bm z)e_j^{(\ell)}(\xi).
 \label{eq:radial_basis}
\end{equation}
Because $\mathcal R_B$ is unitary and $H_{\ell m}$ is normalized, the fixed
$(\ell,m)$ Fock inner product induces on the span of the functions
$e_j^{(\ell)}$ the inner product
$\langle e_j^{(\ell)},e_{j'}^{(\ell)}\rangle=\delta_{jj'}$.  We define the
one-variable radial Hilbert space $\mathscr H_\ell^{\rm rad}$ as the completion
of this span in that inherited inner product.  An explicit measure realizing
the same inner product is given below.  Every vector in one fixed $(\ell,m)$ copy is
therefore
\begin{equation}
 \Phi_{\ell m}(\bm z)=H_{\ell m}(\bm z)f(\xi),
 \qquad f\in\mathscr H_\ell^{\rm rad}.
\end{equation}

The reproducing kernel of this radial Hilbert space is, by definition, the
sum over an orthonormal basis \cite{Aronszajn1950}:
\begin{align}
 \mathscr R_\ell(\xi,\bar\eta)
 &:={}
 \sum_{j=0}^\infty
 e_j^{(\ell)}(\xi)\overline{e_j^{(\ell)}(\eta)}
 \nonumber\\
 &=\sum_{j=0}^\infty
 \frac{(\xi\bar\eta)^j}{j!(\beta_\ell)_j}
 ={}_0F_1(;\beta_\ell;\xi\bar\eta).
 \label{eq:radial_RK}
\end{align}
Here
\begin{equation}
 {}_0F_1(;b;s)
 =\sum_{r=0}^{\infty}\frac{s^r}{(b)_r r!}
 \label{eq:0F1_definition_main}
\end{equation}
is the confluent hypergeometric limit function \cite{NIST2010}.  Thus
$\mathscr R_\ell$ is the kernel of the identity operator on the radial space,
not another collision kernel.

A measure realizing the radial inner product is
\begin{equation}
 \dd\mu_\ell(\eta)
 =
 \frac{2}{\pi\Gamma(\beta_\ell)}
 |\eta|^{\beta_\ell-1}
 K_{\beta_\ell-1}(2|\eta|)\dd^2\eta,
 \label{eq:radial_measure}
\end{equation}
where $K_\nu$ is the modified Bessel function of the second kind.  Its moments, obtained from the standard Mellin integral for $K_\nu$ \cite{NIST2010},
are
\begin{equation}
 \int_{\C}\bar\eta^{\,j}\eta^{j'}\dd\mu_\ell(\eta)
 =\delta_{jj'}j!(\beta_\ell)_j.
 \label{eq:measure_moments}
\end{equation}
Thus Eq.~\eqref{eq:measure_moments} verifies directly that the functions
$e_j^{(\ell)}$ are orthonormal in $L^2(\C,\dd\mu_\ell)$, exactly as inherited
from the Fock inner product.  Consequently,
\begin{equation}
 f(\xi)
 =\int_{\C}\mathscr R_\ell(\xi,\bar\eta)f(\eta)\dd\mu_\ell(\eta).
 \label{eq:radial_reproducing_property}
\end{equation}

We next expand the collision kernel in the normalized angular-radial basis.
Using Eq.~\eqref{eq:block} directly in the coherent-state expansion gives
\begin{align}
 \mathcal K(\bm w,\bm z)
 ={}&
 \sum_{\ell=0}^\infty\sum_{m=-\ell}^{\ell}
 \sum_{j,j'=0}^\infty
 \mathcal A^{(\ell)}_{jj'}
 H_{\ell m}(\bm w)H_{\ell m}(\bm z)
 e_j^{(\ell)}\!\left(\frac{\bm w^2}{2}\right)
 e_{j'}^{(\ell)}\!\left(\frac{\bm z^2}{2}\right).
 \label{eq:kernel_basis_expansion}
\end{align}
This equation already shows what has to be separated.  Define the angular
addition kernel
\begin{equation}
 \Pi_\ell(\bm w,\bm z)
 :=\sum_{m=-\ell}^{\ell}H_{\ell m}(\bm w)H_{\ell m}(\bm z)
 \label{eq:Pi_def}
\end{equation}
and the radial operator kernel
\begin{equation}
 \boxed{
 \mathscr A_\ell(\xi,\eta)
 :=
 \sum_{j,j'=0}^\infty
 \mathcal A^{(\ell)}_{jj'}
 e_j^{(\ell)}(\xi)e_{j'}^{(\ell)}(\eta).
 }
 \label{eq:radial_kernel_definition}
\end{equation}
Then \eqref{eq:kernel_basis_expansion} becomes the exact decomposition
\begin{equation}
 \mathcal K(\bm w,\bm z)
 =\sum_{\ell=0}^\infty
 \Pi_\ell(\bm w,\bm z)
 \mathscr A_\ell\!\left(\frac{\bm w^2}{2},\frac{\bm z^2}{2}\right).
 \label{eq:kernel_radial_decomp}
\end{equation}
Thus Eq.~\eqref{eq:kernel_radial_decomp} is not an additional ansatz inferred
from symmetry: it is the coherent-kernel form of the block decomposition
\eqref{eq:block}.

The standard spherical-harmonic addition theorem \cite{NIST2010}, together
with homogeneity of the solid harmonics, implies that $\Pi_\ell$ is a zonal
harmonic and hence is proportional to the degree-$\ell$ Legendre polynomial.
The Fock normalization of the solid harmonics fixes the proportionality
constant, giving the corresponding solid-harmonic addition formula
\begin{equation}
 \Pi_\ell(\bm w,\bm z)
 =
 \frac{(\bm w^2\bm z^2)^{\ell/2}}{(2\ell-1)!!}
 P_\ell\!\left(
 \frac{\bm w\cdot\bm z}{\sqrt{\bm w^2\bm z^2}}
 \right).
 \label{eq:harmonic_kernel}
\end{equation}
Here $P_\ell$ denotes the standard Legendre polynomial of degree $\ell$.
The prefactor is therefore a normalization consequence, not a convention
introduced at the projection stage.  For example,
\begin{equation}
 \Pi_1(\bm w,\bm z)=\bm w\cdot\bm z,
 \qquad
 \Pi_2(\bm w,\bm z)
 =\frac12(\bm w\cdot\bm z)^2-\frac16\bm w^2\bm z^2.
 \label{eq:Pi_low_orders}
\end{equation}

It remains to invert \eqref{eq:kernel_radial_decomp}.  We first work on the
real slice $\bm w,\bm z\in\mathbb R^3$, with nonzero norms.  There a pair
$(\bm w,\bm z)$ modulo a simultaneous $SO(3)$ rotation is completely
specified by the three invariants
\begin{equation}
 \xi=\frac{\bm w^2}{2},\qquad
 \eta=\frac{\bm z^2}{2},\qquad
 \mu=\frac{\bm w\cdot\bm z}{\sqrt{\bm w^2\bm z^2}}.
 \label{eq:orbit_invariants}
\end{equation}
Thus, on this real slice, no generality is lost by choosing the canonical
representative
\begin{equation}
 \bm w=\sqrt{2\xi}\,\bm e_3,
 \qquad
 \bm z=\sqrt{2\eta}\,\bm e_\mu,
 \qquad
 \bm e_\mu:=\mu\bm e_3+\sqrt{1-\mu^2}\bm e_1.
 \label{eq:e_mu_definition}
\end{equation}
For this representative, Eq.~\eqref{eq:harmonic_kernel} gives
\begin{equation}
 \mathcal K\!\left(\sqrt{2\xi}\bm e_3,
                    \sqrt{2\eta}\bm e_\mu\right)
 =\sum_{\ell=0}^\infty
 \frac{2^\ell(\xi\eta)^{\ell/2}}{(2\ell-1)!!}
 P_\ell(\mu)\mathscr A_\ell(\xi,\eta).
 \label{eq:kernel_legendre_series}
\end{equation}
Multiplying by $P_L(\mu)$, integrating from $-1$ to $1$, and using
\begin{equation}
 \int_{-1}^{1}P_L(\mu)P_\ell(\mu)\dd\mu
 =\frac{2}{2\ell+1}\delta_{L\ell}
 \label{eq:Legendre_orthogonality}
\end{equation}
which is the standard Legendre orthogonality relation \cite{NIST2010},
now gives every normalization factor in the projection formula:
\begin{equation}
 \begin{aligned}
 \mathscr A_\ell(\xi,\eta)
 ={}&
 \frac{(2\ell+1)(2\ell-1)!!}
 {2^{\ell+1}(\xi\eta)^{\ell/2}}
 \int_{-1}^1P_\ell(\mu)
 \mathcal K\!\left(
 \sqrt{2\xi}\bm e_3,
 \sqrt{2\eta}\bm e_\mu
 \right)\dd\mu.
 \end{aligned}
 \label{eq:radial_projection}
\end{equation}
This projection has been derived on the real slice $\xi,\eta>0$ and
$-1\leq\mu\leq1$.  Since the coherent kernel is entire in the polarized
Bargmann variables and its angular components are fixed by their homogeneous
Taylor coefficients, the identity extends uniquely to the full complex
Bargmann realization by analytic continuation (equivalently, polarization).
The apparent factor $(\xi\eta)^{-\ell/2}$ is harmless: the $\ell$th
Legendre coefficient of an analytic rotationally invariant Bargmann kernel
starts at angular degree $\ell$ and therefore contains the compensating
factor $(\xi\eta)^{\ell/2}$.  The quotient extends analytically to the axes.

Finally, inserting the explicit radial basis into
Eq.~\eqref{eq:radial_kernel_definition} gives
\begin{equation}
 \boxed{
 \mathscr A_\ell(\xi,\eta)
 =\sum_{j,j'=0}^\infty
 \mathcal A^{(\ell)}_{jj'}
 \frac{\xi^j\eta^{j'}}
 {\sqrt{j!(\beta_\ell)_j\,j'!(\beta_\ell)_{j'}}}.
 }
 \label{eq:radial_generating}
\end{equation}
Thus the radial matrix elements are the normalized Taylor coefficients of the
exact kernel.  It is useful to record both coefficient and derivative forms:
\begin{align}
 \mathcal A^{(\ell)}_{jj'}
 &=\sqrt{j!(\beta_\ell)_j\,j'!(\beta_\ell)_{j'}}
 [\xi^j\eta^{j'}]\mathscr A_\ell(\xi,\eta)
 \label{eq:coefficient_extraction_coefficient}
 \\
 &=\sqrt{\frac{(\beta_\ell)_j(\beta_\ell)_{j'}}{j!j'!}}
 \left.
 \partial_\xi^j\partial_\eta^{j'}\mathscr A_\ell(\xi,\eta)
 \right|_{\xi=\eta=0}.
 \label{eq:coefficient_extraction}
\end{align}
At this stage $\mathcal A^{(\ell)}_{jj'}$ is simply the fixed-$\ell$ Fock
matrix.  Its
identification with the conventional normalized Burnett/Sonine collision
matrix will be made only after the exact physical radial intertwiner is
established in the next section.

On the natural operator domain, the radial operator itself acts as
\begin{equation}
 (\mathfrak A_\ell f)(\xi)
 =\int_{\C}\mathscr A_\ell(\xi,\bar\eta)f(\eta)\dd\mu_\ell(\eta).
 \label{eq:radial_action}
\end{equation}
Comparing Eq.~\eqref{eq:radial_reproducing_property} to \eqref{eq:radial_action}, we recognize that the role of the reproducing kernel $\mathscr R_\ell$ and the collision kernel
$\mathscr A_\ell$ is sharply distinct: the former represents the
identity on the radial Hilbert space, while the latter represents the exact
hard-sphere radial collision operator.

\section{Jacobi functional calculus and Bargmann--Burnett intertwining}
\label{sec:jacobi_intertwining}

The exact radial kernel is complete, but by itself it does not yet separate
the easy and difficult parts of the hard-sphere operator.  The classical
loss--gain decomposition suggests the next question: can the unbounded
collision-frequency part be recognized as a simple operator in the radial
Fock variables?  The answer is yes.  The physical scalar $v^2/2$ becomes a
canonical Jacobi operator, so the whole collision frequency becomes a
spectral function of that Jacobi operator.  This is the step that turns the
exact radialization into a useful spectral representation.

\subsection{The exact radial loss operator}

The loss part is simpler than the gain part and should be kept that way.  We
first derive the physical collision frequency and only then translate the
scalar variable $x=|\bm v|^2/2$ into the radial Fock language.

For hard spheres, the loss term of the linearized operator is multiplication
by the equilibrium mean relative speed,
\begin{equation}
 \nu(\bm v)
 =\kappa\pi\int_{\R^3}|\bm u-\bm v|M(\bm u)\dd\bm u.
 \label{eq:nu_mean_relative_speed}
\end{equation}
This is the loss contribution obtained from the Carleman linearization of
Eq.~\eqref{eq:carleman}.  Let $r=|\bm v|$, write $s=|\bm u|$, and choose the
polar axis along $\bm v$.  If $\mu$ is the cosine of the polar angle, then
\begin{equation}
 \int_{-1}^{1}\sqrt{r^2+s^2-2rs\mu}\dd\mu
 =\frac{(r+s)^3-|r-s|^3}{3rs}.
 \label{eq:relative_speed_angular_integral}
\end{equation}
Splitting the remaining Gaussian radial integral at $s=r$ and introducing the
physical radial-energy variable
\begin{equation}
 x:=\frac{r^2}{2}
\end{equation}
gives directly
\begin{equation}
 \nu(\bm v)=\kappa\pi G(x),
 \qquad
 G(x)=
 \sqrt{\frac2\pi}\e^{-x}
 +\left(\sqrt{2x}+\frac1{\sqrt{2x}}\right)
 \operatorname{erf}(\sqrt x).
 \label{eq:G_reintroduced}
\end{equation}
The value at $x=0$ is understood by continuity.  The function $G$ is the mean
distance from a fixed velocity to a centered Gaussian velocity.  The map
$\bm v\mapsto\int|\bm u-\bm v|M(\bm u)\dd\bm u$ is convex as an
average of the convex norm and is radial by isotropy; hence its minimum is at
$\bm v=0$, i.e. at $x=0$.  This gives the global minimum of $G$ used below.

Let $M_x$ denote multiplication by the physical radial-energy variable
$x=|\bm v|^2/2$ in the Hermite velocity realization,
$(M_x\phi)(\bm v)=(|\bm v|^2/2)\phi(\bm v)$.  We now represent this operator
in Fock space.  Since
multiplication by $v_\alpha$ is intertwined with
$\hat a_\alpha^\dagger+\hat a_\alpha$, one has
\begin{equation}
 \widehat X
 :=\mathcal R_P^{-1}M_x\mathcal R_P
 =\frac12
 (\hat a_\alpha^\dagger+\hat a_\alpha)
 (\hat a_\alpha^\dagger+\hat a_\alpha)
 =\widehat K_++\widehat K_-+2\widehat K_0.
 \label{eq:X_Fock}
\end{equation}
The full action on a radial ket now follows immediately from the ladder
relations \eqref{eq:Kplus_ladder}--\eqref{eq:K0_ladder}:
\begin{equation}
 \begin{aligned}
 \widehat X|j,\ell,m\rangle
 ={}&\sqrt{j(j+\beta_\ell-1)}\,|j-1,\ell,m\rangle
 +(2j+\beta_\ell)|j,\ell,m\rangle
+\sqrt{(j+1)(j+\beta_\ell)}\,|j+1,\ell,m\rangle .
 \end{aligned}
 \label{eq:J_action_full}
\end{equation}
Thus the restriction of $\widehat X$ to a fixed angular sector is the
self-adjoint Jacobi operator $J_\ell$ with entries
\begin{align}
 (J_\ell)_{jj}&=2j+\beta_\ell,
 \label{eq:Jdiag}
 \\
 (J_\ell)_{j,j+1}&=\sqrt{(j+1)(j+\beta_\ell)},
 \label{eq:Joff}
 \\
 (J_\ell)_{j+1,j}&=\sqrt{(j+1)(j+\beta_\ell)}.
\end{align}

Nothing special-function-theoretic
has been assumed here: $J_\ell$ is simply the matrix of physical radial
energy $x=|\bm v|^2/2$ in the normalized pair-created tower.

Let $\widehat X_\ell$ denote the restriction of $\widehat X$ to one
fixed $(\ell,m)$ radial module.  The exact Fock-space loss block is
\begin{equation}
 \widehat{\mathcal N}_\ell
 :=\kappa\pi G(\widehat X_\ell),
 \label{eq:loss_Fock}
\end{equation}
and its matrix in the normalized radial ket basis is
\begin{equation}
 \mathcal N^{(\ell)}=\kappa\pi G(J_\ell).
 \label{eq:loss_J}
\end{equation}
Here $G(J_\ell)$ denotes the spectral functional calculus of the self-adjoint
Jacobi operator $J_\ell$ \cite{ReedSimon1980}.  The Bargmann--Burnett intertwining proved below makes
this notation concrete: $J_\ell$ becomes multiplication by $x$, and
$G(J_\ell)$ becomes multiplication by the same scalar function $G(x)$, on its
natural domain.
For comparison with the one-variable kernel derived in
Sec.~\ref{sec:angular_radial}, restrict the three-dimensional Bargmann operators
$K_\pm,K_0$ of Eq.~\eqref{eq:Kpm} to functions
$H_{\ell m}(\bm z)f(\xi)$.  They induce the explicitly $\ell$-dependent
one-variable operators $K_\pm^{(\ell)},K_0^{(\ell)}$ through
\begin{equation}
 K_\sigma[H_{\ell m}(\bm z)f(\xi)]
 =H_{\ell m}(\bm z)(K_\sigma^{(\ell)}f)(\xi),
 \qquad
 \sigma\in\{+,-,0\}.
 \label{eq:Ksigma_radial_restriction}
\end{equation}
namely
\begin{equation}
 K_+^{(\ell)}=\xi,
 \qquad
 K_-^{(\ell)}=\xi\partial_\xi^2+\beta_\ell\partial_\xi,
 \qquad
 K_0^{(\ell)}=\xi\partial_\xi+\frac{\beta_\ell}{2}.
 \label{eq:Kpm_radial_one_variable}
\end{equation}
Thus these are not new generators but the one-variable radial parts of the
operators in Eq.~\eqref{eq:Kpm}; the $\ell$ dependence enters through
$\beta_\ell$.  Hence
\begin{equation}
 \mathcal J_\ell
 =\xi\partial_\xi^2
 +(2\xi+\beta_\ell)\partial_\xi
 +\xi+\beta_\ell.
 \label{eq:Jdiff}
\end{equation}
Consequently the complete fixed-$\ell$ collision operator can be written
\begin{equation}
 \boxed{
 \mathfrak A_\ell
 =\kappa\pi G(\mathcal J_\ell)-\mathfrak C_\ell,
 }
 \label{eq:radial_operator}
\end{equation}
on $\Dom(\mathfrak A_\ell)=\Dom(G(\mathcal J_\ell))$.  The compact operator
$\mathfrak C_\ell$ is the fixed-$\ell$ gain block.  The Bargmann--Burnett
intertwiner below will make the functional calculus completely concrete by
turning $\mathcal J_\ell$ into multiplication by $x$.

The threshold and the two asymptotic forms needed later are
\begin{equation}
 G(0)=2\sqrt{\frac2\pi},
 \qquad
 \nu_{\min}=2\kappa\sqrt{2\pi},
 \label{eq:numin}
\end{equation}
\begin{align}
 G(x)
 &=2\sqrt{\frac2\pi}
 +\frac23\sqrt{\frac2\pi}\,x+O(x^2),
 \qquad x\downarrow0,
 \label{eq:Gsmall}
 \\
 G(x)
 &=\sqrt{2x}+\frac1{\sqrt{2x}}
 +O(\e^{-x}x^{-2}),
 \qquad x\to\infty.
 \label{eq:Glarge}
\end{align}

\subsection{Exact Bargmann--Burnett transform}

The one-variable Bargmann operator is useful analytically, but it should not
be mistaken for a new kinetic model.  We therefore connect it explicitly to
the classical physical radial coordinate.  The transform below is unitary and
coefficient preserving: it shows that the analytic Bargmann equation and the
Laguerre/Burnett equation are two exact realizations of the same fixed-$\ell$
operator.

Let $L_j^{(\alpha)}(x)$ denote the generalized Laguerre polynomial of degree
$j$ and parameter $\alpha$.  In the physical radial coordinate $x=v^2/2$,
define
\begin{equation}
 u_j^{(\ell)}(x)
 =
 (-1)^j
 \sqrt{\frac{j!}{(\beta_\ell)_j}}\,
 L_j^{(\beta_\ell-1)}(x).
 \label{eq:burnett_radial}
\end{equation}
These functions form a complete orthonormal basis with respect to the
Laguerre weight \cite{Chihara1978,NIST2010},
\begin{equation}
 \dd\rho_\ell(x)
 =
 \frac{x^{\beta_\ell-1}\e^{-x}}{\Gamma(\beta_\ell)}\dd x.
 \label{eq:rho}
\end{equation}
The coefficientwise transform from the radial Bargmann basis to the Burnett
basis has kernel, using the Laguerre generating function \cite{NIST2010},
\begin{equation}
 \begin{aligned}
 \mathscr B_\ell(\xi,x)
 &:=
 \sum_{j=0}^\infty
 e_j^{(\ell)}(\xi)u_j^{(\ell)}(x)
 \\
 &=
 \boxed{
 \e^{-\xi}\,
 {}_0F_1(; \beta_\ell;x\xi)
 }.
 \end{aligned}
 \label{eq:Bkernel}
\end{equation}

\begin{proposition}[Radial intertwining]
Let $\mathcal U_\ell$ be the unitary transform defined by
$\mathcal U_\ell e_j^{(\ell)}=u_j^{(\ell)}$.  In the Burnett radial
realization the same $M_x$ acts as $(M_xg)(x)=xg(x)$.  Then
\begin{equation}
 \mathcal U_\ell\mathcal J_\ell
 =
 M_x\mathcal U_\ell,
 \label{eq:Jintertwine}
\end{equation}
and, for every bounded continuous function $h$ (and, more generally, for a
measurable $h$ on its natural spectral domain),
\begin{equation}
 \mathcal U_\ell h(\mathcal J_\ell)\mathcal U_\ell^{-1}
 =M_{h(x)},
 \label{eq:functional_intertwine}
\end{equation}
where $M_{h(x)}$ denotes multiplication by $h(x)$.  In particular, for the
unbounded hard-sphere function $G$ the equality is understood on the domain
specified in Appendix~\ref{app:operator_domain}:
\begin{equation}
 \mathcal U_\ell G(\mathcal J_\ell)\mathcal U_\ell^{-1}
 =M_{G(x)}.
\end{equation}
\end{proposition}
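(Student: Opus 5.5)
The plan is to reduce everything to an identity between two Jacobi matrices, and then use uniqueness of self-adjoint extensions together with the spectral theorem.

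\emph{Unitarity of $\mathcal U_\ell$.} $\mathcal U_\ell$ maps one orthonormal basis onto another. The $e_j^{(\ell)}$ are an orthonormal basis of $\mathscr H_\ell^{\rm rad}$ by construction, via Eq.~\eqref{eq:measure_moments}. The $u_j^{(\ell)}$ are orthonormal in $L^2(\dd\rho_\ell)$, and they are complete because the Laguerre moment problem is determinate (the weight decays like $\e^{-x}$, so polynomials are dense). Hence $\mathcal U_\ell$ extends to a unitary operator. On the coefficient level it is the identity, consistent with the kernel \eqref{eq:Bkernel}.

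\emph{Algebraic intertwining on the polynomial core.} On the one hand, by \eqref{eq:Kpm_radial_one_variable}, $\mathcal J_\ell=K_+^{(\ell)}+K_-^{(\ell)}+2K_0^{(\ell)}$ realizes $\widehat X$ restricted to one radial module. Its action on $e_j^{(\ell)}$ is then read off from the ladder relations \eqref{eq:Kplus_ladder}--\eqref{eq:K0_ladder}, or checked directly on $\xi^j$. The result is
\[
\mathcal J_\ell e_j^{(\ell)}=\sqrt{(j+1)(j+\beta_\ell)}\,e_{j+1}^{(\ell)}+(2j+\beta_\ell)e_j^{(\ell)}+\sqrt{j(j+\beta_\ell-1)}\,e_{j-1}^{(\ell)},
\]
which is the Jacobi matrix $J_\ell$ of \eqref{eq:Jdiag}--\eqref{eq:Joff}.

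On the other hand, the Laguerre three-term recurrence with $\alpha=\beta_\ell-1$ reads
\[
xL_j^{(\alpha)}=-(j+1)L_{j+1}^{(\alpha)}+(2j+\alpha+1)L_j^{(\alpha)}-(j+\alpha)L_{j-1}^{(\alpha)}.
\]
Inserting the sign $(-1)^j$ and the normalization $\sqrt{j!/(\beta_\ell)_j}$ from \eqref{eq:burnett_radial}, the two minus signs are absorbed. One obtains $x u_j^{(\ell)}$ with exactly the same coefficients $\sqrt{(j+1)(j+\beta_\ell)}$, $2j+\beta_\ell$ and $\sqrt{j(j+\beta_\ell-1)}$. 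Therefore $\mathcal U_\ell\mathcal J_\ell p=M_x\mathcal U_\ell p$ for every finite linear combination $p$ of basis vectors, which gives \eqref{eq:Jintertwine} on the polynomial core.

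\emph{Passage to self-adjoint operators and functional calculus.} This is the step I expect to need the most care, because \eqref{eq:Jintertwine} must hold as an identity of unbounded self-adjoint operators and not only on polynomials.
\begin{itemize}
\item $M_x$ on $L^2(\dd\rho_\ell)$, with its maximal domain, is self-adjoint.
\item Polynomials are a core for $M_x$: density of polynomials in the graph norm follows from determinacy of the moment problem for $x^2\dd\rho_\ell$, which again has exponential decay.
\item Alternatively, $J_\ell$ is essentially self-adjoint on finitely supported sequences by the Carleman criterion $\sum_j (J_\ell)_{j,j+1}^{-1}=\sum_j [(j+1)(j+\beta_\ell)]^{-1/2}=\infty$.
\end{itemize}
Either route shows that the closure of $\mathcal J_\ell$ on the polynomial core is the unique self-adjoint operator there, and it is unitarily equivalent through $\mathcal U_\ell$ to $M_x$.

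Once this is established, uniqueness of the spectral resolution gives $\mathcal U_\ell E_{\mathcal J_\ell}(\cdot)\mathcal U_\ell^{-1}=E_{M_x}(\cdot)=M_{\mathbf 1_{(\cdot)}}$. Hence $\mathcal U_\ell h(\mathcal J_\ell)\mathcal U_\ell^{-1}=M_{h(x)}$, first for bounded Borel $h$ and then for measurable $h$ on the domain $\{g:\int|h|^2|g|^2\dd\rho_\ell<\infty\}$ pulled back by $\mathcal U_\ell$. For $h=G$, which is continuous, positive and grows like $\sqrt{2x}$ by \eqref{eq:Glarge}, this domain is the one recorded in Appendix~\ref{app:operator_domain}. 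The identity then gives $\mathcal U_\ell G(\mathcal J_\ell)\mathcal U_\ell^{-1}=M_{G(x)}$.
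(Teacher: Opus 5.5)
Your proposal is correct and follows essentially the same route as the paper. Both arguments match the normalized Laguerre three-term recurrence to the Jacobi action of $\mathcal J_\ell$ on the polynomial core, then invoke unitarity of $\mathcal U_\ell$ and the spectral theorem. Your explicit essential self-adjointness step is a welcome addition, whether via Carleman's criterion $\sum_j[(j+1)(j+\beta_\ell)]^{-1/2}=\infty$ or via graph-norm density of polynomials under the exponentially decaying Laguerre weight. The paper does not state this step; it passes directly from the identity on the polynomial core to the functional-calculus statement.
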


\begin{proof}
The generalized Laguerre recurrence \cite{Chihara1978,NIST2010}, with the normalization and sign in
Eq.~\eqref{eq:burnett_radial}, reads
\begin{equation}
 \begin{aligned}
 x u_j^{(\ell)}(x)
 ={}&\sqrt{j(j+\beta_\ell-1)}\,u_{j-1}^{(\ell)}(x)
 +(2j+\beta_\ell)u_j^{(\ell)}(x)
 +\sqrt{(j+1)(j+\beta_\ell)}\,u_{j+1}^{(\ell)}(x).
 \end{aligned}
 \label{eq:Laguerre_Jacobi_recurrence}
\end{equation}
This is exactly the ket action \eqref{eq:J_action_full}.  Therefore
$\mathcal U_\ell\mathcal J_\ell=M_x\mathcal U_\ell$ on the polynomial core.
The kernel form \eqref{eq:Bkernel} is the generating-function version of the
same statement; equivalently,
\begin{equation}
 \mathcal J_{\ell,\xi}
 \left[\e^{-\xi}{}_0F_1(;\beta_\ell;x\xi)\right]
 =x\left[\e^{-\xi}{}_0F_1(;\beta_\ell;x\xi)\right].
\end{equation}
Since $\mathcal U_\ell$ is unitary, the functional-calculus statement follows
from the spectral theorem, with the natural domain qualification for
unbounded $h$; see, for example, \cite{ReedSimon1980}.
\end{proof}

The same map intertwines the complete fixed-$\ell$ collision operator, not
only its loss part, because both radial operators are realizations of the same
abstract Fock block.  To make this explicit, let
$\mathcal R_{B,\ell}$ denote the restriction of the Bargmann realization to
one fixed $(\ell,m)$ radial module, and let $\mathcal R_{{\rm Burn},\ell}$
map the same abstract kets $|j,\ell,m\rangle$ to the normalized Burnett radial
functions $u_j^{(\ell)}$.  By construction,
\begin{equation}
 \mathcal U_\ell
 =\mathcal R_{{\rm Burn},\ell}\mathcal R_{B,\ell}^{-1}.
 \label{eq:U_as_change_of_realization}
\end{equation}
If $\widehat{\mathcal A}_\ell$ is the abstract fixed-$\ell$ collision block,
with gain part $\widehat{\mathcal C}_\ell$ and loss part
$\widehat{\mathcal N}_\ell$, then
\begin{equation}
 \mathfrak A_\ell
 =\mathcal R_{B,\ell}\widehat{\mathcal A}_\ell
  \mathcal R_{B,\ell}^{-1},
 \qquad
 \mathfrak A_\ell^{\rm Burn}
 =\mathcal R_{{\rm Burn},\ell}\widehat{\mathcal A}_\ell
  \mathcal R_{{\rm Burn},\ell}^{-1}.
 \label{eq:two_radial_realizations}
\end{equation}
The same convention applies to the gain block:
$\mathfrak C_\ell=\mathcal R_{B,\ell}\widehat{\mathcal C}_\ell
\mathcal R_{B,\ell}^{-1}$; its radial ket matrix is
$\mathcal C^{(\ell)}$.
Therefore realization covariance immediately gives the full intertwining
\begin{equation}
 \boxed{
 \mathcal U_\ell\mathfrak A_\ell
 =\mathfrak A_\ell^{\rm Burn}\mathcal U_\ell.
 }
 \label{eq:full_radial_intertwining}
\end{equation}
Thus the radial Bargmann problem and the conventional Burnett radial problem
are two coordinate realizations of the same fixed-$\ell$ operator.  The
distinction from a Sonine approximation is that no truncation is involved in
the intertwiner.

\section{Burnett/Sonine coordinates and low-order validation}
\label{sec:burnett_validation}

The derivation of the operator is now complete in both radial realizations.
Section~\ref{sec:angular_radial} constructed the exact Bargmann kernel
$\mathscr A_\ell$; Sec.~\ref{sec:jacobi_intertwining} identified the loss as the
spectral function $\kappa\pi G(J_\ell)$ and proved both the loss intertwining
$\mathcal U_\ell\mathcal J_\ell=M_x\mathcal U_\ell$ and the full collision-operator
intertwining
$\mathcal U_\ell\mathfrak A_\ell=\mathfrak A_\ell^{\rm Burn}\mathcal U_\ell$.
The first makes the loss functional calculus explicit; the second states that
Bargmann and Burnett coordinates realize the same complete fixed-$\ell$ block.
Only at this point do
we compare with the classical Burnett/Sonine language; for the traditional
Sonine-polynomial treatment of the Chapman--Enskog transport problem see, in
particular, Chapman and Cowling \cite{ChapmanCowling1970}.  The comparison is
therefore downstream of the exact construction and can be used as a direct
normalization check.

Because $\mathcal U_\ell e_j^{(\ell)}=u_j^{(\ell)}$, the matrix of the exact
operator is coefficient preserving.  We write
\begin{equation}
 \langle f,g\rangle_{B,\ell}
 :=\int_{\C}\overline{f(\xi)}g(\xi)\dd\mu_\ell(\xi),
 \qquad
 \langle f,g\rangle_{{\rm Burn},\ell}
 :=\int_0^\infty\overline{f(x)}g(x)\dd\rho_\ell(x)
 \label{eq:radial_inner_products}
\end{equation}
for the Bargmann and Burnett radial inner products, respectively.  Then
\begin{equation}
 \boxed{
 \mathcal A^{(\ell)}_{jj'}
 =\langle e_j^{(\ell)},\mathfrak A_\ell e_{j'}^{(\ell)}\rangle_{B,\ell}
 =\langle u_j^{(\ell)},\mathfrak A_\ell^{\rm Burn}u_{j'}^{(\ell)}\rangle_{{\rm Burn},\ell}.
 }
 \label{eq:Burnett_same_matrix}
\end{equation}
Thus the Taylor coefficients already extracted by
Eqs.~\eqref{eq:coefficient_extraction_coefficient}--\eqref{eq:coefficient_extraction}
are exactly the normalized Burnett/Sonine collision brackets.

\subsection{Stress block directly from the exact radial kernel}

The $\ell=2$ sector gives a useful fully explicit check because its first
state is the shear-stress mode.  Here $\beta_2=7/2$ and
\begin{equation}
 e_0^{(2)}=1,
 \qquad
 e_1^{(2)}(\eta)=\frac{\eta}{\sqrt{7/2}},
 \qquad
 e_2^{(2)}(\eta)=\frac{\eta^2}{\sqrt{2!(7/2)_2}}.
 \label{eq:l2_radial_basis_low}
\end{equation}
In the original three-dimensional Bargmann variables these are
\begin{equation}
 H_{2m}(\bm z),
 \qquad
 \frac{\bm z^2H_{2m}(\bm z)}{\sqrt{14}},
 \qquad
 \frac{(\bm z^2)^2H_{2m}(\bm z)}{6\sqrt{14}}.
 \label{eq:l2_full_states_low}
\end{equation}
The angular addition kernel is
\begin{equation}
 \Pi_2(\bm w,\bm z)
 =\frac12(\bm w\cdot\bm z)^2
 -\frac16\bm w^2\bm z^2.
 \label{eq:P2_again}
\end{equation}

Projecting the exact coherent kernel by Eq.~\eqref{eq:radial_projection} and
expanding the resulting one-variable kernel gives, through the orders needed
for the first two radial levels,
\begin{equation}
 \frac{\mathscr A_2(\xi,\eta)}{\kappa\sqrt\pi}
 =\frac{16}{5}
 +\frac{8}{35}(\xi+\eta)
 +\frac{164}{147}\xi\eta
 -\frac{2}{315}(\xi^2+\eta^2)
 +\cdots .
 \label{eq:A2_radial_low_expansion}
\end{equation}
The comparison with the normalized basis is now direct.  Since
$e_0^{(2)}(\xi)e_1^{(2)}(\eta)=\eta/\sqrt{7/2}$, matching the coefficient of
$\eta$ in Eq.~\eqref{eq:A2_radial_low_expansion} with the general expansion
\eqref{eq:radial_generating} gives
\begin{equation}
 \mathcal A^{(2)}_{01}
 =\kappa\sqrt\pi\frac{8}{35}\sqrt{\frac72}
 =\frac{4\kappa\sqrt{14\pi}}{35}.
\end{equation}
Similarly the $\eta^2$ term gives
\begin{equation}
 \mathcal A^{(2)}_{02}
 =-\kappa\sqrt\pi\frac{2}{315}
 \sqrt{2!\left(\frac72\right)_2}
 =-\frac{\kappa\sqrt{14\pi}}{105}.
\end{equation}
The constant and $\xi\eta$ terms give the diagonal entries.  Altogether,
\begin{equation}
 \boxed{
 \frac{1}{\kappa\sqrt\pi}
 \begin{pmatrix}
 \mathcal A^{(2)}_{00}&\mathcal A^{(2)}_{01}\\
 \mathcal A^{(2)}_{10}&\mathcal A^{(2)}_{11}
 \end{pmatrix}
 =
 \begin{pmatrix}
 \dfrac{16}{5}&\dfrac{4\sqrt{14}}{35}\\[2mm]
 \dfrac{4\sqrt{14}}{35}&\dfrac{82}{21}
 \end{pmatrix},
 }
 \label{eq:l2_two_state_block}
\end{equation}
and the next first-row coefficient is
\begin{equation}
 \mathcal A^{(2)}_{02}
 =\mathcal A^{(2)}_{20}
 =-\frac{\kappa\sqrt{14\pi}}{105}.
 \label{eq:A202}
\end{equation}
The nonzero off-diagonal entries show directly that hard spheres do not
preserve oscillator number:
\begin{equation}
 [\widehat N,\widehat{\mathcal A}_{\rm HS}]\ne0.
\end{equation}

The same calculation can be displayed directly at the level of homogeneous
pieces of the original coherent kernel.  Since
$\mathcal K_{22}=\mathcal A^{(2)}_{00}\Pi_2$,
\begin{equation}
 \mathcal K_{22}
 =\frac{16}{5}\kappa\sqrt\pi\,\Pi_2.
\end{equation}
At bidegree $(2,4)$,
\begin{equation}
 \mathcal K_{24}
 =\frac{4\kappa\sqrt\pi}{35}\bm z^2\Pi_2
 =\mathcal A^{(2)}_{01}\frac{\bm z^2}{\sqrt{14}}\Pi_2,
\end{equation}
and at bidegree $(2,6)$,
\begin{equation}
 \mathcal K_{26}
 =-\frac{\kappa\sqrt\pi}{630}(\bm z^2)^2\Pi_2
 =\mathcal A^{(2)}_{02}
 \frac{(\bm z^2)^2}{6\sqrt{14}}\Pi_2.
\end{equation}
This makes explicit how the three-dimensional coherent kernel, the radial
Taylor kernel, and the conventional Burnett matrix are the same information
in three coordinate forms.

\subsection{Heat block and finite Sonine compressions}

For heat conduction the momentum state $j=0$ in the $\ell=1$ tower is a
collision invariant and is removed.  Since $\beta_1=5/2$, the first two
non-invariant radial basis functions are
\begin{equation}
 e_1^{(1)}(\xi)=\frac{\xi}{\sqrt{5/2}},
 \qquad
 e_2^{(1)}(\xi)=\frac{\xi^2}{\sqrt{2!(5/2)_2}}.
 \label{eq:l1_radial_basis_low}
\end{equation}
The $j=0$ row and column of $\mathscr A_1$ vanish by momentum conservation.
Projecting the same exact coherent kernel and expanding the first
non-invariant orders gives
\begin{equation}
 \frac{\mathscr A_1(\xi,\eta)}{\kappa\sqrt\pi}
 =\frac{64}{75}\xi\eta
 +\frac{32}{525}(\xi\eta^2+\xi^2\eta)
 +\frac{48}{245}\xi^2\eta^2+\cdots .
 \label{eq:A1_radial_low_expansion}
\end{equation}
For instance, the $\xi\eta^2$ coefficient satisfies
\begin{equation}
 \frac{32}{525}
 =\frac{1}{\kappa\sqrt\pi}
 \frac{\mathcal A^{(1)}_{12}}
 {\sqrt{(5/2)\,2!(5/2)_2}},
\end{equation}
which gives $\mathcal A^{(1)}_{12}/(\kappa\sqrt\pi)=16\sqrt7/105$.
Reading the other two coefficients against the same normalized basis gives
\begin{equation}
 \boxed{
 \frac{1}{\kappa\sqrt\pi}
 \begin{pmatrix}
 \mathcal A^{(1)}_{11}&\mathcal A^{(1)}_{12}\\
 \mathcal A^{(1)}_{21}&\mathcal A^{(1)}_{22}
 \end{pmatrix}
 =
 \begin{pmatrix}
 \dfrac{32}{15}&\dfrac{16\sqrt7}{105}\\[2mm]
 \dfrac{16\sqrt7}{105}&\dfrac{24}{7}
 \end{pmatrix}.
 }
 \label{eq:l1_two_state_block}
\end{equation}
Again, these entries are normalized coefficients of the exact radial kernel,
not separately defined velocity-space collision integrals.

A finite Sonine approximation is now simply a Galerkin compression of the
exact operator.  Define
\begin{equation}
 \mathsf P_N^{(\ell)}
 =\sum_{j=0}^{N}|e_j^{(\ell)}\rangle\langle e_j^{(\ell)}|,
 \qquad
 \mathfrak A_{\ell,N}
 =\mathsf P_N^{(\ell)}\mathfrak A_\ell\mathsf P_N^{(\ell)}.
 \label{eq:finite_compression_sonine}
\end{equation}
For stress the source is $e_0^{(2)}$; for heat one first removes the momentum
invariant and uses the source $e_1^{(1)}$.  Up to the conventional
thermodynamic prefactors,
\begin{align}
 \eta_N&\propto
 \langle e_0^{(2)},\mathfrak A_{2,N}^{-1}e_0^{(2)}\rangle,
 \label{eq:eta_N_practical}
 \\
 \lambda_{{\rm th},N}&\propto
 \langle e_1^{(1)},(\mathfrak A_{1,N}^{\perp})^{-1}e_1^{(1)}\rangle.
 \label{eq:lambda_N_practical}
\end{align}
For a symmetric two-state collision matrix
\begin{equation}
 A_2=\begin{pmatrix}a&b\\b&c\end{pmatrix},
\end{equation}
the one-state inverse-collision response is $1/a$, whereas the two-state
response is $(A_2^{-1})_{11}=c/(ac-b^2)$.  Their ratio is therefore
\begin{equation}
 \frac{c/(ac-b^2)}{1/a}=\frac{ac}{ac-b^2}.
 \label{eq:two_state_inverse_ratio}
\end{equation}
Substitution of the stress and heat blocks
\eqref{eq:l2_two_state_block} and \eqref{eq:l1_two_state_block} then gives
\begin{equation}
 \frac{\eta_2}{\eta_1}=\frac{205}{202},
 \qquad
 \frac{\lambda_{{\rm th},2}}{\lambda_{{\rm th},1}}=\frac{45}{44}.
 \label{eq:second_sonine_ratios}
\end{equation}

These are the classical second-Sonine corrections.  Increasing the radial
compression gives
\begin{equation}
\begin{array}{c|ccccc}
\text{active radial states} &1&2&3&4&5\\ \hline
\eta_N/\eta_1
&1&1.014852&1.015879&1.016006&1.016028
\end{array}
\label{eq:eta_sequence}
\end{equation}
and
\begin{equation}
\begin{array}{c|cccc}
\text{active radial states} &1&2&3&4\\ \hline
\lambda_{{\rm th},N}/\lambda_{{\rm th},1}
&1&1.022727&1.024818&1.025134
\end{array}
\label{eq:lambda_sequence}
\end{equation}
in agreement with the classical rigid-sphere Sonine sequence
\cite{ReineckeKremer1990,HiemstraEtAl2026}.  The infinite-order benchmarks of
Pekeris and Alterman \cite{PekerisAlterman1957} are
\begin{equation}
 \frac{\eta_\infty}{\eta_1}=1.016034,
 \qquad
 \frac{\lambda_{{\rm th},\infty}}{\lambda_{{\rm th},1}}=1.025218.
 \label{eq:transport_infinite_benchmarks}
\end{equation}
These numerical comparisons validate the normalization and the low-order
coefficient extraction.  They do not define the operator: the exact radial
kernel and its intertwining were already established before any finite
compression was selected.

\paragraph{End of the construction and validation.}
It is useful to summarize what has now been obtained before turning to the
global spectral analysis.  The Carleman representation produced the exact coherent
kernel $\mathcal K(\bm w,\bm z)$.  Rotational projection produced the exact
one-variable kernels $\mathscr A_\ell$.  The physical loss became the spectral
function $\kappa\pi G(J_\ell)$ of the radial-energy Jacobi operator.  The
unitary map $\mathcal U_\ell$ identified the analytic Bargmann and physical
Laguerre/Burnett realizations.  Finally, the conventional Burnett/Sonine
matrices were recovered as coordinate matrices of this already constructed
operator.  No radial truncation has entered any of these steps; truncation
appears only when a finite Galerkin compression such as
Eq.~\eqref{eq:finite_compression_sonine} is chosen.

\section{Global spectral consequences}
\label{sec:global_spectral}

The preceding Burnett/Sonine validation probes only a few low-lying radial
states.  The exact operator now allows a genuinely global question: what can
be said about an entire angular sector without fixing a Sonine depth?  Here the elementary
Boltzmann decomposition itself becomes spectral language:
\begin{equation}
 \boxed{
 \widehat{\mathcal A}_\ell
 =\underbrace{\widehat{\mathcal N}_\ell}_{\text{loss: noncompact}}
 -\underbrace{\widehat{\mathcal C}_\ell}_{\text{gain: compact}},
 \qquad
 \mathcal N^{(\ell)}=\kappa\pi G(J_\ell).
 }
 \label{eq:global_loss_gain_roles}
\end{equation}
The asymmetry is worth stressing.  The gain is the technically demanding part
of the exact Carleman--Bargmann construction because it contains the full
redistribution geometry of collisions.  Spectrally, however, it is compact.
The loss is much simpler in the physical radial realization---it is only
multiplication by $\kappa\pi G(x)$---but it is the noncompact part and therefore
sets the entire essential continuum, including its threshold, and the
large-index geometry.  The gain remains
indispensable for quantitative transport corrections and for any discrete
spectrum below the threshold, but it is asymptotically subordinate in the two
large-index limits developed below.  In this sense the loss sets the global
spectral landscape while the gain modifies it compactly.

This distinction also explains why finite Burnett/Sonine calculations can be
excellent for low-order transport yet incomplete for global spectral
questions.  A finite compression can approximate low collision brackets very
accurately, including the compact gain contribution, but it necessarily turns
the noncompact multiplication operator $G(x)$ into a finite matrix.  It can
therefore approximate selected responses without reproducing, at fixed order,
the continuum mechanism of the loss.

We now use the exact radial form rather than its Taylor truncations.  The
standard hard-sphere gain operator $\mathcal C$ in the weighted velocity
Hilbert space is compact; see, for example,
\cite{Grad1963,Drange1975,Pekeris1963,Dudynski2013}.  Let
$\widehat{\mathcal C}$ denote its unitarily equivalent Fock realization.
Rotational symmetry gives the block decomposition
\begin{equation}
 \widehat{\mathcal C}
 =
 \bigoplus_{\ell=0}^\infty
 \widehat{\mathcal C}_\ell\otimes I_{2\ell+1},
 \label{eq:Cblocks}
\end{equation}
where $\widehat{\mathcal C}_\ell$ is the abstract radial gain block.  Its
matrix in the radial ket basis will be denoted $\mathcal C^{(\ell)}$, while
its one-variable Bargmann realization is $\mathfrak C_\ell$ as in
Eq.~\eqref{eq:radial_operator}.  Compactness and operator norm are preserved
under these unitary changes of realization.

\begin{theorem}[Common essential spectrum]\label{thm:common_essential}
For every $\ell\ge0$,
\begin{equation}
 \boxed{
 \specess(\widehat{\mathcal A}_\ell)
 =\specess(\mathfrak A_\ell)
 =
 [\nu_{\min},\infty).
 }
 \label{eq:ess}
\end{equation}
\end{theorem}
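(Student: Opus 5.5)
The plan is to reduce everything to the physical Burnett realization, where the loss block is a multiplication operator, and then apply Weyl's theorem on the stability of the essential spectrum under compact perturbations.

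First, the essential spectrum is a unitary invariant, and by construction $\mathfrak A_\ell=\mathcal R_{B,\ell}\widehat{\mathcal A}_\ell\mathcal R_{B,\ell}^{-1}$. So the first equality in \eqref{eq:ess} is immediate, and it also holds for the Burnett realization $\mathfrak A_\ell^{\rm Burn}$ via \eqref{eq:full_radial_intertwining}. It therefore suffices to show $\specess(\mathfrak A_\ell^{\rm Burn})=[\nu_{\min},\infty)$. By the radial intertwining proposition, the loss part is $\mathcal U_\ell\,\kappa\pi G(\mathcal J_\ell)\,\mathcal U_\ell^{-1}=M_{\kappa\pi G(x)}$ on $L^2(\R_+,\dd\rho_\ell)$. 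The gain part $\mathfrak C_\ell^{\rm Burn}$ is compact: $\widehat{\mathcal C}$ is compact, and by \eqref{eq:Cblocks} each $\widehat{\mathcal C}_\ell$ is a compression of $\widehat{\mathcal C}$ to a closed invariant subspace (one fixed $m$ copy), so it is compact as well. Weyl's theorem (e.g.\ \cite{ReedSimon1980}) then gives $\specess(\mathfrak A_\ell^{\rm Burn})=\specess(M_{\kappa\pi G})$. Domain issues are harmless: the perturbation is bounded, so $\Dom(\mathfrak A_\ell)=\Dom(G(\mathcal J_\ell))$ as stated, and self-adjointness is preserved by Kato--Rellich.

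Next, the essential spectrum of the multiplication operator must be computed. Since $\dd\rho_\ell$ has a density that is strictly positive on $(0,\infty)$ and no atoms, the spectrum of $M_{\kappa\pi G}$ is the essential range of $\kappa\pi G$ on $(0,\infty)$. Every point of that range is essential spectrum, because preimages of small intervals have positive measure and, the measure being nonatomic, those spectral subspaces are infinite dimensional. It therefore remains to show that $G$ maps $(0,\infty)$ onto $(G(0),\infty)$. Continuity of $G$, $G(0)=2\sqrt{2/\pi}$ from \eqref{eq:numin}, and $G(x)\sim\sqrt{2x}\to\infty$ from \eqref{eq:Glarge} give range $\supseteq(G(0),\infty)$. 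The minimum statement derived after \eqref{eq:G_reintroduced} (convexity of the mean relative speed) gives $G\ge G(0)$. Taking closure yields $[\nu_{\min},\infty)$ with $\nu_{\min}=\kappa\pi G(0)=2\kappa\sqrt{2\pi}$.

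The only genuinely delicate step is the compactness of the individual blocks $\widehat{\mathcal C}_\ell$. I would argue it via $\mathsf P_{\ell m}\widehat{\mathcal C}\mathsf P_{\ell m}=\widehat{\mathcal C}_\ell$, where $\mathsf P_{\ell m}$ is the orthogonal projection onto the $(\ell,m)$ radial module; this identity follows from Schur's lemma \eqref{eq:block}, and a product of a compact operator with bounded operators is compact. A minor additional point is that $x\mapsto G(x)$ is strictly increasing near $0$ by \eqref{eq:Gsmall}, but strict monotonicity is not needed: continuity together with the minimum and the growth at infinity suffices.
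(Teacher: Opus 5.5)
Your proposal is correct and follows the paper's proof. Both transfer to the Burnett realization via $\mathcal U_\ell$, identify the loss as multiplication by $\kappa\pi G(x)$ on $L^2(\R_+,\dd\rho_\ell)$ with essential range $[\nu_{\min},\infty)$, and apply Weyl's theorem to the compact gain block; your extra justifications (compactness of $\widehat{\mathcal C}_\ell$ as the compression $\mathsf P_{\ell m}\widehat{\mathcal C}\mathsf P_{\ell m}$, and the nonatomic measure making the whole spectrum of the multiplication operator essential) fill in steps the paper states without comment.
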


\begin{proof}
Under the Bargmann--Burnett transform,
$\kappa\pi G(\mathcal J_\ell)$ is multiplication by
$\kappa\pi G(x)$ on the Burnett radial space
$L^2(\R_+,\dd\rho_\ell)$, where
$\dd\rho_\ell(x)=x^{\beta_\ell-1}\e^{-x}\dd x/\Gamma(\beta_\ell)$ as in
Eq.~\eqref{eq:rho}.  By the spectral characterization of multiplication
operators \cite{ReedSimon1980}, its spectrum is the essential range of
$\kappa\pi G(x)$.  Since $G$ is continuous, has minimum $G(0)$, and diverges
as $\sqrt{2x}$, this essential range is $[\nu_{\min},\infty)$.  The
radial gain block $\widehat{\mathcal C}_\ell$ is compact (equivalently,
so is $\mathfrak C_\ell$ in Bargmann coordinates), and Weyl's theorem on compact perturbations \cite{ReedSimon1980} leaves
the essential spectrum unchanged.
\end{proof}

\begin{lemma}[Decay of angular gain blocks]
\begin{equation}
 \|\widehat{\mathcal C}_\ell\|\longrightarrow0
 \qquad(\ell\to\infty).
 \label{eq:Cldecay}
\end{equation}
\end{lemma}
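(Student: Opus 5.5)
We will derive the decay from compactness of the full gain operator $\widehat{\mathcal C}$ together with the block decomposition \eqref{eq:Cblocks}. The orthogonal sectors $\mathscr H^{(\ell)}:=\mathcal D^+_{k_\ell}\otimes\mathcal V_\ell$ of \eqref{eq:su11decomp} reduce $\widehat{\mathcal C}$. On $\mathscr H^{(\ell)}$ the operator acts as $\widehat{\mathcal C}_\ell\otimes I_{2\ell+1}$, whose norm equals $\|\widehat{\mathcal C}_\ell\|$. Let $\mathsf Q_L$ be the orthogonal projection onto $\bigoplus_{\ell\ge L}\mathscr H^{(\ell)}$. Then
\begin{equation*}
\sup_{\ell\ge L}\|\widehat{\mathcal C}_\ell\|=\|\widehat{\mathcal C}\,\mathsf Q_L\|.
\end{equation*}
The lemma is therefore equivalent to $\|\widehat{\mathcal C}\mathsf Q_L\|\to0$ as $L\to\infty$.

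The key step is the standard fact that a compact operator composed with a strongly convergent, uniformly bounded family converges in norm. We have $\mathsf Q_L\to0$ strongly, since the sectors exhaust $\mathscr H$, and $\mathsf Q_L$ is self-adjoint. Hence $\|\widehat{\mathcal C}\mathsf Q_L\|=\|\mathsf Q_L\widehat{\mathcal C}^*\|$, and $\widehat{\mathcal C}^*=\widehat{\mathcal C}$ is compact because the gain operator is symmetric in $L^2(M)$.

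To prove the fact, approximate $\widehat{\mathcal C}$ in norm by a finite-rank operator $F=\sum_{i=1}^n|\phi_i\rangle\langle\psi_i|$ with $\|\widehat{\mathcal C}-F\|<\varepsilon$. Then $\|\mathsf Q_L F\|\le\sum_i\|\mathsf Q_L\phi_i\|\,\|\psi_i\|\to0$. This gives $\limsup_L\|\mathsf Q_L\widehat{\mathcal C}\|\le\varepsilon$ for every $\varepsilon>0$. Equivalently, one can argue by contradiction. Suppose unit vectors $\psi_k\in\mathscr H^{(\ell_k)}$ with $\ell_k\to\infty$ satisfy $\|\widehat{\mathcal C}\psi_k\|\ge\delta>0$. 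These vectors lie in mutually orthogonal sectors, so $\psi_k\rightharpoonup0$ weakly. Compactness then forces $\widehat{\mathcal C}\psi_k\to0$ in norm, which is a contradiction.

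The only point needing care, and the main potential obstacle, is the input that $\widehat{\mathcal C}$ is compact on the full space rather than merely blockwise. Blockwise compactness alone would not give decay: a direct sum of rank-one blocks of constant norm is not compact. Here we use the global compactness of the hard-sphere gain operator in $L^2(M\,\dd\bm v)$ cited before Theorem~\ref{thm:common_essential} \cite{Grad1963,Drange1975}, transported by the unitary $\mathcal R_P$. We also need that the Hilbert–Schmidt gain kernel, including the loss-type term $\kappa\pi\int|\bm u-\bm v|M(\bm u)\ldots$ of Grad's splitting, is exactly what $\widehat{\mathcal C}$ denotes. If a quantitative rate were desired, one could instead expand the Hilbert–Schmidt kernel in spherical harmonics and use $\|\widehat{\mathcal C}_\ell\|^2\le\|\widehat{\mathcal C}_\ell\|_{\rm HS}^2$, whose sum over $\ell$ weighted by $2\ell+1$ is finite. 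This gives $\|\widehat{\mathcal C}_\ell\|=o(\ell^{-1/2})$, but the qualitative statement needs only the compactness argument above.
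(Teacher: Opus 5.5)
Your proof is correct and is essentially the paper's own: the paper argues exactly as in your contradiction version (unit vectors in mutually orthogonal angular sectors converge weakly to zero, so compactness of the full $\widehat{\mathcal C}$ forces $\|\widehat{\mathcal C}f_n\|\to0$), and your finite-rank estimate for $\|\mathsf Q_L\widehat{\mathcal C}\|$ is just the direct form of the same fact.

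One correction concerns your optional quantitative remark. The hard-sphere compact part is compact but not Hilbert--Schmidt on the whole velocity space: the $K_2$ kernel of Grad's splitting satisfies $\int|k_2(\bm v,\bm u)|^2\dd\bm u\sim c/|\bm v|$ for large $|\bm v|$, which is not integrable in $\bm v$. Hence $\sum_\ell(2\ell+1)\|\widehat{\mathcal C}_\ell\|_{\rm HS}^2$ diverges, and the claimed $o(\ell^{-1/2})$ rate does not follow this way; you should also drop the ``Hilbert--Schmidt'' label in your obstacle paragraph. This is consistent with the paper, which proves only qualitative decay and lists a rate $\|\widehat{\mathcal C}_\ell\|\le C\ell^{-\alpha}$ as open.
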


\begin{proof}
If not, there would exist $\varepsilon>0$, a sequence
$\ell_n\to\infty$, and normalized radial vectors $f_n$ in mutually orthogonal
angular sectors such that
$\|\widehat{\mathcal C}_{\ell_n}f_n\|\ge\varepsilon$.  Let $\mathsf P_\ell$ be the
orthogonal projection onto the full angular sector of degree $\ell$.  For any
fixed vector $g$ in the Hilbert direct sum,
\begin{equation}
 |\langle g,f_n\rangle|
 =|\langle \mathsf P_{\ell_n}g,f_n\rangle|
 \le \|\mathsf P_{\ell_n}g\|\longrightarrow0,
\end{equation}
because $\sum_{\ell\ge0}\|\mathsf P_\ell g\|^2=\|g\|^2$.  Thus
$f_n\rightharpoonup0$.  By the standard weak-to-strong property of compact
operators \cite{ReedSimon1980}, compactness of $\widehat{\mathcal C}$ then implies
$\|\widehat{\mathcal C}f_n\|\to0$, whereas on the $\ell_n$ sector this norm is
$\|\widehat{\mathcal C}_{\ell_n}f_n\|\ge\varepsilon$, a contradiction.
\end{proof}

As a soft consequence of the preceding two results,
\begin{equation}
 \nu_{\min}-\|\widehat{\mathcal C}_\ell\|
 \le
 \inf\spec(\widehat{\mathcal A}_\ell)
 \le
 \nu_{\min},
 \label{eq:bottom_bounds_soft}
\end{equation}
and hence $\inf\spec(\widehat{\mathcal A}_\ell)\to\nu_{\min}$.  The lower bound follows
from $\widehat{\mathcal N}_\ell\ge\nu_{\min}I$, while the upper bound follows from
the presence of $\nu_{\min}$ in the essential spectrum.  This conclusion is
classical in character: once the loss--gain decomposition, the range of the
collision frequency, and compactness of the gain are known, no Bargmann
machinery is required.

Together with the classical absence of accumulation at the threshold for hard
spheres \cite{Dudynski2013}, the bound is consistent with only finitely many
angular sectors carrying discrete spectrum.  Klaus obtained a much sharper
sector-wise result and reported computer-assisted evidence that $\ell=3$ is
the first sector without discrete eigenvalues \cite{Klaus1976}; we do not
attempt to recover that critical value here.

The purpose of the next two sections is different.  We use the explicit radial
structure to identify normalized Fock states that realize the continuum and
its threshold constructively.  The large-$\ell$ analysis shows why every fixed
radial depth misses the threshold and constructs a family that escapes to the
required radial scale.  The large-$j$ analysis then shows how the same
threshold appears at fixed angular momentum as a genuine radial Weyl sequence.

\section{Large-\texorpdfstring{$\ell$}{l} asymptotics: fixed radial depth and squeezed continuum access}
\label{sec:large_l}

The soft spectral argument above already knows the limiting threshold, but it
does not explain how a state in a high angular sector reaches the low physical
energy $x=0$.  The exact radial representation lets us answer that question
constructively.  There are two sharply different parts to the story.  First,
at every fixed radial depth the loss becomes increasingly diagonal and is
pushed to the large collision scale $O(\sqrt\ell)$.  Second, the exact radial
tower contains states whose radial support escapes with $\ell$ and which
remain at finite, or even vanishing, physical energy.

\subsection{Fixed radial depth: asymptotic diagonalization}

We first compare the exact infinite radial problem with the customary
finite-Sonine viewpoint.

Set
\begin{equation}
 \beta=\beta_\ell=\ell+\frac32,
 \qquad
 J_\ell=\beta I+T_\ell,
 \qquad
 T_\ell:=J_\ell-\beta I.
 \label{eq:T_shift_definition}
\end{equation}
The shifted Jacobi matrix $T_\ell$ is not a new kinetic object.  It isolates
the radial variation about the common $O(\ell)$ angular background
$\beta I$.  On every fixed radial window its diagonal part is $O(1)$ and its
nearest-neighbor part is $O(\sqrt\beta)$, so $T_\ell/\beta$ is small, with
leading size $O(\beta^{-1/2})$.  This is precisely the scale separation used
below to expand $G(\beta I+T_\ell)$ and to identify the first nontrivial
Hermite--Jacobi correction.

Within this section we suppress the fixed passive angular labels and write
$|j\rangle:=|j,\ell,m\rangle$ for the normalized radial ket; thus
$\langle j|T_\ell^n|j'\rangle$ denotes the corresponding radial matrix
element.  For fixed $j$,
\begin{equation}
 (T_\ell)_{jj}=2j,
 \qquad
 (T_\ell)_{j,j+1}
 =
 \sqrt{\beta}\sqrt{j+1}
 \left[1+\frac{j}{2\beta}+O(\beta^{-2})\right].
\end{equation}
The large-$x$ expansion \eqref{eq:Glarge} and a finite-matrix expansion in
powers of $\beta^{-1/2}$ give the following.

\begin{proposition}[Fixed radial indices, loss part]
For fixed $j$,
\begin{equation}
 \frac{\mathcal N^{(\ell)}_{jj}}{\kappa\pi}
 =
 \sqrt{2\beta}
 \left[
 1+
 \frac{3(2j+1)}{8\beta}
 +O(\beta^{-2})
 \right].
 \label{eq:large_l_diag}
\end{equation}
For fixed $d\ge1$,
\begin{equation}
 \frac{\mathcal N^{(\ell)}_{j,j+d}}{\kappa\pi}
 =
 \sqrt2
 \binom{1/2}{d}
 \sqrt{(j+1)_d}\,
 \beta^{(1-d)/2}
 \left[1+O(\beta^{-1})\right].
 \label{eq:large_l_offdiag}
\end{equation}
\end{proposition}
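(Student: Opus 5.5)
We prove the proposition in the Burnett realization. By the radial intertwining proposition and Eq.~\eqref{eq:Burnett_same_matrix},
\[
\mathcal N^{(\ell)}_{jj'}/(\kappa\pi)=\int_0^\infty G(x)\,u_j^{(\ell)}(x)u_{j'}^{(\ell)}(x)\,\dd\rho_\ell(x).
\]
The weight $\dd\rho_\ell$ is a Gamma density with mean $\beta$ and standard deviation $\sqrt\beta$. For fixed $j,j'$ the functions $u_j^{(\ell)}u_{j'}^{(\ell)}$ are polynomials of fixed degree whose matrix elements can be computed with $J_\ell=\beta I+T_\ell$. This suggests expanding $G$ about $x=\beta$, with $t=x-\beta$ playing the role of $T_\ell$. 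The expansion is carried out in three steps.

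\emph{Step 1 (reduction to algebraic terms).} Using Eq.~\eqref{eq:Glarge}, write
\[
G(x)=\sqrt{2x}+(2x)^{-1/2}+R(x),
\]
where $R$ is bounded on $[1,\infty)$ by $O(\e^{-x}x^{-2})$. Split the integral at $x=\beta/2$. On $[0,\beta/2]$ the Gamma tail $\rho_\ell([0,\beta/2])$ is $O(\e^{-c\beta})$, and $G$ and $(2x)^{-1/2}$ are integrable against $x^{\beta-1}$ there. Combining this tail estimate with Cauchy--Schwarz and polynomial bounds on moments of $u_j^{(\ell)}u_{j'}^{(\ell)}$, the whole contribution from $[0,\beta/2]$, and the contribution of $R$ on $[\beta/2,\infty)$, are $O(\e^{-c\beta})$ and hence negligible.

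\emph{Step 2 (Taylor expansion with remainder).} On $x\ge\beta/2$ write
\[
\sqrt{2x}=\sqrt{2\beta}\,\sum_{n=0}^{K}\binom{1/2}{n}\beta^{-n}t^n+r_K(t),
\qquad |r_K(t)|\le C_K\,\beta^{1/2-K-1}|t|^{K+1}.
\]
Expand $(2x)^{-1/2}=(2\beta)^{-1/2}\sum_n\binom{-1/2}{n}\beta^{-n}t^n+\dots$ in the same way. The polynomial terms are then exactly $\langle j|T_\ell^n|j'\rangle$, after reinserting the small region $x<\beta/2$ at exponentially small cost. For the remainder, Cauchy--Schwarz gives a bound $\beta^{1/2-K-1}\langle j|T_\ell^{2K+2}|j\rangle^{1/2}\langle j'|\cdots\rangle^{1/2}$.

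The key counting lemma is that a lattice path of $n$ steps from $j$ to $j'$ in the tridiagonal $T_\ell$ picks up a factor $\sqrt\beta\,(1+O(\beta^{-1}))$ for each $\pm1$ step and $O(1)$ for each diagonal step. Since $(T_\ell)_{jj}=2j$, this gives $\langle j|T_\ell^{2K+2}|j\rangle=O(\beta^{K+1})$. The remainder is therefore $O(\beta^{-K/2})$, and choosing $K=6$ makes it negligible.

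\emph{Step 3 (counting powers).}

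For $d\ge1$, the term $n$ requires $n\ge d$. The leading term is $n=d$, with all steps upward, which gives
\[
\sqrt{2\beta}\,\binom{1/2}{d}\beta^{-d}\beta^{d/2}\sqrt{(j+1)_d}=\sqrt2\binom{1/2}{d}\sqrt{(j+1)_d}\,\beta^{(1-d)/2}.
\]
The next contributions are all smaller by a factor $\beta^{-1}$ relative to this leading term:
\begin{itemize}[nosep]
\item $n=d+1$, which needs one diagonal step;
\item $n=d+2$, which needs one backtrack;
\item the $1+O(\beta^{-1})$ correction $1+j/(2\beta)$ in $(T_\ell)_{j,j+1}$;
\item the $(2x)^{-1/2}$ series.
\end{itemize}
This proves Eq.~\eqref{eq:large_l_offdiag}.

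For $d=0$, the terms are computed one by one:
\begin{itemize}[nosep]
\item $n=0$ gives $1$;
\item $n=1$ gives $\tfrac12\beta^{-1}\cdot2j=j/\beta$;
\item $n=2$ gives $-\tfrac18\beta^{-2}(T_\ell^2)_{jj}$, where $(T_\ell^2)_{jj}=4j^2+j(j+\beta-1)+(j+1)(j+\beta)=\beta(2j+1)+O(1)$, so the contribution is $-(2j+1)/(8\beta)$;
\item the $n=0$ term of $(2x)^{-1/2}$, divided by $\sqrt{2\beta}$, gives $1/(2\beta)$.
\end{itemize}
The sum of these $\beta^{-1}$ terms is
\[
\frac{8j-(2j+1)+4}{8\beta}=\frac{3(2j+1)}{8\beta}.
\]
The terms $n=3,4$ contribute $O(\beta^{-2})$, since $(T_\ell^3)_{jj}=O(\beta)$ and $(T_\ell^4)_{jj}=O(\beta^2)$. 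This gives Eq.~\eqref{eq:large_l_diag}.

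The main obstacle is Step 2, because $T_\ell$ is unbounded and the Taylor series of $\sqrt{\beta+t}$ diverges for $t<-\beta$. A naive operator power series is therefore unjustified. The fix proposed above is to stay in the scalar integral, use a finite Taylor polynomial with an explicit remainder on $x\ge\beta/2$, and control the excluded region by the exponentially small Gamma tail.
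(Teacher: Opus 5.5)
Your proposal is correct. Its algebraic core is the same as the paper's: the moments $\langle j|T_\ell|j\rangle=2j$ and $\langle j|T_\ell^2|j\rangle=\beta(2j+1)+6j^2$, the leading $J_\ell^{-1/2}$ contribution $1/(2\beta)$, and the unique minimal raising path at distance $d$.

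The difference is in how the entrywise expansion is justified.

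\begin{itemize}
\item \emph{The paper's route.} It compresses to a fixed radial window, where $T_\ell/\beta=O(\beta^{-1/2})$ makes the binomial series for $\sqrt{\beta I+T_\ell}$ legitimate. It then argues that the coefficients through a given order involve only finitely many Jacobi paths, so they coincide with those of the full operator. This identifies the coefficients but gives no remainder bound. The reason is that $\mathsf P_N G(J_\ell)\mathsf P_N\neq G(\mathsf P_N J_\ell\mathsf P_N)$, and $T_\ell$ has spectrum $[-\beta,\infty)$, on much of which the series diverges.
\item \emph{Your route.} You stay with the scalar integral $\int G\,u_j^{(\ell)}u_{j'}^{(\ell)}\,\dd\rho_\ell$. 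You use Taylor's theorem with explicit remainder on $x\ge\beta/2$. You discard $x<\beta/2$, and the $\e^{-x}$ remainder of $G$, using the exponential concentration of the Gamma law $\dd\rho_\ell$ about $x=\beta$. You bound the Taylor remainder by $O(\beta^{-K/2})$ through $\langle j|T_\ell^{2K+2}|j\rangle=O(\beta^{K+1})$.
\end{itemize}

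Your route costs a little more analysis, but it upgrades the paper's formal expansion to a genuine asymptotic statement with explicit error control.

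Two small adjustments are needed.

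\begin{itemize}
\item \emph{Choice of $K$.} A fixed $K=6$ gives remainder $O(\beta^{-3})$. That suffices for the diagonal formula and for $d\le5$. The off-diagonal claim, however, needs absolute error $O(\beta^{-(d+1)/2})$, so you should take $K\ge d+1$. This is legitimate because $d$ is fixed.
\item \emph{Remaining diagonal terms.} The terms with $5\le n\le K$, and the higher terms of the $(2x)^{-1/2}$ series, should be dismissed explicitly. The general count $(T_\ell^n)_{jj}=O(\beta^{\lfloor n/2\rfloor})$ does this, and your path lemma already provides it.
\end{itemize}
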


\begin{proof}
The diagonal matrix elements needed to relative order $\beta^{-1}$ are
\begin{equation}
 \langle j|T_\ell|j\rangle=2j,
\end{equation}
and
\begin{equation}
 \langle j|T_\ell^2|j\rangle
 =
 \beta(2j+1)+6j^2.
\end{equation}
For fixed radial indices this calculation is an \emph{entrywise}
asymptotic expansion, not a global operator-norm expansion of the unbounded
Jacobi operator.  Indeed, after compression to any fixed finite radial window,
$T_\ell/\beta=O(\beta^{-1/2})$ in finite-dimensional operator norm, so the
spectral-calculus expansion
\begin{equation}
 \sqrt{\beta I+T_\ell}
 =
 \sqrt\beta
 \left[
 I+\frac{T_\ell}{2\beta}
 -\frac{T_\ell^2}{8\beta^2}+\cdots
 \right]
\end{equation}
is legitimate to any prescribed order on that window.  For a fixed matrix
element, the coefficients through a prescribed order depend on only finitely
many Jacobi paths, so the same coefficients are those of the full operator.
Substitution of the two moments above, together with the leading
$J_\ell^{-1/2}$ contribution in \eqref{eq:Glarge}, yields
\eqref{eq:large_l_diag}.  For an off-diagonal
distance $d$, the first nonzero term occurs in $T_\ell^d$.  The unique minimal
raising path gives
\begin{equation}
 (T_\ell^d)_{j,j+d}
 =
 \prod_{r=0}^{d-1}
 \sqrt{(j+r+1)(\beta+j+r)}
 =
 \beta^{d/2}\sqrt{(j+1)_d}
 [1+O(\beta^{-1})].
\end{equation}
The binomial coefficient in \eqref{eq:large_l_offdiag} follows.
\end{proof}

In particular,
\begin{align}
 \mathcal N^{(\ell)}_{j,j+1}
 &=
 \kappa\pi\sqrt{\frac{j+1}{2}}
 +O(\ell^{-1}),
 \label{eq:l_nn}
 \\
 \mathcal N^{(\ell)}_{j,j+2}
 &=
 -\frac{\kappa\pi}{4\sqrt2}
 \sqrt{(j+1)(j+2)}\,\ell^{-1/2}
 +O(\ell^{-3/2}).
\end{align}

The qualitative content of these estimates is simple.  On any fixed radial
window the common diagonal loss grows as $\sqrt{\ell}$, the nearest-neighbor
coupling stays only $O(1)$, and all couplings at distance $d\ge2$ vanish.
Recall the orthogonal finite-Sonine projector $\mathsf P_N^{(\ell)}$ from
Eq.~\eqref{eq:finite_compression_sonine}; in the present radial ket notation we
abbreviate it to $\mathsf P_N$.  Thus $\mathsf P_N$ projects onto
$\operatorname{span}\{|0\rangle,\ldots,|N\rangle\}$, with $N$ held fixed while
$\ell\to\infty$.  The purpose of the remainder of this subsection is to retain
the first correction to the leading diagonal loss and to determine whether the
compact gain contributes at that $O(1)$ scale.
Thus, relative to its leading scale,
\begin{equation}
 \frac{\mathsf P_N\mathcal N^{(\ell)}\mathsf P_N}
 {\kappa\pi\sqrt{2\beta_\ell}}
 \longrightarrow I_{N+1}
 \qquad(\ell\to\infty)
 \label{eq:large_l_relative_diagonalization}
\end{equation}
for every fixed $N$.  Large angular momentum therefore produces an
asymptotic \emph{collapse toward a diagonal loss operator} at fixed radial
depth.  The $O(1)$ Hermite--Jacobi coupling resolved below is the first
correction to that collapse, not a competitor to the leading $\sqrt\ell$
scale.

Compactness of the full gain operator is now decisive.  From
\eqref{eq:Cldecay}, for every fixed finite radial section,
\begin{equation}
 \|\mathsf P_N\mathcal C^{(\ell)}\mathsf P_N\|\to0.
\end{equation}

Define the $(N+1)\times(N+1)$ symmetric tridiagonal matrix $S_N$ by
\begin{equation}
 (S_N)_{jj}=0,
 \qquad
 (S_N)_{j,j+1}=(S_N)_{j+1,j}
 =\sqrt{\frac{j+1}{2}},
 \qquad 0\le j<N.
 \label{eq:SN_large_l}
\end{equation}
The matrix $S_N$ is the finite Hermite \emph{recurrence matrix}: it represents
multiplication by the argument in the first $N+1$ orthonormal physicists'
Hermite polynomials \cite{Chihara1978}.  We use ``recurrence matrix'' here to
distinguish it from the radial-energy Jacobi operator $J_\ell$.  The preceding
entrywise asymptotics show that $S_N$ is exactly the surviving $O(1)$ correction
to the leading diagonal loss; the next theorem adds the compact gain and makes
that statement precise for the complete collision block.

\begin{theorem}[Universal finite-depth Hermite limit]
Let $\mathsf P_N$ project onto $j=0,\ldots,N$, with $N$ fixed.  Then
\begin{equation}
 \boxed{
 \mathsf P_N\mathcal A^{(\ell)}\mathsf P_N
 =
 \kappa\pi
 \left[
 \sqrt{2\beta_\ell}\,I_{N+1}+S_N
 \right]
 +o(1),
 \qquad
 \ell\to\infty,
 }
 \label{eq:Hermite_limit}
\end{equation}
where $S_N$ is defined in Eq.~\eqref{eq:SN_large_l}.
\end{theorem}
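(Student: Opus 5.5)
The plan is to use the loss--gain split \eqref{eq:global_loss_gain_roles} in its compressed form,
\[
 \mathsf P_N\mathcal A^{(\ell)}\mathsf P_N
 =\mathsf P_N\mathcal N^{(\ell)}\mathsf P_N-\mathsf P_N\mathcal C^{(\ell)}\mathsf P_N ,
\]
and to treat the two pieces separately. Because $N$ is fixed, everything takes place in the fixed space $\C^{N+1}$, where all norms are equivalent. So convergence in operator norm follows from entrywise convergence of finitely many entries.

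The gain term needs no new work. Decay of angular gain blocks, Eq.~\eqref{eq:Cldecay}, gives
\[
 \|\mathsf P_N\mathcal C^{(\ell)}\mathsf P_N\|\le\|\widehat{\mathcal C}_\ell\|\to0,
\]
so it contributes only $o(1)$.

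For the loss I will read the $(N+1)^2$ entries off the proposition on fixed radial indices, tracking each to absolute error $o(1)$ rather than relative error.
\begin{itemize}
 \item \emph{Diagonal entries.} By \eqref{eq:large_l_diag},
 \[
 \mathcal N^{(\ell)}_{jj}/(\kappa\pi)=\sqrt{2\beta}+\frac{3(2j+1)}{8}\sqrt2\,\beta^{-1/2}+O(\beta^{-3/2}),
 \]
 so the absolute error relative to $\sqrt{2\beta}$ is $O(\beta^{-1/2})=o(1)$.
 \item \emph{Distance $d=1$.} Eq.~\eqref{eq:large_l_offdiag} gives $\sqrt2\cdot\tfrac12\sqrt{j+1}\,[1+O(\beta^{-1})]=\sqrt{(j+1)/2}+O(\beta^{-1})$. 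This is exactly $(S_N)_{j,j+1}$.
 \item \emph{Distance $d\ge2$.} The entries are $O(\beta^{(1-d)/2})\to0$.
\end{itemize}
Symmetry of $G(J_\ell)$ handles the lower triangle. Summing these finitely many entrywise errors gives the norm statement \eqref{eq:Hermite_limit}.

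The main obstacle is making sure the entrywise expansions of the preceding proposition really hold for the full unbounded operator $G(J_\ell)$, and not just for a formal Taylor series of $\sqrt{\beta+T_\ell}$ truncated to a window. The diagonal term in particular demands absolute error $o(1)$ after multiplication by $\sqrt\beta$, so a relative $O(\beta^{-2})$ remainder is needed, not merely a relative $o(\beta^{-1/2})$. There are two sources of difficulty:
\begin{itemize}
 \item \emph{The Jacobi walk is unbounded.} To control paths leaving the window, I would first write
 \[
 \langle j|G(J_\ell)|j'\rangle=\int G(x)\,u_j^{(\ell)}(x)\,u_{j'}^{(\ell)}(x)\,\dd\rho_\ell(x)
 \]
 via the intertwining proposition. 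The measure $\dd\rho_\ell$ concentrates at $x\approx\beta$ with width $\sqrt\beta$: it is a Gamma law, and the $u_j^{(\ell)}$ are polynomials of bounded degree in $(x-\beta)/\sqrt\beta$ with $O(1)$ coefficients. I would then Taylor-expand $G$ around $x=\beta$ with Lagrange remainder. The region $x<\beta/2$ is exponentially small, and on $x\ge\beta/2$ the derivatives satisfy $G^{(k)}=O(\beta^{1/2-k})$. The remainder is then bounded by Gamma moments.
 \item \emph{The small-$x$ region.} There $G$ is bounded, so its contribution is negligible.
\end{itemize}
This argument justifies that only finitely many Jacobi paths contribute to each order, with uniformly controlled remainders. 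That is precisely the claim the earlier proof stated without proof, and it completes the argument.
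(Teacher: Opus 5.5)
Your argument is the paper's proof: you compress the loss--gain split to the fixed window, dispose of the gain by $\|\mathsf P_N\mathcal C^{(\ell)}\mathsf P_N\|\le\|\widehat{\mathcal C}_\ell\|\to0$, and read the finitely many loss entries off the fixed-index proposition, so entrywise $o(1)$ errors give the norm statement on $\C^{N+1}$. Your extra justification of that proposition, via $\langle j|G(J_\ell)|j'\rangle=\int G\,u_j^{(\ell)}u_{j'}^{(\ell)}\,\dd\rho_\ell$ and a Lagrange--Taylor expansion of $G$ at $x=\beta$, is sound and more rigorous than the paper's finite-path remark, which identifies the expansion coefficients but never bounds the remainder for the unbounded $J_\ell$; however, you overstate what that step must deliver, since absolute $o(1)$ on the diagonal needs only relative $o(\beta^{-1/2})$, and a first-order expansion with second-order remainder, bounded by $O(\beta^{-3/2})\cdot O(\beta)=O(\beta^{-1/2})$, already suffices.
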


\begin{proof}
Because $N$ is fixed, Eqs.~\eqref{eq:large_l_diag} and
\eqref{eq:large_l_offdiag} apply uniformly to the finitely many entries in the
compressed matrix.  After subtracting
$\kappa\pi\sqrt{2\beta_\ell}\,I_{N+1}$, the diagonal entries are $o(1)$, the
nearest-neighbor entries converge to those of $\kappa\pi S_N$ by
Eq.~\eqref{eq:l_nn}, and every entry at distance at least two is $o(1)$.
Hence
\begin{equation}
 \mathsf P_N\mathcal N^{(\ell)}\mathsf P_N
 =\kappa\pi[\sqrt{2\beta_\ell}\,I_{N+1}+S_N]+o(1)
\end{equation}
in finite-dimensional operator norm.  Equation~\eqref{eq:Cldecay} gives
$\|\mathsf P_N\mathcal C^{(\ell)}\mathsf P_N\|\to0$, so subtracting the gain
block proves Eq.~\eqref{eq:Hermite_limit}.
\end{proof}

Let
$h_{N+1,1},\ldots,h_{N+1,N+1}$
be the zeros of the physicists' Hermite polynomial $H_{N+1}$.  Since $S_N$ is the corresponding orthonormal Hermite recurrence matrix, its eigenvalues are these zeros by the standard Jacobi-matrix/orthogonal-polynomial correspondence \cite{Chihara1978}.  Hence the Ritz values of the
fixed Sonine section satisfy
\begin{equation}
 \lambda^{(N)}_{\ell,k}
 =
 \kappa\pi
 \left[
 \sqrt{2\ell+3}+h_{N+1,k}
 \right]
 +o(1).
 \label{eq:Ritz_large_l}
\end{equation}

The finite-depth result by itself gives only one side of the high-$\ell$
story: it proves that the threshold cannot remain in any fixed Sonine window.
To construct the complementary path it is useful to return first to the
$\mathfrak{su}(1,1)$ radial ladder itself, before invoking any particular coordinate
realization.

\subsection{Squeezed radial packets as nonperturbative spectral probes}
\label{sec:squeezed_radial_path}

The fixed-$j$ asymptotics above follow individual basis kets
$|j,\ell,m\rangle$ while $\ell$ increases.  To probe the continuum one needs a
different object: a normalized coherent packet that uses the entire radial
tower and whose support can migrate through it.  The $\mathfrak{su}(1,1)$ squeeze provides
exactly such a family.  It will be used below in two complementary ways.  At
fixed $\ell$, increasing the squeeze produces a genuine threshold Weyl
sequence.  When $\ell$ also increases, tuning the squeeze against
$\beta_\ell=\ell+3/2$ produces sharply localized packets at any prescribed
finite radial energy and therefore reaches the whole essential band.  The
squeezed family is thus not merely a convenient trial state; it is the
concrete nonperturbative probe that connects the discrete Fock tower to the
continuous spectral variable of the loss operator.

Recall from Sec.~\ref{sec:su11} the scalar pair operators
$\widehat K_\pm$ and $\widehat K_0$.  Since
$\widehat K_+^\dagger=\widehat K_-$, the operator
\begin{equation}
 \widehat{\mathsf S}(r)
 :=\exp\!\left[-r(\widehat K_+-\widehat K_-)\right],
 \qquad r\ge0,
 \label{eq:radial_squeeze_operator}
\end{equation}
is unitary.  It is rotationally scalar, so it does not change $(\ell,m)$.
Starting from the lowest-weight state in the $\ell$th radial module, define
\begin{equation}
 |\Psi_{\ell m}(r)\rangle
 :=\widehat{\mathsf S}(r)|0,\ell,m\rangle .
 \label{eq:squeezed_radial_ket}
\end{equation}
For the high-$\ell$ construction one may choose any magnetic copy, for example
$m=0$; all radial quantities below are independent of $m$.

A standard $SU(1,1)$ disentangling identity rewrites the exponential of the
noncommuting combination $\widehat K_+-\widehat K_-$ as an ordered product of
three exponentials involving separately $\widehat K_+$, $\widehat K_0$, and
$\widehat K_-$ \cite{Perelomov1986}.  On a lowest-weight vector the rightmost
lowering exponential acts trivially because
$\widehat K_-|0,\ell,m\rangle=0$, while the middle exponential contributes only
the scalar lowest-weight normalization.  This leaves a single raising
exponential.  Writing
\begin{equation}
 q:=\tanh r,\qquad 0\le q<1,
 \label{eq:squeeze_disk_parameter}
\end{equation}
one obtains
\begin{equation}
 |\Psi_{\ell m}(r)\rangle
 =(\operatorname{sech} r)^{\beta_\ell}
 \exp(-q\widehat K_+)|0,\ell,m\rangle
 =(1-q^2)^{\beta_\ell/2}
 \exp(-q\widehat K_+)|0,\ell,m\rangle .
 \label{eq:squeeze_disentangled}
\end{equation}
Using Eq.~\eqref{eq:radial_ladder_state}, this is the explicit ket expansion
\begin{equation}
 \boxed{
 |\Psi_{\ell m}(r)\rangle
 =(1-q^2)^{\beta_\ell/2}
 \sum_{j=0}^{\infty}
 (-q)^j
 \sqrt{\frac{(\beta_\ell)_j}{j!}}
 |j,\ell,m\rangle .
 }
 \label{eq:squeezed_ket_expansion}
\end{equation}
Thus $q$ is best viewed as the $\mathfrak{su}(1,1)$ coherent-state, or radial squeezing,
parameter: each power $q^j$ weights the component generated by $j$
applications of the scalar pair creator $\widehat K_+$.  The alternating
phase $(-1)^j$ will become important again in the large-$j$ Toeplitz limit in Section \ref{sec:large_j}.

The radial-index probabilities are therefore
\begin{equation}
 \mathbb P_{\ell,r}(j)
 =|\langle j,\ell,m|\Psi_{\ell m}(r)\rangle|^2
 =\frac{(\beta_\ell)_j}{j!}
 (1-q^2)^{\beta_\ell}q^{2j},
 \label{eq:trial_negative_binomial}
\end{equation}
a negative-binomial distribution.  Its expectation and variance are denoted
by
\begin{equation}
 \langle j\rangle_{\ell,r}
 :=\sum_{j=0}^\infty j\,\mathbb P_{\ell,r}(j),
 \qquad
 \operatorname{var}_{\ell,r}(j)
 :=\sum_{j=0}^\infty
 (j-\langle j\rangle_{\ell,r})^2\mathbb P_{\ell,r}(j),
 \label{eq:squeezed_j_moment_definitions}
\end{equation}
and evaluation of these sums gives
\begin{equation}
 \langle j\rangle_{\ell,r}
 =\frac{\beta_\ell q^2}{1-q^2}
 =\beta_\ell\sinh^2 r,
 \qquad
 \operatorname{var}_{\ell,r}(j)
 =\frac{\beta_\ell q^2}{(1-q^2)^2}.
 \label{eq:squeezed_j_moments}
\end{equation}
Hence increasing the squeeze parameter does not select a single large-$j$
ket; it moves the probability distribution of a coherent radial packet to
larger and larger $j$.

The same group action has a particularly direct meaning for the physical
radial energy.  Recall from Eq.~\eqref{eq:X_Fock} that
\begin{equation}
 \widehat X
 =\widehat K_++\widehat K_-+2\widehat K_0,
 \end{equation}
which is the Fock representative of multiplication by $x=|\bm v|^2/2$.
With $\widehat D:=\widehat K_+-\widehat K_-$, the $SU(1,1)$ commutators give
\begin{equation}
 [\widehat D,\widehat X]=-2\widehat X,
\end{equation}
and therefore
\begin{equation}
 \widehat{\mathsf S}(r)^\dagger\widehat X\widehat{\mathsf S}(r)
 =\e^{-2r}\widehat X .
 \label{eq:squeeze_X_scaling}
\end{equation}
Since
$\langle0,\ell,m|\widehat X|0,\ell,m\rangle=\beta_\ell$ and
$\operatorname{var}_{0,\ell}(X)=\beta_\ell$, Eq.~\eqref{eq:squeeze_X_scaling}
immediately yields
\begin{equation}
 \boxed{
 \langle X\rangle_{\ell,r}
 =\beta_\ell\e^{-2r},
 \qquad
 \operatorname{var}_{\ell,r}(X)
 =\beta_\ell\e^{-4r}.
 }
 \label{eq:squeezed_X_moments}
\end{equation}
Thus the same state that moves outward in the pair-created index $j$ moves
\emph{inward} in physical radial energy.  This is possible because $X$ is not
diagonal in the $j$ basis: the alternating coherent superposition in
Eq.~\eqref{eq:squeezed_ket_expansion} produces cancellation between the
$2\widehat K_0$ and $\widehat K_++\widehat K_-$ contributions.

For comparison with the Burnett realization, introduce
\begin{equation}
 t:=\frac{q}{1-q}=\frac{\e^{2r}-1}{2},
 \qquad
 q=\frac{t}{1+t},
 \qquad
 1+2t=\e^{2r}.
 \label{eq:t_r_q_relation}
\end{equation}
Using the Laguerre generating function \cite{NIST2010} together with the normalized basis
$u_j^{(\ell)}$, the ket \eqref{eq:squeezed_ket_expansion} is represented in
$L^2(\R_+,\dd\rho_\ell)$ by
\begin{equation}
 \psi_{\ell,t}(x)
 :=(1+2t)^{\beta_\ell/2}\e^{-tx}.
 \label{eq:threshold_trial_state}
\end{equation}
Indeed, the two factors $(-1)^j$ in
Eq.~\eqref{eq:squeezed_ket_expansion} and in the convention for
$u_j^{(\ell)}$ cancel, and
\begin{equation}
 \sum_{j=0}^{\infty}q^jL_j^{(\beta_\ell-1)}(x)
 =(1-q)^{-\beta_\ell}
 \exp\!\left[-\frac{qx}{1-q}\right].
\end{equation}
Thus Eq.~\eqref{eq:threshold_trial_state} is not an ad hoc trial function: it
is precisely the Burnett-coordinate image of the unitary squeezed ket
\eqref{eq:squeezed_radial_ket}.  Its inverse Bargmann image is correspondingly
\begin{equation}
 \mathcal U_\ell^{-1}\psi_{\ell,t}(\xi)
 =(1-q^2)^{\beta_\ell/2}\e^{-q\xi}
 =\frac{(1+2t)^{\beta_\ell/2}}{(1+t)^{\beta_\ell}}
 \exp\!\left[-\frac{t}{1+t}\xi\right].
 \label{eq:trial_Bargmann_state}
\end{equation}
Finally, Eq.~\eqref{eq:squeezed_X_moments} becomes
\begin{equation}
 \langle x\rangle_{\ell,t}
 =\frac{\beta_\ell}{1+2t},
 \qquad
 \operatorname{var}_{\ell,t}(x)
 =\frac{\beta_\ell}{(1+2t)^2},
 \label{eq:trial_x_moments}
\end{equation}
in agreement with the Gamma law
$|\psi_{\ell,t}|^2\dd\rho_\ell$ of shape $\beta_\ell$ and rate $1+2t$.

\subsection{Constructive access to the full essential band}
\label{sec:fock_essential_continuum}

For readers accustomed to finite Burnett/Sonine matrices, one point is worth
making explicit before specializing to the threshold.  The essential spectrum
is not external to the Fock representation.  A finite Sonine compression is a
finite matrix and therefore has only discrete spectrum, but the untruncated
radial Fock tower is infinite.  Through the exact Bargmann--Burnett
intertwiner, its Jacobi matrix $J_\ell$ is unitarily equivalent to
multiplication by the continuous variable $x\in[0,\infty)$.  The continuum
of the loss operator is therefore encoded in coherent packets whose Fock
coefficients escape every fixed radial window.

Formally, the Laguerre recurrence may be summarized by the generalized radial
ket
\begin{equation}
 |x;\ell,m\rangle_{\rm gen}
 :=\sum_{j=0}^\infty
 u_j^{(\ell)}(x)|j,\ell,m\rangle,
 \label{eq:generalized_J_eigenket}
\end{equation}
which is not normalizable in Fock space but satisfies, in the generalized
spectral sense,
\begin{equation}
 J_\ell|x;\ell,m\rangle_{\rm gen}
 =x|x;\ell,m\rangle_{\rm gen}.
\end{equation}
Thus every physical radial energy $x$ is already represented by the infinite
Jacobi/Fock problem, even though no single finite Sonine vector realizes it.

The squeezed family makes this continuum access constructive with ordinary
normalized Fock vectors.  We shall use Weyl's criterion in its singular-sequence
form \cite{ReedSimon1980}: for a self-adjoint operator $A$, a normalized sequence
$\psi_n\in\Dom(A)$ satisfying
\begin{equation}
 \psi_n\rightharpoonup0,
 \qquad
 \|(A-\lambda)\psi_n\|\longrightarrow0
 \label{eq:Weyl_criterion_reminder}
\end{equation}
is called a Weyl (or singular) sequence at $\lambda$, and its existence is
equivalent to $\lambda\in\specess(A)$.  The two limits in
Eq.~\eqref{eq:Weyl_criterion_reminder} are exactly what the squeezed packets
will provide.

\begin{theorem}[Squeezed Weyl sequences throughout the essential band]
Fix $x_0>0$ and, for all sufficiently large $\ell$, choose $r_\ell(x_0)$ by
\begin{equation}
 \e^{-2r_\ell(x_0)}=\frac{x_0}{\beta_\ell}.
 \label{eq:continuum_squeeze_choice}
\end{equation}
Then
\begin{equation}
 \langle X\rangle_{\ell,r_\ell}=x_0,
 \qquad
 \operatorname{var}_{\ell,r_\ell}(X)
 =\frac{x_0^2}{\beta_\ell}\longrightarrow0.
 \label{eq:continuum_X_concentration}
\end{equation}
Consequently, with
\begin{equation}
 \lambda(x_0):=\kappa\pi G(x_0),
\end{equation}
the normalized squeezed states satisfy
\begin{equation}
 \boxed{
 \|[\widehat{\mathcal A}_{\rm HS}-\lambda(x_0)]
 |\Psi_{\ell m}(r_\ell(x_0))\rangle\|
 \longrightarrow0,
 \qquad
 |\Psi_{\ell m}(r_\ell(x_0))\rangle\rightharpoonup0.
 }
 \label{eq:full_continuum_Weyl}
\end{equation}
As $x_0$ ranges over $(0,\infty)$, the values $\lambda(x_0)$ cover
$(\nu_{\min},\infty)$.  Thus one squeezed family gives constructive Fock-space
access to every interior point of the essential band of the full operator.
\end{theorem}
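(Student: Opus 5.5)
The moment identities \eqref{eq:continuum_X_concentration} come directly from \eqref{eq:squeezed_X_moments}. The choice \eqref{eq:continuum_squeeze_choice} gives $\langle X\rangle=\beta_\ell\e^{-2r_\ell}=x_0$ and $\operatorname{var}(X)=\beta_\ell\e^{-4r_\ell}=x_0^2/\beta_\ell$. This choice requires $r_\ell\ge0$, i.e. $\beta_\ell\ge x_0$, which explains the phrase ``sufficiently large $\ell$''.

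Weak convergence to zero is immediate. Each packet lies in the $\ell$th angular sector, and the sectors are mutually orthogonal. Hence, exactly as in the proof of Lemma on decay of angular gain blocks, for any fixed $g$ we have $|\langle g,\Psi_{\ell m}\rangle|\le\|\mathsf P_\ell g\|\to0$.

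For the residual, decompose $\widehat{\mathcal A}_{\rm HS}$ on the $\ell$th sector as $\widehat{\mathcal N}_\ell-\widehat{\mathcal C}_\ell$. The gain contribution is bounded by $\|\widehat{\mathcal C}_\ell\|$, which tends to $0$ by \eqref{eq:Cldecay}. It remains to show $\|(\kappa\pi G(\widehat X_\ell)-\lambda(x_0))\Psi_{\ell m}\|\to0$. I would work in the Burnett realization, where by the radial intertwining proposition this loss operator is multiplication by $\kappa\pi(G(x)-G(x_0))$. The squared residual is then
\[
 \kappa^2\pi^2\int_0^\infty |G(x)-G(x_0)|^2\,\dd\gamma_\ell(x),
\]
where $\dd\gamma_\ell=|\psi_{\ell,t}|^2\dd\rho_\ell$ is the Gamma law of shape $\beta_\ell$ and rate $1+2t=\beta_\ell/x_0$, as in \eqref{eq:trial_x_moments}. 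Thus $\gamma_\ell$ has mean $x_0$ and variance $x_0^2/\beta_\ell\to0$.

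The main obstacle is that $G$ is unbounded, so weak convergence $\gamma_\ell\Rightarrow\delta_{x_0}$ (from Chebyshev) does not by itself suffice. I would use the bound $|G(x)-G(x_0)|\le C(1+|x-x_0|)$, valid because $G$ is smooth, grows like $\sqrt{2x}$ by \eqref{eq:Glarge}, and has bounded derivative away from nowhere-singular points. In fact $G$ is Lipschitz on $[0,\infty)$, since $G'$ is bounded: it is finite near $0$ by \eqref{eq:Gsmall} and decays like $x^{-1/2}$ at infinity. With $L$ the Lipschitz constant, this gives
\[
 \int|G(x)-G(x_0)|^2\dd\gamma_\ell\le L^2\int|x-x_0|^2\dd\gamma_\ell=L^2x_0^2/\beta_\ell\to0.
\]
This also confirms that $\Psi_{\ell m}\in\Dom(G(\widehat X_\ell))$, since $\psi_{\ell,t}$ decays exponentially. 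Combining the loss and gain estimates yields \eqref{eq:full_continuum_Weyl}.

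Finally, $G$ is continuous and increasing. Monotonicity follows from convexity and radiality, since the minimum is at $0$. Together with $G(x_0)\to G(0)$ as $x_0\downarrow0$ and $G\to\infty$, this shows that $\lambda$ maps $(0,\infty)$ onto $(\nu_{\min},\infty)$. Strict monotonicity is not needed for the covering, only continuity and the limits at the endpoints, together with the fact that $G>G(0)$ for $x>0$, which follows from strict convexity of the averaged norm.
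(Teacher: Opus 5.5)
Your proof is correct, and its skeleton (moments from the squeeze scaling of $\widehat X$, weak convergence from orthogonality of the angular sectors, the gain residual bounded by $\|\widehat{\mathcal C}_\ell\|\to0$, and coverage by continuity of $G$) is the paper's. The genuinely different step is the loss residual.

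The paper argues softly. It fixes a cutoff $R$ and uses uniform continuity of $G$ on $[0,R]$ together with convergence in probability of the Gamma laws to $x_0$. It then controls the tail on $(R,\infty)$ by the growth bound $G^2\le C(1+x)$ and the uniformly bounded second moments. You instead use a global Lipschitz bound, which reduces everything to the variance identity. This yields the explicit rate $\|(\widehat{\mathcal N}_\ell-\lambda(x_0))\Psi_{\ell m}\|\le\kappa\pi L x_0\beta_\ell^{-1/2}$. The paper's route needs only continuity and a growth bound, so it is more robust to the regularity of the collision frequency, but it gives no rate. In both versions the full Weyl residual has no rate, since $\|\widehat{\mathcal C}_\ell\|\to0$ is only qualitative.

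Your justification of the Lipschitz property is a little loose, because asymptotic expansions of $G$ do not by themselves control $G'$. It is easily repaired in either of two ways:
\begin{enumerate}
\item Note that $G$ is entire in $x$, because $s\operatorname{erf}(s)$ and $\operatorname{erf}(s)/s$ are even entire functions of $s=\sqrt x$. Then $G'(x)\to0$ follows from differentiating the explicit formula.
\item More simply, use that $G(x)=\int|\bm u-\bm v|M(\bm u)\dd\bm u$ is $1$-Lipschitz in $\bm v$. This gives $|G(x)-G(x_0)|\le|\sqrt{2x}-\sqrt{2x_0}|\le\sqrt{2/x_0}\,|x-x_0|$, and hence $\int|G-G(x_0)|^2\dd\gamma_\ell\le 2x_0/\beta_\ell$.
\end{enumerate}

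Your strict-convexity remark also sharpens the paper's ``contains'' to the exact identity $\lambda((0,\infty))=(\nu_{\min},\infty)$.
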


\begin{proof}
Equation~\eqref{eq:squeezed_X_moments} gives
\eqref{eq:continuum_X_concentration}.  Hence the spectral measure of
$\widehat X$ in the squeezed state concentrates at $x_0$.  The growth bound
$G(x)^2\le C(1+x)$ also shows that every squeezed state lies in the natural
domain of $G(\widehat X)$.  To pass from concentration of $X$ to norm
convergence of $G(X)$, fix a large $R$.  On $[0,R]$, continuity of $G$ is
uniform and Eq.~\eqref{eq:continuum_X_concentration} gives convergence in
probability to $x_0$; on $(R,\infty)$, the bound on $G^2$ together with the
uniformly bounded second moments of the associated Gamma laws makes the tail
uniformly small.  Therefore
\begin{equation}
 \|[\widehat{\mathcal N}_\ell-\kappa\pi G(x_0)]
 |\Psi_{\ell m}(r_\ell(x_0))\rangle\|
 \longrightarrow0.
 \label{eq:continuum_loss_Weyl}
\end{equation}
By Eq.~\eqref{eq:Cldecay},
$\|\widehat{\mathcal C}_\ell\|\to0$, so the same norm convergence holds for
the complete collision block.  Choosing one magnetic copy in each $\ell$
sector, for example $m=0$, places the states in mutually orthogonal angular
sectors; hence they converge weakly to zero in the full Fock space.  This is exactly the criterion in
Eq.~\eqref{eq:Weyl_criterion_reminder} for the full operator.  Finally, continuity of $G$, its global minimum $G(0)$, and $G(x)\to\infty$ imply that
$G((0,\infty))$ contains $(G(0),\infty)$, which is exactly the interior-band
coverage asserted above.
\end{proof}

Two distinctions are useful.  The theorem constructs interior-band Weyl
sequences by letting the angular sector vary, whereas
Theorem~\ref{thm:threshold_fixed_l} below reaches the endpoint
$\nu_{\min}$ within every fixed angular sector.  The common essential band of
each fixed block was already established abstractly in
Theorem~\ref{thm:common_essential}; the value of the squeezed construction is
that it exhibits explicit normalized Fock packets that realize the continuum
rather than merely identifying its spectral range.

The radial location of these continuum packets is equally informative.  From
Eq.~\eqref{eq:squeezed_j_moments} and
Eq.~\eqref{eq:continuum_squeeze_choice},
\begin{equation}
 \boxed{
 \langle j\rangle_{\ell,r_\ell(x_0)}
 =\frac{\beta_\ell^2}{4x_0}
 -\frac{\beta_\ell}{2}
 +\frac{x_0}{4}
 \sim\frac{\beta_\ell^2}{4x_0}.
 }
 \label{eq:continuum_j_corridor}
\end{equation}
Thus a fixed continuum energy is represented asymptotically along a quadratic
corridor in the $(\ell,j)$ plane.  This is the exact squeezed-state
counterpart of the semiclassical relation
$x_-(j,\ell)\sim\beta_\ell^2/(4j)$ derived later.  In this sense the
lower-left finite-Sonine corner, the continuum-energy corridors, and the
threshold escape are not different representations: they are different
regions of the same untruncated Fock tower.

\subsection{Threshold escape at large angular momentum}

The full-band theorem keeps $x_0>0$ fixed while $\ell$ grows.  To reach the
endpoint itself one lets the target energy drift to zero.  This gives the
complementary high-$\ell$ escape mechanism and makes precise how far beyond a
fixed Sonine window the packet must move.  In the Burnett realization the loss
is multiplication by $\kappa\pi G(x)$.  The small- and
large-$x$ expansions \eqref{eq:Gsmall}--\eqref{eq:Glarge} imply that there is
a constant $C_G$ such that
\begin{equation}
 0\le G(x)-G(0)\le C_G\sqrt{x},
 \qquad x\ge0.
 \label{eq:G_threshold_bound}
\end{equation}
Consequently the normalized squeezed state satisfies
\begin{equation}
 0\le
 \langle\Psi_{\ell m}(r)|\widehat{\mathcal N}_\ell|\Psi_{\ell m}(r)\rangle
 -\nu_{\min}
 \le
 \kappa\pi C_G\sqrt{\beta_\ell\e^{-2r}}.
 \label{eq:trial_loss_bound}
\end{equation}

\begin{theorem}[Constructive high-$\ell$ threshold limit]
Let $r_\ell\ge0$ satisfy
\begin{equation}
 \beta_\ell\e^{-2r_\ell}\longrightarrow0.
 \label{eq:r_escape_condition}
\end{equation}
Equivalently, with the parameter $t_\ell$ of
Eq.~\eqref{eq:t_r_q_relation},
\begin{equation}
 \frac{t_\ell}{\beta_\ell}\longrightarrow\infty.
 \label{eq:t_escape_condition}
\end{equation}
Then
\begin{equation}
 \langle\Psi_{\ell m}(r_\ell)|
 \widehat{\mathcal A}_\ell|\Psi_{\ell m}(r_\ell)\rangle
 \longrightarrow\nu_{\min}.
 \label{eq:trial_full_threshold}
\end{equation}
In particular,
\begin{equation}
 \boxed{
 \inf\spec(\widehat{\mathcal A}_\ell)\longrightarrow\nu_{\min}
 \qquad(\ell\to\infty).
 }
 \label{eq:bottom_limit}
\end{equation}
\end{theorem}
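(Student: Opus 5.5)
The plan is to split $\widehat{\mathcal A}_\ell=\widehat{\mathcal N}_\ell-\widehat{\mathcal C}_\ell$ as in Eq.~\eqref{eq:global_loss_gain_roles} and control the two expectation values in the squeezed state separately.

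For the loss, Eq.~\eqref{eq:trial_loss_bound} already gives
\[
0\le\langle\Psi_{\ell m}(r_\ell)|\widehat{\mathcal N}_\ell|\Psi_{\ell m}(r_\ell)\rangle-\nu_{\min}\le\kappa\pi C_G\sqrt{\beta_\ell\e^{-2r_\ell}} .
\]
The right-hand side tends to zero by hypothesis \eqref{eq:r_escape_condition}. Behind this bound sit two facts: the threshold estimate \eqref{eq:G_threshold_bound} and Jensen's inequality, $\langle\sqrt{X}\rangle\le\sqrt{\langle X\rangle}$, applied with $\langle X\rangle_{\ell,r}=\beta_\ell\e^{-2r}$ from Eq.~\eqref{eq:squeezed_X_moments}. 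In the Burnett picture the same computation reads $\int(G-G(0))|\psi_{\ell,t}|^2\dd\rho_\ell$ for a Gamma law with mean $\beta_\ell/(1+2t)$. I would still verify \eqref{eq:G_threshold_bound} briefly. Near $0$ it follows from $G(x)-G(0)=O(x)\le C\sqrt x$ on $[0,1]$. For large $x$ it follows from $G(x)\le\sqrt{2x}+1/\sqrt{2x}+O(\e^{-x})$. Continuity covers the compact middle range, and nonnegativity holds because $G(0)$ is the global minimum. I would also record that the squeezed state lies in $\Dom(G(\widehat X_\ell))$, since $G^2\le C(1+x)$ and the Gamma law has a finite first moment.

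For the gain, Cauchy--Schwarz gives $|\langle\Psi|\widehat{\mathcal C}_\ell|\Psi\rangle|\le\|\widehat{\mathcal C}_\ell\|$ for the normalized squeezed ket. This tends to zero by the decay lemma \eqref{eq:Cldecay}. Subtracting the two estimates yields \eqref{eq:trial_full_threshold}.

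The statement \eqref{eq:bottom_limit} then follows from a variational sandwich. The Rayleigh quotient gives the upper bound $\inf\spec(\widehat{\mathcal A}_\ell)\le\langle\Psi|\widehat{\mathcal A}_\ell|\Psi\rangle\to\nu_{\min}$. The lower bound $\inf\spec(\widehat{\mathcal A}_\ell)\ge\nu_{\min}-\|\widehat{\mathcal C}_\ell\|$ comes from $\widehat{\mathcal N}_\ell\ge\nu_{\min}I$ and tends to $\nu_{\min}$.

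For the equivalence of \eqref{eq:r_escape_condition} with \eqref{eq:t_escape_condition}, use $1+2t=\e^{2r}$ from Eq.~\eqref{eq:t_r_q_relation}. Then $\beta_\ell\e^{-2r}=\beta_\ell/(1+2t)$, which tends to zero exactly when $t_\ell/\beta_\ell\to\infty$, because $\beta_\ell\to\infty$ makes the additive $1$ negligible.

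The only step with real content is the uniform threshold bound \eqref{eq:G_threshold_bound}. Its square-root form, rather than the linear small-$x$ behaviour, is what lets Jensen's inequality absorb the heavy upper tail of the Gamma law. If that bound were unavailable, I would fall back on truncation: split at $x=R$ and use Chebyshev's inequality with the variance $\beta_\ell/(1+2t)^2$ together with $G^2\le C(1+x)$.
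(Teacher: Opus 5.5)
Your proposal is correct and follows the paper's proof essentially step for step: the loss estimate \eqref{eq:trial_loss_bound} (threshold bound plus Jensen with $\langle X\rangle_{\ell,r}=\beta_\ell\e^{-2r}$), the gain bound $|\langle\Psi|\widehat{\mathcal C}_\ell|\Psi\rangle|\le\|\widehat{\mathcal C}_\ell\|\to0$, and the variational sandwich using $\widehat{\mathcal N}_\ell\ge\nu_{\min}I$. One minor correction concerns your closing remark, which slightly overstates the role of the square-root form: since $G$ grows only like $\sqrt{2x}$, a global linear bound $G(x)-G(0)\le Cx$ also holds, and it gives the loss estimate directly from $\langle X\rangle_{\ell,r}$ without Jensen, with a faster rate.
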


\begin{proof}
Equation~\eqref{eq:trial_loss_bound} and
$\beta_\ell\e^{-2r_\ell}\to0$ give convergence of the loss expectation to
$\nu_{\min}$.  Since
\begin{equation}
 |\langle\Psi_{\ell m}(r_\ell)|\widehat{\mathcal C}_\ell|
 \Psi_{\ell m}(r_\ell)\rangle|
 \le\|\widehat{\mathcal C}_\ell\|
 \longrightarrow0
\end{equation}
by Eq.~\eqref{eq:Cldecay}, the same limit holds for the full operator.  The
variational principle \cite{ReedSimon1980} gives the upper bound on the spectral bottom, while
$\widehat{\mathcal A}_\ell\ge
(\nu_{\min}-\|\widehat{\mathcal C}_\ell\|)I$ gives the matching
lower bound.
\end{proof}

This proof is deliberately stronger in interpretation than the soft argument
in Sec.~\ref{sec:global_spectral}.  It does not merely know that the threshold belongs to the
spectrum; it displays a normalized path through the exact radial modules that
finds it.  The fixed-$j$ and threshold-seeking paths are therefore
complementary: $|j_0,\ell,m\rangle$ with $j_0$ fixed remains on the
high-angular boundary where the loss grows as $\sqrt\ell$, whereas
$|\Psi_{\ell m}(r_\ell)\rangle$ has definite $\ell$ but its radial
probability distribution escapes to increasing $j$.

\subsection{How far in radial index must the threshold-seeking state move?}

The fixed-energy result in Sec.~\ref{sec:fock_essential_continuum} already
shows that an $O(1)$ radial energy $x_0>0$ is represented along the quadratic
corridor $\langle j\rangle\sim\beta_\ell^2/(4x_0)$.  For the threshold
path itself, put
\begin{equation}
 x_\ell:=\langle X\rangle_{\ell,r_\ell}
 =\beta_\ell\e^{-2r_\ell}.
\end{equation}
Combining the exact identities in Eqs.~\eqref{eq:squeezed_j_moments} and
\eqref{eq:squeezed_X_moments} gives
\begin{equation}
 \langle j\rangle_{\ell,r_\ell}
 =\frac{\beta_\ell^2}{4x_\ell}
 -\frac{\beta_\ell}{2}
 +\frac{x_\ell}{4}.
 \label{eq:threshold_j_exact}
\end{equation}
Hence the threshold condition $x_\ell\to0$ forces
\begin{equation}
 \frac{\langle j\rangle_{\ell,r_\ell}}{\beta_\ell^2}
 \longrightarrow\infty.
 \label{eq:superquadratic_threshold_escape}
\end{equation}
Equivalently, along large squeezes,
\begin{equation}
 \langle j\rangle_{\ell,r}
 \langle X\rangle_{\ell,r}
 \sim\frac{\beta_\ell^2}{4}.
 \label{eq:jx_product}
\end{equation}
For this explicit family, therefore, quadratic radial depth reaches any fixed
finite continuum energy, while convergence all the way to $x=0$ requires a
prefactor that itself diverges.  This is a constructive realization, not a
universal minimality theorem for all possible states; the same quadratic
geometry will reappear independently in the joint semiclassical symbol.

We can now state the noncommutation of limits without importing its second
line from the earlier essential-spectrum theorem:
\begin{equation}
 \boxed{
 \begin{aligned}
 N\ \text{fixed},\ \ell\to\infty:
 &\qquad
 \lambda^{(N)}_{\ell,k}\sim\kappa\pi\sqrt{2\ell},
 \\
 \text{squeezed escaping radial states},\ \ell\to\infty:
 &\qquad
 \inf\spec(\widehat{\mathcal A}_\ell)\to\nu_{\min}.
 \end{aligned}
 }
 \label{eq:noncommute}
\end{equation}
The first line is precisely the fixed-$N$ consequence of the relative
diagonalization \eqref{eq:large_l_relative_diagonalization}: every eigenvalue of
a fixed $(N+1)$-state Sonine compression shares the leading
$\kappa\pi\sqrt{2\ell}$ scale.  Equation~\eqref{eq:Ritz_large_l} resolves the
first $O(1)$ correction as the finite Hermite recurrence spectrum
$h_{N+1,k}$; throughout this statement $N$ is held fixed before
$\ell\to\infty$.  The second line follows a different family in which the radial
support itself escapes with $\ell$.
No fixed Sonine depth can therefore resolve the high-$\ell$ threshold region.
The obstruction is not a failure of low-order matrix elements: it is the
migration of the relevant state out of every fixed radial window.  The
squeezed ket makes this migration explicit and shows simultaneously how an
outward motion in the pair-created index can compensate the angular outward
displacement of the physical radial weight.

\section{Large radial index at fixed angular momentum}
\label{sec:large_j}

Large angular momentum at fixed radial depth suppresses relative mixing and
produces asymptotic diagonalization.  The complementary boundary of the index
plane behaves in the opposite way.  Far out along a \emph{fixed} angular
tower the diagonal and off-diagonal loss matrix elements grow on the same
$\sqrt j$ scale, so radial mixing does not collapse.  Instead the coefficients
become locally translation invariant and organize themselves into a universal
Toeplitz operator.

We now keep $\ell$ fixed and let the radial index tend to infinity.  This is a
different semiclassical regime of the same exact Jacobi operator.  Besides
describing the high-energy tail, we shall see that this regime already
contains a constructive mechanism for the threshold $x=0$.

To describe a fixed window around a large radial index, write the nearby sites
as $j+r$ with $r\in\mathbb Z$ fixed while $j\to\infty$.  In this moving
window the lower boundary $j=0$ recedes indefinitely far away, so the radial
half-lattice $\mathbb N_0$ becomes locally the translation-invariant lattice
$\mathbb Z$.  Let $S$ and $S^\ast$ denote the two bilateral shifts on
$\ell^2(\mathbb Z)$,
\begin{equation}
 (Su)_r=u_{r-1},
 \qquad
 (S^\ast u)_r=u_{r+1}.
 \label{eq:bilateral_shifts}
\end{equation}
Around the large index $j$, the rescaled Jacobi coefficients satisfy
\begin{equation}
 \frac{(J_\ell)_{j+r,j+r}}{j}\to2,
 \qquad
 \frac{(J_\ell)_{j+r,j+r+1}}{j}\to1
\end{equation}
for every fixed integer $r$.  Hence the local limiting Jacobi operator is
\begin{equation}
 J_\infty=2I+S+S^\ast,
 \label{eq:Jinf}
\end{equation}
or, equivalently,
\begin{equation}
 (J_\infty u)_r=2u_r+u_{r-1}+u_{r+1}.
\end{equation}
Because this operator is translation invariant, it is diagonalized by the
generalized lattice Fourier modes
\begin{equation}
 u_r(\theta)=\e^{ir\theta},
 \qquad -\pi\le\theta\le\pi.
\end{equation}
Indeed,
\begin{equation}
 Su(\theta)=\e^{-i\theta}u(\theta),
 \qquad
 S^\ast u(\theta)=\e^{i\theta}u(\theta),
\end{equation}
so
\begin{equation}
 J_\infty u(\theta)
 =\left(2+\e^{-i\theta}+\e^{i\theta}\right)u(\theta).
\end{equation}
The Fourier symbol is therefore
\begin{equation}
 q(\theta)=2+2\cos\theta
 =4\cos^2(\theta/2).
 \label{eq:qtheta}
\end{equation}
Here $\theta$ is not a physical velocity-space angle.  It is the Fourier phase,
or quasi-momentum, conjugate to translation along the locally homogeneous
radial-index lattice.  In particular, $\theta=\pi$ corresponds to the
alternating lattice pattern $u_r=(-1)^r$.
Using \eqref{eq:Glarge},
\begin{equation}
 \frac{G(jq)}{\sqrt j}
 \longrightarrow
 \sqrt{2q}.
\end{equation}
Hence the limiting loss symbol is
\begin{equation}
 g_\infty(\theta)
 =
 \sqrt{2q(\theta)}
 =
 2\sqrt2\,|\cos(\theta/2)|.
 \label{eq:ginf}
\end{equation}

\begin{theorem}[Local high-$j$ Toeplitz limit]
Fix $\ell$ and integers $r,s$.  Then
\begin{equation}
 \boxed{
 \lim_{j\to\infty}
 \frac{
 \mathcal A^{(\ell)}_{j+r,j+s}
 }{\kappa\pi\sqrt j}
 =
 \tau_{s-r},
 }
 \label{eq:Toeplitz_limit}
\end{equation}
where
\begin{equation}
 \tau_d
 =
 \frac1{2\pi}
 \int_{-\pi}^{\pi}
 2\sqrt2\cos(\theta/2)\,
 \e^{-id\theta}\dd\theta
 =
 \frac{4\sqrt2}{\pi}
 \frac{(-1)^d}{1-4d^2}
 .
 \label{eq:tau}
\end{equation}
The limit is independent of $\ell$.
\end{theorem}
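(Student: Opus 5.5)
The limit follows from separating the block into loss and gain, $\mathcal A^{(\ell)}=\mathcal N^{(\ell)}-\mathcal C^{(\ell)}$ with $\mathcal N^{(\ell)}=\kappa\pi G(J_\ell)$. The gain is disposed of first. Since $\widehat{\mathcal C}_\ell$ is compact, hence bounded, every entry satisfies $|\mathcal C^{(\ell)}_{j+r,j+s}|\le\|\widehat{\mathcal C}_\ell\|$. Dividing by $\sqrt j$ therefore kills the gain in the limit, and it suffices to prove
\[
 \lim_{j\to\infty} j^{-1/2}\,G(J_\ell)_{j+r,j+s}=\tau_{s-r}.
\]
The scale $\sqrt j$ is natural. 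Because the $j$-th Burnett function is spread over physical energies $x\sim j$, the large-$x$ form \eqref{eq:Glarge} lets me replace $G$ by $\sqrt{2x}$ up to a bounded error. Concretely, $G(x)-\sqrt{2x}$ is bounded on $[0,\infty)$: near $0$ both terms are bounded, and near infinity the difference is $1/\sqrt{2x}+O(\e^{-x})$. By the spectral theorem, $G(J_\ell)-\sqrt2\,J_\ell^{1/2}$ is then a bounded operator, and its entries are $O(1)=o(\sqrt j)$. The problem reduces to $\lim_j j^{-1/2}(J_\ell^{1/2})_{j+r,j+s}=\tau_{s-r}/\sqrt2$.

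For $J_\ell^{1/2}$ I would use scaling and a localized functional calculus. Let $P_j$ be the orthogonal projection onto the window $\{j-M,\dots,j+M\}$ and identify it with sites $-M,\dots,M$ of $\mathbb Z$. On this window $J_\ell/j\to J_\infty=2I+S+S^\ast$ entrywise, and because the matrices are banded the convergence is in operator norm, uniformly on any fixed window. To transfer this to the square root, I would approximate $\sqrt y$ uniformly on $[0,\Lambda]$ by polynomials $p$. For banded $J_\ell$, the entry $p(J_\ell/j)_{j+r,j+s}$ involves only paths of length $\le\deg p$ and therefore converges to $p(J_\infty)_{rs}$. Two issues remain: the spectrum of $J_\ell/j$ is unbounded, and $\sqrt y$ is not smooth at $0$. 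To handle both I would write $\sqrt y=\chi(y)\sqrt y+(1-\chi(y))\sqrt y$ with a cutoff $\chi$ supported on $[0,\Lambda]$.

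The tail piece is controlled by the spectral measure of the basis vector $e_{j+r}$ for $J_\ell/j$. Its second moment is $\|J_\ell e_{j+r}\|^2/j^2$, which is bounded uniformly in $j$, roughly by $4+2+\dots$. Cauchy--Schwarz then bounds the tail contribution to the entry by $O(\Lambda^{-1/2})$, uniformly in $j$. On the bounded piece, Weierstrass approximation gives entry errors at most $\sup|\chi\sqrt y-p|$, again uniformly in $j$. Letting $j\to\infty$, then refining $p$, then $\Lambda\to\infty$, yields $(\sqrt{J_\infty})_{rs}$. I expect this uniform, non-smooth-at-threshold approximation step to be the main technical obstacle. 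It is exactly where $\theta=\pi$, the zero of the symbol, lives, so it should be checked carefully, although uniform continuity of $\sqrt y$ on compacts suffices.

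Finally, I would identify $(\sqrt{J_\infty})_{rs}$ through the Fourier transform on $\ell^2(\mathbb Z)$. There $J_\infty$ becomes multiplication by $q(\theta)=4\cos^2(\theta/2)$, so $\sqrt{J_\infty}$ is the Laurent operator with symbol $2|\cos(\theta/2)|$. On $[-\pi,\pi]$ this equals $2\cos(\theta/2)$, and its matrix entries are its Fourier coefficients, depending only on $s-r$. A direct integration gives $\frac1{2\pi}\int_{-\pi}^{\pi}2\sqrt2\cos(\theta/2)\e^{-id\theta}\dd\theta=\frac{4\sqrt2}{\pi}\frac{(-1)^d}{1-4d^2}$, using $\int_0^\pi\cos(\theta/2)\cos(d\theta)\dd\theta=\frac{2(-1)^d}{1-4d^2}$. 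Multiplying by $\sqrt2$ from the reduction gives $\tau_{s-r}$, and the sign convention is immaterial since $\tau_d$ is even in $d$. The limit is independent of $\ell$ because $\beta_\ell$ enters the Jacobi coefficients only at relative order $1/j$, and the gain was shown to be negligible.
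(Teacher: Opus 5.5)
There is a genuine gap in your treatment of the bounded piece. Your overall architecture matches the paper's proof: the bounded gain is killed by $\sqrt j$, the boundedness of $G(x)-\sqrt{2x}$ reduces the problem to $(J_\ell/j)^{1/2}$, polynomial entries converge by path counting, Weierstrass approximation is combined with a moment tail bound, and the limit is identified through the Fourier symbol of $J_\infty$. The paper uses quadratic forms at translated finitely supported vectors and then polarizes. Working directly with matrix entries and complex spectral measures, as you do, is equally fine.

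The failing step is the claim that Weierstrass approximation gives entry errors at most $\sup|\chi\sqrt y-p|$, uniformly in $j$. The entry error is $\int_{[0,\infty)}(\chi\sqrt y-p)\,\dd\mu_j^{rs}(y)$, where $\mu_j^{rs}$ is the spectral measure of $J_\ell/j$ between $e_{j+r}$ and $e_{j+s}$. The spectrum of $J_\ell/j$ is all of $[0,\infty)$, since $J_\ell$ is unitarily equivalent to multiplication by $x$. A polynomial $p$ approximating $\chi\sqrt y$ on $[0,\Lambda]$ is unbounded on $(\Lambda,\infty)$, and it grows there as you refine the approximation. Your uniform second-moment bound cannot control $\int_{y>\Lambda}|p|\,\dd\mu_j$ for such $p$. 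So this, and not the non-smoothness of $\sqrt y$ at $y=0$, is the real technical obstacle; uniform continuity on compacts handles $y=0$ trivially.

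What is missing is the fact that, asymptotically, the spectral mass at the window vectors lies in $[0,4]$, even against polynomial weights. The paper gets this as follows.
\begin{itemize}
\item It shows that all moments of $\mu_j^u$ converge to those of $\mu_\infty^u$.
\item Any weak limit has moments at most $\|u\|^2 4^n$, so it is supported on $[0,4]$; hence $\mu_j^u\rightharpoonup\mu_\infty^u$.
\item It then applies weak convergence to a bounded continuous truncation of $\sqrt x$, with the first moment controlling the tail. No polynomial is ever evaluated off a compact set.
\end{itemize}

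In your entrywise framework the repair is short.
\begin{itemize}
\item By path counting, for each fixed $k$ one has $\langle e_{j+r},(J_\ell/j)^k e_{j+r}\rangle\to\langle e_r,J_\infty^k e_r\rangle\le\|J_\infty\|^k=4^k$.
\item Fix $\Lambda>4$ and a polynomial $p$ with $|p(y)|\le C_p(1+y^{\deg p})$. Then $\limsup_j\int_{y>\Lambda}|p|\,\dd\mu_j^{r}\le C_p(4/\Lambda)^n(1+4^{\deg p})$ for every $n$, so this $\limsup$ is $0$.
\item Cauchy--Schwarz, $|\int_B f\,\dd\mu_j^{rs}|\le(\int_B|f|\,\dd\mu_j^{r})^{1/2}(\int_B|f|\,\dd\mu_j^{s})^{1/2}$, transfers this to the off-diagonal measure.
\end{itemize}
With $\Lambda>4$ fixed, the order ``first $j\to\infty$, then refine $p$'' now closes the argument. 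The rest of your proposal goes through unchanged, including the Fourier computation $\int_0^\pi\cos(\theta/2)\cos(d\theta)\,\dd\theta=2(-1)^d/(1-4d^2)$.
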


\begin{proof}
Shift the radial basis so that $|j+r\rangle$ is identified with the fixed
site $|r\rangle$ in $\ell^2(\mathbb Z)$.  On every fixed finite window,
$J_\ell/j$ converges coefficientwise to $J_\infty$ in
\eqref{eq:Jinf}.  Hence every fixed polynomial matrix element converges to the
corresponding matrix element of $J_\infty$.

To pass from polynomials to the collision-frequency function, set
\begin{equation}
 f_j(x):=\frac{G(jx)}{\sqrt j}.
\end{equation}
The explicit formula for $G$ implies that
$G(y)-\sqrt{2y}$ is bounded on $[0,\infty)$; consequently
\begin{equation}
 \sup_{x\ge0}|f_j(x)-\sqrt{2x}|=O(j^{-1/2}).
 \label{eq:fj_uniform}
\end{equation}
It remains to pass from polynomial matrix elements of $J_\ell/j$ to those of
its square root.

Let $u=\sum_r u_r e_r$ be any fixed finitely supported vector on the moving
lattice, and for all sufficiently large $j$ let
$u^{(j)}:=\sum_r u_r|j+r\rangle$ be its translated vector in the original
radial half-lattice.  Denote by $\mu_j^u$ the positive spectral measure of
$J_\ell/j$ at $u^{(j)}$, and by $\mu_\infty^u$ the spectral measure of
$J_\infty$ at $u$.  For every polynomial $p$,
\begin{equation}
 \langle u^{(j)},p(J_\ell/j)u^{(j)}\rangle
 \longrightarrow
 \langle u,p(J_\infty)u\rangle,
 \label{eq:polynomial_quadratic_convergence}
\end{equation}
because a polynomial of fixed degree probes only a fixed finite window of
Jacobi coefficients.  Hence all moments of $\mu_j^u$ converge to those of
$\mu_\infty^u$.  The first moment bound makes the family tight, since its
mass above $R$ is bounded by $R^{-1}\int x\,\dd\mu_j^u$.  Along any weakly
convergent subsequence, the $n$th moment also passes to the limit: for $R>0$,
\begin{equation}
 \int_R^\infty x^n\,\dd\mu_j^u(x)
 \le R^{-n}\int_R^\infty x^{2n}\,\dd\mu_j^u(x),
\end{equation}
and the $2n$th moments are uniformly bounded.  Thus the limit has the same
moments as $\mu_\infty^u$.  Moreover, the common moments force any such weak limit to be supported on
$[0,4]$.  Indeed, if $\mu$ is a subsequential limit, then for every $R>4$ and
integer $n\ge1$,
\begin{equation}
 \mu([R,\infty))
 \le R^{-n}\int x^n\,\dd\mu(x)
 =R^{-n}\int x^n\,\dd\mu_\infty^u(x)
 \le \|u\|^2(4/R)^n,
\end{equation}
so letting $n\to\infty$ gives $\mu([R,\infty))=0$.  On the common compact
support $[0,4]$, equality of all moments implies equality of the measures,
because polynomials are dense in $C([0,4])$ by the Weierstrass approximation
theorem \cite{Rudin1976}.  Thus
\begin{equation}
 \mu_j^u\rightharpoonup\mu_\infty^u.
\end{equation}

The remaining point is that the limiting function $\sqrt{x}$ is unbounded on
$[0,\infty)$, so weak convergence of spectral measures alone is not enough.
Its spectral tail is nevertheless uniformly controlled by the first moment.
Indeed,
\begin{equation}
 \langle u^{(j)},(J_\ell/j)u^{(j)}\rangle=O(1),
\end{equation}
and therefore, for $R>0$,
\begin{equation}
 \int_R^\infty\sqrt{x}\,\dd\mu_j^u(x)
 \le
 \frac1{\sqrt R}\int_R^\infty x\,\dd\mu_j^u(x)
 \le\frac{C_u}{\sqrt R}.
 \label{eq:sqrt_uniform_integrability}
\end{equation}
Apply weak convergence first to a bounded continuous truncation of
$\sqrt{x}$ and then let $R\to\infty$ using
\eqref{eq:sqrt_uniform_integrability}.  This gives
\begin{equation}
 \langle u^{(j)},\sqrt{2J_\ell/j}\,u^{(j)}\rangle
 \longrightarrow
 \langle u,\sqrt{2J_\infty}\,u\rangle.
 \label{eq:sqrt_quadratic_convergence}
\end{equation}
Only at this point do we use polarization.  Applying
\eqref{eq:sqrt_quadratic_convergence} to
$u=e_r\pm e_s$ and $u=e_r\pm i e_s$ reconstructs the off-diagonal matrix
element and yields
\begin{equation}
 \langle j+r|\sqrt{2J_\ell/j}|j+s\rangle
 \longrightarrow
 \langle r|\sqrt{2J_\infty}|s\rangle.
\end{equation}
Together with the uniform functional-calculus estimate
\eqref{eq:fj_uniform}, this proves
\begin{equation}
 \langle j+r|f_j(J_\ell/j)|j+s\rangle
 \longrightarrow
 \langle r|\sqrt{2J_\infty}|s\rangle.
\end{equation}

Fourier transformation diagonalizes $J_\infty$ with symbol
\eqref{eq:qtheta}, giving the Fourier coefficient
\eqref{eq:tau}.  The gain contribution does not affect the normalized limit:
$\widehat{\mathcal C}_\ell$ is bounded, and therefore
\begin{equation}
 \frac{
 \langle j+r|\widehat{\mathcal C}_\ell|j+s\rangle
 }{\sqrt j}
 \to0.
\end{equation}
\end{proof}
The technical length of the proof is concentrated in one point: local
coefficient convergence of $J_\ell/j$ immediately controls polynomials, but the
hard-sphere asymptotic function is the unbounded square root.  The uniform
estimate \eqref{eq:fj_uniform} removes the model-specific remainder in $G$, and
the spectral-measure/tail argument justifies passage from polynomials to
$\sqrt{x}$.  Once that step is established, polarization and Fourier analysis
produce the Toeplitz coefficients directly.

The first coefficients are
\begin{equation}
 \tau_0=\frac{4\sqrt2}{\pi},
 \qquad
 \tau_1=\frac{4\sqrt2}{3\pi},
 \qquad
 \tau_2=-\frac{4\sqrt2}{15\pi},
 \qquad
 \tau_3=\frac{4\sqrt2}{35\pi}.
\end{equation}
Thus, for example,
\begin{equation}
 \mathcal A^{(\ell)}_{jj}
 \sim
 4\kappa\sqrt{2j},
 \qquad
 \mathcal A^{(\ell)}_{j,j+1}
 \sim
 \frac{4\kappa}{3}\sqrt{2j}.
 \label{eq:largej_first}
\end{equation}
Unlike the large-$\ell$ fixed-$j$ limit, the large-$j$ tail does not become
diagonal or even nearest-neighbor.  It approaches a universal Toeplitz matrix
whose off-diagonal coefficients decay only algebraically as $d^{-2}$.  The
loss has therefore changed from an angularly induced near-diagonal regime to a
radially delocalized one.

\subsection{The threshold as the large-squeeze limit}

The Toeplitz symbol contains an important clue:
\begin{equation}
 g_\infty(\pi)=0.
 \label{eq:toeplitz_zero}
\end{equation}
Because the theorem concerns the matrix $\mathcal A^{(\ell)}/\sqrt j$,
this zero should not
be read as a zero eigenvalue of the unscaled collision operator.  It says that
coherent combinations of neighboring high-$j$ states can cancel the leading
$O(\sqrt j)$ loss scale.  The squeezed sequence constructed in
Sec.~\ref{sec:squeezed_radial_path} shows that this cancellation is the local
high-$j$ signature of the actual threshold Weyl sequence.

Fix $\ell$ and let $r\to\infty$ in
$|\Psi_{\ell m}(r)\rangle$.  From Eq.~\eqref{eq:squeezed_X_moments},
\begin{equation}
 \langle X\rangle_{\ell,r}=\beta_\ell\e^{-2r}\longrightarrow0.
\end{equation}
Using Eq.~\eqref{eq:G_threshold_bound},
\begin{equation}
 \|\bigl(\widehat{\mathcal N}_\ell-\nu_{\min}\bigr)
 |\Psi_{\ell m}(r)\rangle\|^2
 \le
 (\kappa\pi C_G)^2\beta_\ell\e^{-2r}
 \longrightarrow0.
 \label{eq:weyl_loss_norm}
\end{equation}

The ket expansion \eqref{eq:squeezed_ket_expansion} also makes the escape in
radial index explicit.  For every fixed $j$,
\begin{equation}
 \langle j,\ell,m|\Psi_{\ell m}(r)\rangle
 =(1-q^2)^{\beta_\ell/2}
 (-q)^j\sqrt{\frac{(\beta_\ell)_j}{j!}}
 \longrightarrow0
 \qquad(r\to\infty),
\end{equation}
because $q=\tanh r\to1$ while the normalization prefactor tends to zero.
Since the vectors remain normalized and the finite span of the radial basis is
dense,
\begin{equation}
 |\Psi_{\ell m}(r)\rangle\rightharpoonup0
 \qquad(r\to\infty).
 \label{eq:weyl_weak}
\end{equation}
By the same weak-to-strong property of compact operators used in the proof of
Eq.~\eqref{eq:Cldecay}, compactness of the fixed-$\ell$ gain block implies
\begin{equation}
 \|\widehat{\mathcal C}_\ell|\Psi_{\ell m}(r)\rangle\|
 \longrightarrow0.
 \label{eq:weyl_gain}
\end{equation}
We obtain the following constructive form of the threshold statement.

\begin{theorem}[Threshold Weyl sequence in every angular sector]\label{thm:threshold_fixed_l}
For every fixed $\ell$ and any magnetic copy $m$,
\begin{equation}
 \boxed{
 \|\bigl(\widehat{\mathcal A}_\ell-\nu_{\min}\bigr)
 |\Psi_{\ell m}(r)\rangle\|
 \longrightarrow0,
 \qquad
 |\Psi_{\ell m}(r)\rangle\rightharpoonup0,
 \qquad r\to\infty.
 }
 \label{eq:threshold_Weyl_sequence}
\end{equation}
Hence $\nu_{\min}\in\specess(\widehat{\mathcal A}_\ell)$, equivalently
$\nu_{\min}\in\specess(\mathfrak A_\ell)$, by the same Weyl criterion
\eqref{eq:Weyl_criterion_reminder}.
\end{theorem}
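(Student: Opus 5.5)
The plan is to assemble the three estimates just established: \eqref{eq:weyl_loss_norm}, \eqref{eq:weyl_weak} and \eqref{eq:weyl_gain}. First I split the full block by the loss--gain decomposition \eqref{eq:global_loss_gain_roles}, $\widehat{\mathcal A}_\ell-\nu_{\min}=(\widehat{\mathcal N}_\ell-\nu_{\min})-\widehat{\mathcal C}_\ell$. The triangle inequality then gives
\[
 \|(\widehat{\mathcal A}_\ell-\nu_{\min})|\Psi_{\ell m}(r)\rangle\|
 \le\|(\widehat{\mathcal N}_\ell-\nu_{\min})|\Psi_{\ell m}(r)\rangle\|
 +\|\widehat{\mathcal C}_\ell|\Psi_{\ell m}(r)\rangle\|.
\]
The first term tends to zero by \eqref{eq:weyl_loss_norm}, and the second by \eqref{eq:weyl_gain}.

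The step needing real care is the loss estimate \eqref{eq:weyl_loss_norm}, and I would verify it in the Burnett realization. By the Radial intertwining proposition, $\widehat{\mathcal N}_\ell-\nu_{\min}$ becomes multiplication by $\kappa\pi(G(x)-G(0))$. The squeezed ket becomes $\psi_{\ell,t}$ of \eqref{eq:threshold_trial_state}, whose density $|\psi_{\ell,t}|^2\dd\rho_\ell$ is a Gamma law with mean $\beta_\ell/(1+2t)=\beta_\ell\e^{-2r}$ by \eqref{eq:trial_x_moments}. The bound \eqref{eq:G_threshold_bound} gives $(G(x)-G(0))^2\le C_G^2x$, so the squared norm is at most $(\kappa\pi C_G)^2\langle x\rangle_{\ell,t}$, which tends to zero as $r\to\infty$. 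The same computation, using $G^2\le C(1+x)$ and the finite first moment, shows $\psi_{\ell,t}\in\Dom(G(M_x))$, so each vector lies in $\Dom(\widehat{\mathcal A}_\ell)$ as Weyl's criterion requires.

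For the weak convergence, the coefficients against each fixed $|j,\ell,m\rangle$ vanish as $q\to1$. This holds because $(1-q^2)^{\beta_\ell/2}\to0$ while the remaining factor stays bounded. Combined with normalization, which follows from unitarity of $\widehat{\mathsf S}(r)$, and density of finite radial spans, a standard $\varepsilon/3$ argument yields $|\Psi_{\ell m}(r)\rangle\rightharpoonup0$. Compactness of $\widehat{\mathcal C}_\ell$ then turns weak convergence into norm convergence of the gain term. Here $\widehat{\mathcal C}_\ell$ is compact as a restriction of the compact $\widehat{\mathcal C}$ to a reducing subspace.

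Finally, since $\widehat{\mathcal A}_\ell$ is self-adjoint and $\{|\Psi_{\ell m}(r_n)\rangle\}$ with $r_n\to\infty$ is a normalized singular sequence, \eqref{eq:Weyl_criterion_reminder} gives $\nu_{\min}\in\specess(\widehat{\mathcal A}_\ell)$. Unitarity of $\mathcal R_{B,\ell}$, via \eqref{eq:two_radial_realizations}, transfers this to $\mathfrak A_\ell$. I expect no serious obstacle beyond making sure the loss bound is a genuine norm bound and not merely an expectation bound, which is exactly what the Gamma-law first-moment argument supplies.
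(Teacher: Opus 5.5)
Your proof is correct and follows the paper's argument essentially step for step. The loss term is controlled by $(G(x)-G(0))^2\le C_G^2x$ together with the first moment $\langle X\rangle_{\ell,r}=\beta_\ell\e^{-2r}\to0$. Weak nullity comes from the vanishing of each fixed radial coefficient plus normalization and density, and compactness of $\widehat{\mathcal C}_\ell$ kills the gain term. The only cosmetic differences are that you evaluate the first moment through the Burnett Gamma law rather than the abstract squeeze identity $\widehat{\mathsf S}(r)^\dagger\widehat X\widehat{\mathsf S}(r)=\e^{-2r}\widehat X$, and that you verify the domain condition explicitly, which the paper leaves implicit at this point.
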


The connection with the Toeplitz phase is now explicit rather than
suggestive.  In Eq.~\eqref{eq:squeezed_ket_expansion} the coefficient ratio
between neighboring high-$j$ components approaches
\begin{equation}
 \frac{c_{j+1}}{c_j}
 =-q\sqrt{\frac{j+\beta_\ell}{j+1}}
 \longrightarrow-1
\end{equation}
when first $j$ is large at fixed $\ell$ and then $r$ is large.  Thus the broad
radial packet carries asymptotically the alternating phase $(-1)^j$, namely
the lattice quasi-momentum $\theta=\pi$ at which the Toeplitz symbol
vanishes.  The large-$j$ Toeplitz zero therefore encodes the cancellation of
the leading $\sqrt j$ loss scale, while the exact squeezed sequence identifies
the surviving unscaled value as the physical threshold $\nu_{\min}$.

\section{Semiclassical synthesis in the \texorpdfstring{$(\ell,j)$}{(l,j)} plane}
\label{sec:semiclassical_synthesis}

The preceding two sections approached the radial operator from two different
edges of its index set.  At fixed radial depth and large angular momentum, the
operator becomes nearly diagonal at leading order and the first correction is
a universal Hermite--Jacobi coupling.  At fixed angular momentum and large
radial index, by contrast, the local tail approaches a nonlocal Toeplitz
operator.  A natural question is therefore what these two limits are telling
us about the operator away from either edge, when both $j$ and $\ell$ are
large.

The discussion in this section is a semiclassical synthesis of the exact
Jacobi structure, not an additional theorem.  Its purpose is to identify
variables and local symbols that interpolate between the two proved
asymptotic regimes and may guide a future joint-limit analysis.

The diagonal scale of the exact Jacobi operator is
\begin{equation}
 E_{j\ell}:=2j+\beta_\ell,
 \qquad
 \beta_\ell=\ell+\frac32.
 \label{eq:joint_energy_scale}
\end{equation}
Since the total oscillator degree is $N=2j+\ell$, one has
$E_{j\ell}=N+3/2$.  Thus $E_{j\ell}$ is the natural kinetic-energy scale
associated with the Burnett/Fock state $(j,\ell)$.  To distinguish how that
large degree is distributed between angular and radial excitation, introduce
\begin{equation}
 \sigma_{j\ell}
 :=\frac{\beta_\ell}{2j+\beta_\ell},
 \qquad 0<\sigma_{j\ell}\le1.
 \label{eq:sigma_joint}
\end{equation}
The limit $j\to\infty$ at fixed $\ell$ corresponds to $\sigma\to0$, whereas
$\ell\to\infty$ at fixed $j$ corresponds to $\sigma\to1$.

Around a large radial index $j$, freeze the slowly varying Jacobi coefficients
on a finite local window.  The resulting local symbol of $J_\ell$ is
\begin{equation}
 x_{j\ell}(\theta)
 \approx
 2j+\beta_\ell
 +2\sqrt{j(j+\beta_\ell)}\cos\theta,
 \qquad -\pi\le\theta\le\pi.
 \label{eq:joint_jacobi_symbol}
\end{equation}
Equivalently,
\begin{equation}
 \boxed{
 x_{j\ell}(\theta)
 \approx
 E_{j\ell}
 \left[
 1+\sqrt{1-\sigma_{j\ell}^{\,2}}\cos\theta
 \right].
 }
 \label{eq:joint_symbol_sigma}
\end{equation}
This formula provides a simple interpolation between the two boundary
regimes.  When $\sigma\to0$, the relative bandwidth remains of order one and
\eqref{eq:joint_symbol_sigma} reduces, after division by $j$, to the symbol
$4\cos^2(\theta/2)$ that generated the large-$j$ Toeplitz limit.  When
$\sigma\to1$, the relative bandwidth collapses and the leading operator is
nearly diagonal; resolving the next order produces the large-$\ell$
Hermite--Jacobi limit.

The local energy interval associated with
\eqref{eq:joint_jacobi_symbol} has edges
\begin{equation}
 \boxed{
 x_\pm(j,\ell)
 =
 \left(\sqrt{j+\beta_\ell}\pm\sqrt j\right)^2.
 }
 \label{eq:xpm_joint}
\end{equation}
The lower edge $x_-$ is particularly informative because the essential band
is parameterized by the physical radial energy $x\ge0$, with the common
threshold attained at $x=0$.  In view of
Sec.~\ref{sec:fock_essential_continuum}, contours $x_-(j,\ell)\approx x_0$
may be read, at the present semiclassical level, as approximate locations of
Fock packets representing the continuum energy $\kappa\pi G(x_0)$.  At fixed
$j$, $x_-$ grows with $\ell$, which is another way to see why a fixed Sonine
depth cannot represent the high-$\ell$ threshold region.  Conversely, when
$j\gg\beta_\ell$,
\begin{equation}
 x_-(j,\ell)
 =
 \frac{\beta_\ell^2}{4j}
 \left[1+O\!\left(\frac{\beta_\ell}{j}\right)\right].
 \label{eq:xminus_mixed_scaling}
\end{equation}
Thus keeping the lower local energy edge of order one as $\ell\to\infty$
suggests the mixed scaling $j=O(\ell^2)$.  The exact squeezed family gives an
independent check: Eq.~\eqref{eq:continuum_j_corridor} places a prescribed
energy $x_0>0$ at
$\langle j\rangle\sim\beta_\ell^2/(4x_0)$, while
Eq.~\eqref{eq:threshold_j_exact} shows that $x_0\downarrow0$ pushes the
required radial depth beyond every fixed multiple of $\ell^2$.  The
local-symbol edge and the exact squeezed family therefore identify the same
quadratic continuum corridors, with the threshold as their limiting escape.

\paragraph{Three qualitative regimes.}
The two proved boundary asymptotics, the explicit threshold families, and the
joint symbol now admit a unified spectral interpretation.

First, at fixed $\ell$ and $j\to\infty$ the loss matrix elements grow like
$\sqrt j$, but radial mixing remains of the same order and converges locally to
a translation-invariant Toeplitz pattern.  The symbol
$2\sqrt2|\cos(\theta/2)|$ vanishes at $\theta=\pi$.  The exact Weyl family
shows what this means in the unscaled problem: an alternating high-$j$ phase
cancels the leading divergent loss and leaves the finite threshold
$\nu_{\min}$.  This is a continuum-forming, delocalized mechanism rather than
a new localized discrete branch.

Second, at fixed radial depth and $\ell\to\infty$ the situation is opposite.
The leading loss grows as $\sqrt\ell I$, nearest-neighbor mixing is only
$O(1)$, farther couplings vanish, and $\|\widehat{\mathcal C}_\ell\|\to0$.  Relative to
its leading scale the loss therefore collapses toward the identity.  No
coherent combination inside a fixed Sonine window can compensate this angular
outward shift; every such finite section is driven to large collision
frequencies.

Third, the exact high-$\ell$ threshold is recovered by moving into the
interior of the $(\ell,j)$ plane.  The trial family proves that a fixed
low-energy layer requires typical $j$ of quadratic order in $\ell$, while the
local edge $x_-$ gives the same scale semiclassically.  The threshold itself
requires still deeper radial escape.  The interior therefore interpolates
between angularly driven near-diagonality and radial Toeplitz cancellation.

The resulting picture is genuinely two-parameter.  Large $j$ at fixed $\ell$
is a high-radial-excitation direction whose local spectrum contains a zero
of its Toeplitz symbol at quasi-momentum $\theta=\pi$ after $\sqrt j$
scaling; large $\ell$ at fixed $j$ is a
high-angular-momentum direction whose radial weight is displaced outward and
whose relative bandwidth collapses.  The variables
$(E_{j\ell},\sigma_{j\ell})$ separate these effects and provide natural
coordinates for future joint asymptotic analysis.

\subsection{Semiclassical phase-plane diagnostics}

The joint symbol gives several scalar functions of $(\ell,j)$ that can be
visualized without introducing a fictitious scalar ``operator
$\mathcal A(j,\ell)$''.  The full collision operator has the three indices
$(\ell,j,j')$; the figures below instead display precisely defined diagnostics
of the frozen Jacobi symbol \eqref{eq:joint_jacobi_symbol}.  They therefore
summarize the semiclassical geometry used in the preceding discussion, while
making no replacement of the compact gain operator by a scalar model.

\begin{figure}[p]
\centering
\includegraphics[width=0.80\textwidth]{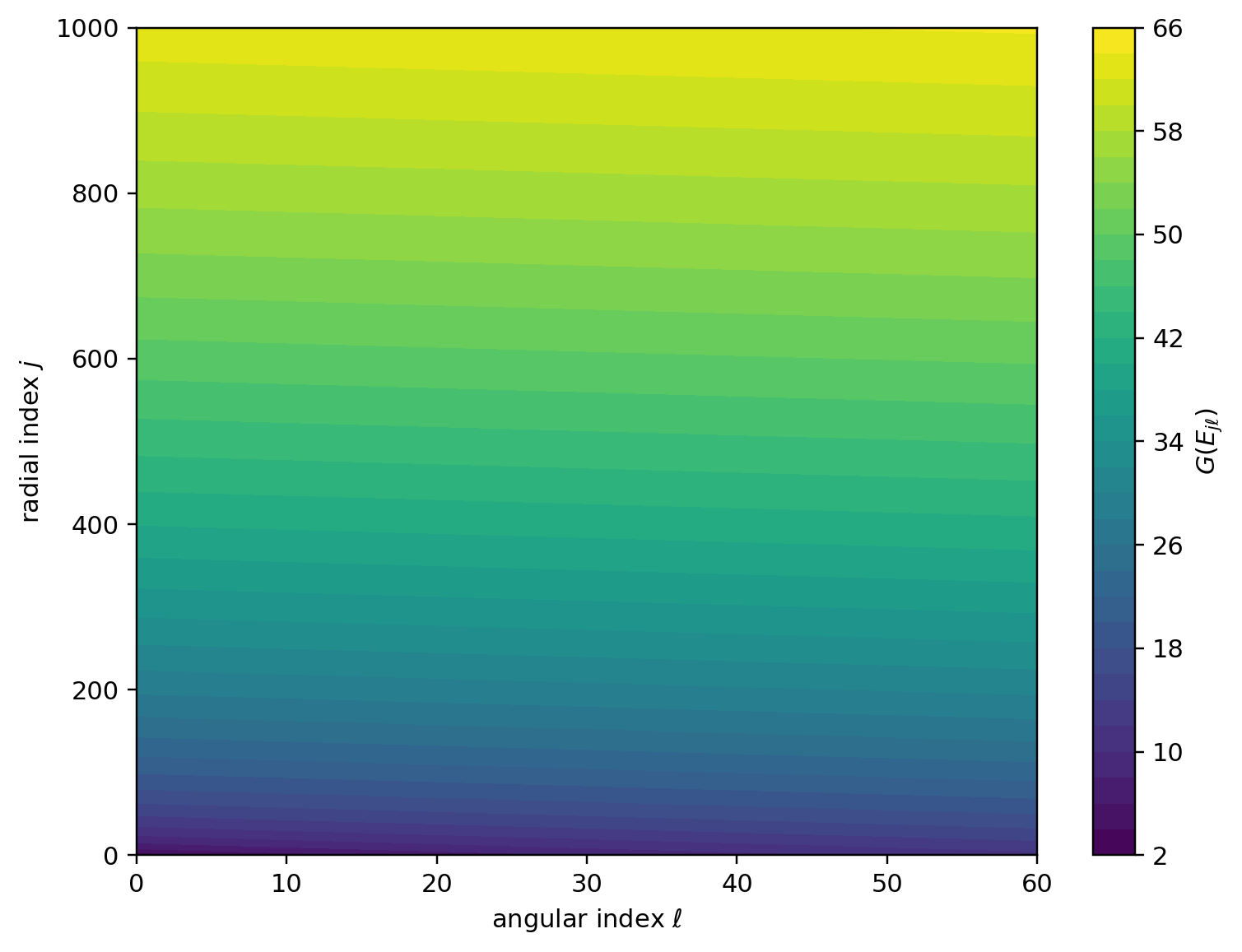}
\caption{Collision-frequency scale at the center of the frozen local band,
$G(E_{j\ell})$ with $E_{j\ell}=2j+\beta_\ell$.  The surface is a scalar
diagnostic of the exact loss function evaluated at the local Jacobi energy
center; it is not the diagonal matrix element of $G(J_\ell)$.}
\label{fig:phase_center_scale}
\end{figure}

\begin{figure}[p]
\centering
\includegraphics[width=0.80\textwidth]{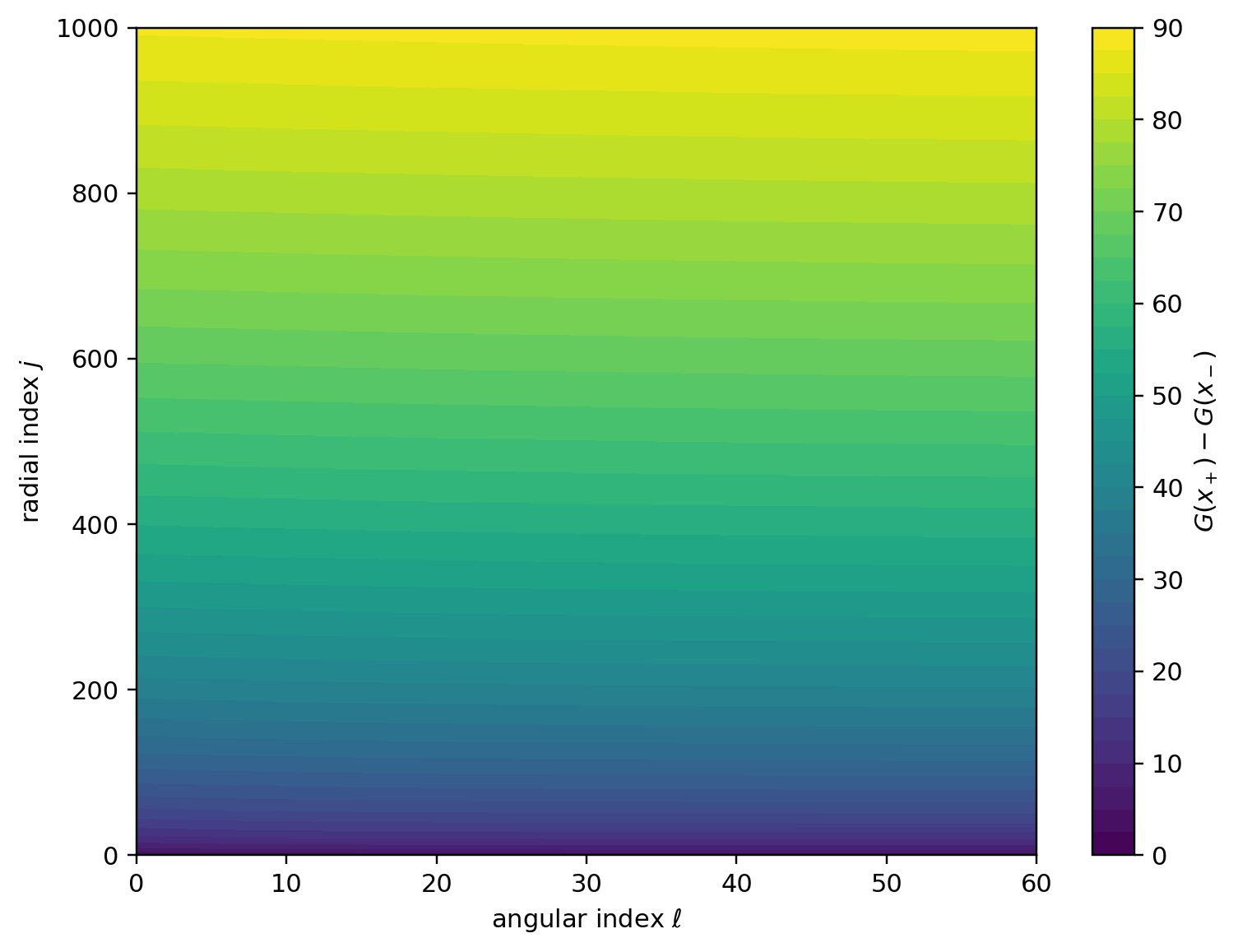}
\caption{Collision-frequency bandwidth of the frozen symbol,
$G(x_+)-G(x_-)$.  Broad bands characterize the radial/high-energy side of the
phase plane, while the bandwidth collapses toward the fixed-$j$, large-$\ell$
edge.}
\label{fig:phase_frequency_bandwidth}
\end{figure}

\begin{figure}[p]
\centering
\includegraphics[width=0.80\textwidth]{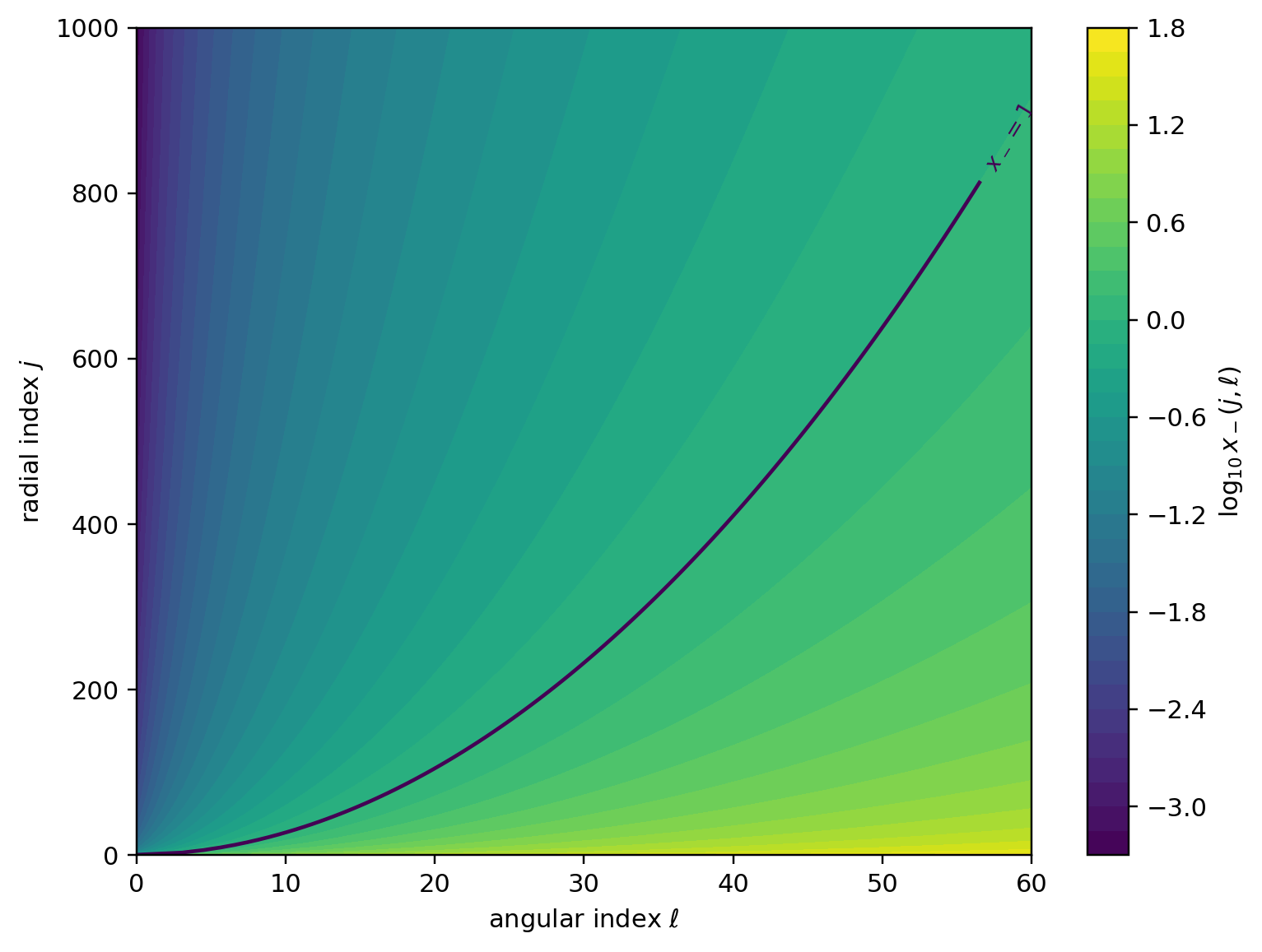}
\caption{Logarithm of the lower local energy edge,
$\log_{10}x_-(j,\ell)$.  The contour $x_-=1$ marks an order-one energy level
and bends toward $j\propto\ell^2$, illustrating the mixed scaling in
Eq.~\eqref{eq:xminus_mixed_scaling}.  The exact squeezed family realizes the same quadratic scale for fixed physical
energy through Eq.~\eqref{eq:continuum_j_corridor}; decreasing $x_0$ pushes
the corresponding continuum corridor progressively farther into the radial
tail.}
\label{fig:phase_xminus}
\end{figure}

\begin{figure}[p]
\centering
\includegraphics[width=0.80\textwidth]{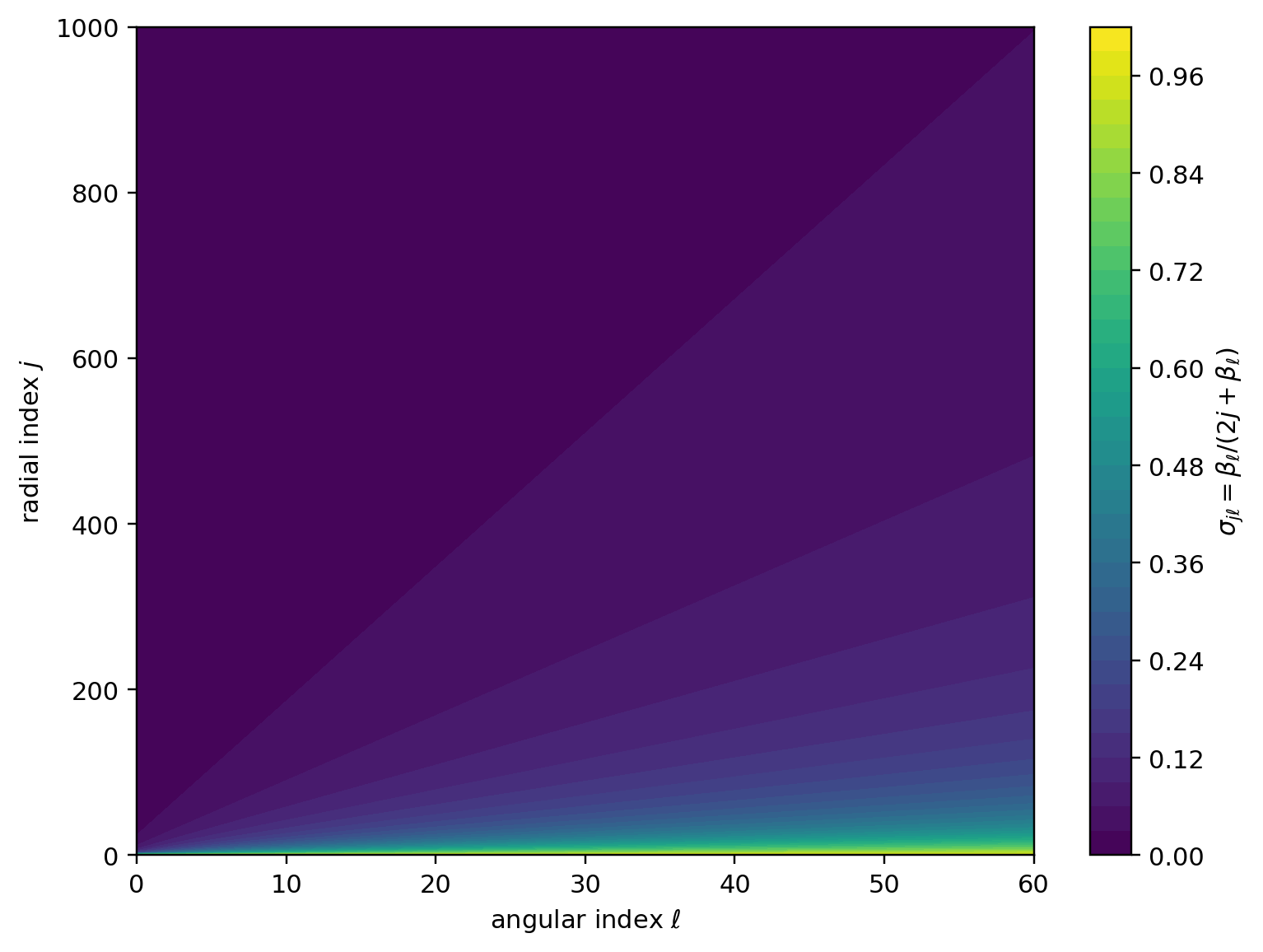}
\caption{Angular/radial balance parameter
$\sigma_{j\ell}=\beta_\ell/(2j+\beta_\ell)$.  The radial/high-energy boundary
has $\sigma\to0$, whereas the high-angular-momentum, fixed-depth boundary has
$\sigma\to1$.  Intermediate values parameterize the joint semiclassical
regime.}
\label{fig:phase_sigma}
\end{figure}

These diagnostics are not additional spectral theorems; their role is to
make the already derived joint symbol geometrically legible.  In particular,
Fig.~\ref{fig:phase_xminus} is now supported by the exact squeezed-state
scaling in Eqs.~\eqref{eq:continuum_j_corridor} and
\eqref{eq:threshold_j_exact}: the semiclassical lower edge and the explicit
radial construction identify the same continuum corridors and their limiting
threshold escape.

\clearpage
\section{Finite Sonine resolution versus global spectrum}
\label{sec:practical_regimes}

The global estimates raise an obvious interpretive question.  In the
$(\ell,j)$ picture, traditional finite Sonine calculations occupy a finite
lower-left window, whereas the continuum is represented by packets whose
coefficients can migrate arbitrarily far along the radial tower.  Yet for
ordinary zero-frequency transport the classical low-order success is
undeniable.  Equations~\eqref{eq:second_sonine_ratios} and
\eqref{eq:transport_infinite_benchmarks} show that the first Sonine values lie
only about $1.6\%$ below the infinite-order hard-sphere viscosity and about
$2.5\%$ below the thermal-conductivity benchmark; the two-state values reduce
those discrepancies to roughly $0.12\%$ and $0.24\%$.  Nothing in the present
global analysis contradicts that efficiency.

The distinction is between accurate approximation of selected low-order
responses and uniform resolution of the operator.  A fixed Sonine section can
converge rapidly for a matrix element such as
$\langle e|\mathfrak A_\ell^{-1}|e\rangle$ while still missing the Weyl
sequences and the mechanism producing the essential continuum, because those
states escape every fixed radial window.  The large-index results should
therefore be read as statements about what no fixed radial closure can contain,
not as a prescription to use many moments in ordinary continuum calculations.

When the collision operator is embedded in an inhomogeneous kinetic problem,
large radial or angular sectors can become relevant at kinetic spatial or
temporal scales, in Knudsen layers, and for strongly non-Maxwellian states
\cite{GradRarefied1949,ChapmanCowling1970}.  The present paper does not derive a
truncation-error estimate for those regimes.  It only identifies the global
radial structures that a uniformly valid representation would have to resolve.
In particular, large $j$ should not be equated naively with large Mach number:
if the reference Maxwellian follows the local velocity and temperature, a bulk
displacement or uniform temperature change need not require large polynomial
degree.

\section{Discussion}
\label{sec:discussion}

The exact radial construction, its low-order validation, and the joint
$(\ell,j)$ synthesis provide complementary views of the hard-sphere operator.  In that setting, the three Fock-space kinetic problems considered
in this series illustrate three different uses of the same representation
philosophy.  In the LFH model
\cite{KarlinLFH2026}, the oscillator grading organizes a differential
relaxation operator while intertwining controls the moving hydrodynamic
realization.  For nonlinear Maxwell molecules \cite{KarlinMaxwell2026},
lift--fusion intertwining produces an exactly graded bilinear collision
vertex.  The present hard-sphere problem is less algebraically soluble:
grading is lost, but exact radialization and the loss--gain functional
calculus remain.

For Maxwell molecules, Fock-space reformulation is rewarded by exact
oscillator grading.  Hard spheres remove that algebraic simplification:
the entire function $F$ in \eqref{eq:Fdef} generates infinitely many radial
degrees and the collision operator mixes the radial Fock index.

The present calculation shows that loss of grading does not make the Fock reformulation
empty.  The payoff changes from algebraic grading to analytic radial
structure.  The canonical oscillator representation contains a rotationally
adapted $\mathfrak{su}(1,1)$ pair algebra; after angular reduction, the whole
three-dimensional collision problem becomes one exact radial operator per
$\ell$.  Its noncompact part is an explicit function of the Jacobi generator,
and an exact unitary intertwiner relates the radial Bargmann and physical
Burnett coordinates.

The global analysis makes the elementary Boltzmann loss--gain structure the
central spectral guide.  The gain is the difficult piece to construct
exactly, because it contains the collision redistribution geometry, but it is
compact.  The loss is analytically simple in the physical radial coordinate,
yet it is noncompact and controls the whole essential continuum, including
its threshold, together with the high-$\ell$ relative diagonalization and the
high-$j$ Toeplitz tail.  The gain is not
irrelevant: it supplies the compact corrections responsible for quantitative
transport shifts and for isolated discrete levels below the loss continuum.
The technically difficult part and the globally noncompact part are therefore
not the same part of the Boltzmann operator.

This operator-first viewpoint clarifies the precise limitation of fixed Sonine
approximations.  They can be extremely accurate for low-order hydrodynamic
responses, as the transport sequences demonstrate, but no fixed radial depth
can represent the continuum-forming states.  The squeezed $\mathfrak{su}(1,1)$ family
makes this failure constructive rather than qualitative.  For any prescribed
$x_0>0$, normalized packets reach the essential-spectrum value
$\kappa\pi G(x_0)$ along the quadratic corridor
$j\sim\beta_\ell^2/(4x_0)$; convergence to the threshold itself requires still
deeper radial escape.  At fixed $\ell$, the same squeezed family becomes a
threshold Weyl sequence, and its asymptotic phase $(-1)^j$ is precisely the
$\theta=\pi$ zero of the universal Toeplitz tail.  The large-$\ell$, large-$j$,
and exact squeezed constructions therefore describe different views of one
continuum-forming mechanism.

Several extensions remain natural but are not required for the present
linear analysis.  A quantitative estimate
\begin{equation}
 \|\widehat{\mathcal C}_\ell\|\le C\ell^{-\alpha}
\end{equation}
would immediately sharpen \eqref{eq:bottom_bounds_soft} into a quantitative
approach rate to the essential threshold.  A Birman--Schwinger analysis of
the exact radial kernel may also provide a representation-level explanation
of the sharp critical angular momentum found by Klaus.  Finally, the same
pair algebra and angular decomposition are natural candidates for organizing
the nonlinear hard-sphere collision vertex, where two radial modules must be
coupled rather than one radial tower propagated.

The main conclusion is therefore both structural and practical.  The
Bargmann--Fock representation is useful for hard spheres not because it
supplies another polynomial basis, but because it retains the complete radial
operator before truncation.  That exact operator explains why low-order
Burnett/Sonine approximations can converge rapidly for hydrodynamic transport
while still missing the continuum-forming radial mechanisms required by
global spectral and kinetic-scale questions.

\appendix

\section{Special-function identities}

For convenience we collect the identities used in the radial realization;
standard special-function conventions follow \cite{NIST2010}.
The definition of ${}_0F_1$ and the Pochhammer symbol was given in
Eqs.~\eqref{eq:0F1_definition_main} and \eqref{eq:pochhammer_def}; we record
here the special-function identities used later.
Let $I_\nu$ denote the modified Bessel function of the first kind and
$J_\nu$ the Bessel function of the first kind.  Then the modified-Bessel
representation is
\begin{equation}
 {}_0F_1(; \nu+1;z)
 =
 \Gamma(\nu+1)
 z^{-\nu/2}I_\nu(2\sqrt z),
\end{equation}
while for negative argument
\begin{equation}
 {}_0F_1\!\left(;\nu+1;-\frac{x^2}{4}\right)
 =
 \Gamma(\nu+1)
 \left(\frac2x\right)^\nu
 J_\nu(x).
\end{equation}
In three dimensions
\begin{equation}
 \beta_\ell=\ell+\frac32,
 \qquad
 \nu=\ell+\frac12.
\end{equation}
For $\ell=0$,
\begin{equation}
 {}_0F_1(;3/2;z)
 =
 \frac{\sinh(2\sqrt z)}{2\sqrt z}.
\end{equation}

The differential equation
\begin{equation}
 z y''+b y'-y=0
\end{equation}
implies the radial lowering identity
\begin{equation}
 \left(
 \xi\partial_\xi^2+\beta_\ell\partial_\xi
 \right)
 {}_0F_1(; \beta_\ell;x\xi)
 =
 x\,{}_0F_1(; \beta_\ell;x\xi).
\end{equation}

For $K_\nu$, the modified Bessel function of the second kind already used
in the radial measure, the Mellin integral
\begin{equation}
 \int_0^\infty
 t^{\mu-1}K_\nu(t)\dd t
 =
 2^{\mu-2}
 \Gamma\!\left(\frac{\mu-\nu}{2}\right)
 \Gamma\!\left(\frac{\mu+\nu}{2}\right)
\end{equation}
gives the moment identity \eqref{eq:measure_moments}.

Finally, the Laguerre generating identity used in
\eqref{eq:Bkernel} is
\begin{equation}
 \sum_{j=0}^\infty
 \frac{(-\xi)^j}{(\beta)_j}
 L_j^{(\beta-1)}(x)
 =
 \e^{-\xi}{}_0F_1(; \beta;x\xi).
\end{equation}

\section{A useful operator-domain formulation}
\label{app:operator_domain}

Because $G(x)\sim\sqrt{2x}$, the loss operator is unbounded.  The rigorous
definition of
\begin{equation}
 G(\mathcal J_\ell)
\end{equation}
is the spectral functional calculus of the self-adjoint radial-energy operator
$\mathcal J_\ell$, unitarily equivalent to the Jacobi matrix $J_\ell$; it is
not formal substitution of a differential operator into a Taylor series.  The latter is valid on suitable analytic vectors and is useful
for algebraic calculations, but the spectral definition is primary.

The natural domain is
\begin{equation}
 \Dom(G(\mathcal J_\ell))
 =
 \left\{
 f:
 \int_0^\infty
 |G(x)|^2
 |(\mathcal U_\ell f)(x)|^2
 \dd\rho_\ell(x)
 <\infty
 \right\}.
\end{equation}
Since the gain part is bounded and compact in the standard hard-sphere
Hilbert-space realization, the full radial operator has the same domain.

\section*{Acknowledgement of AI assistance}
During the development and preparation of this manuscript, the author used
ChatGPT (OpenAI, GPT-5.6 Sol) as an interactive research and writing assistant.
Its use included discussion and critical examination of mathematical
formulations, cross-checking of derivations, development of the presentation
and organization of the manuscript, identification of notational and
expository inconsistencies, and assistance with editorial revision and
bibliographic verification.  All mathematical results, arguments,
interpretations, references, and final text were reviewed and accepted by the
author, who assumes full responsibility for the content of the manuscript.

\end{document}